\newif\iflong


\newif\ifhidecomments

\let\accentvec\vec
\documentclass[10pt]{article}

\let\vec\accentvec
\usepackage{fullpage}
\usepackage{ifthen}
\usepackage{xspace}
\usepackage{amsmath, amssymb, amsfonts}
\usepackage{amsthm}
\usepackage[english]{babel}
\usepackage{verbatim}
\usepackage{subfig}

\allowdisplaybreaks

\theoremstyle{plain}
\newtheorem{theorem}{Theorem}

\newtheorem{corollary}[theorem]{Corollary}
\newtheorem{proposition}[theorem]{Proposition}
\newtheorem{lemma}[theorem]{Lemma}

\theoremstyle{definition}
\newtheorem{definition}{Definition}
\newtheorem{example}{Example}

\theoremstyle{remark}
\newtheorem{remark}{Remark}




\providecommand{\SET}[1]{\ensuremath{\left\{ #1 \right\}}\xspace}
\providecommand{\Set}[2]{\ensuremath{\SET{#1 : #2}}\xspace}

\providecommand{\Expect}[2][]{\ensuremath{%
\ifthenelse{\equal{#1}{}}{\mathbf{E}}{\mathbf{E}_{#1}}%
\left[#2\right]}\xspace}

\hyphenation{pre-di-cates}
\hyphenation{pre-di-cate}

\newcommand{\altvec}{\ensuremath{\mathbf{\alpha}}\xspace}
\newcommand{\PoA}{\ensuremath{\text{PoA}}\xspace}
\newcommand{\PoS}{\ensuremath{\text{PoS}}\xspace}
\newcommand{\RPoA}{\ensuremath{\text{RPoA}}\xspace}
\newcommand{\pne}{\ensuremath{\text{PNE}}\xspace}
\newcommand{\mne}{\ensuremath{\text{MNE}}\xspace}
\newcommand{\ce}{\ensuremath{\text{CE}}\xspace}
\newcommand{\cce}{\ensuremath{\text{CCE}}\xspace}
\newcommand{\maxbeta}{\ensuremath{\hat\alpha}}
\newcommand{\minbeta}{\ensuremath{\check\alpha}}
\renewcommand{\varnothing}{\ensuremath{\emptyset}}

\setlength{\marginparwidth}{.7in}
\setlength{\marginparsep}{5pt}


\newcommand{\Omit}[1]{}
\renewcommand{\enspace}{}

\title{The Robust Price of Anarchy of Altruistic Games}

\author{%
  Po-An Chen\thanks{%
    Department of Computer Science, University of Southern California, USA.
    Work done in part while visiting CWI Amsterdam.
    Email: poanchen@usc.edu.
  }
  \and
  Bart de Keijzer\thanks{%
    Algorithms, Combinatorics and Optimization, CWI Amsterdam, The Netherlands.
    Email: b.de.keijzer@cwi.nl.
  }
  \and
  David Kempe\thanks{%
    Department of Computer Science, University of Southern California, USA.
    Email: dkempe@usc.edu.
  }
  \and
  Guido Sch\"{a}fer\thanks{%
    Algorithms, Combinatorics and Optimization, CWI Amsterdam and
    Department for Econometrics and Operations Research, VU University Amsterdam, The Netherlands.
    Email: g.schaefer@cwi.nl.
  }
}
\date{}


\begin{document}
\sloppy
\maketitle
\begin{abstract}
We study the inefficiency of equilibria for various classes of games
when players are (partially) altruistic. We model altruistic behavior
by assuming that player $i$'s perceived cost is a convex combination
of $1-\alpha_i$ times his direct cost and $\alpha_i$ times the social
cost. Tuning the parameters $\alpha_i$ allows
smooth interpolation between purely selfish and purely altruistic
behavior.
Within this framework, we study altruistic extensions of linear
congestion games, fair cost-sharing games and valid utility games.

We derive (tight) bounds on the price of anarchy of these games for
several solution concepts. Thereto, we suitably adapt the
\emph{smoothness} notion introduced by Roughgarden
and show that it captures the essential properties to determine
the \emph{robust price of anarchy} of these games. Our bounds show that
for congestion games and cost-sharing games,
the worst-case robust price of anarchy increases with increasing altruism,
while for valid utility games, it remains constant and is not affected by altruism.
However, the increase in the price of anarchy is not a universal phenomenon: for
symmetric singleton linear congestion games, we derive a bound on the
pure price of anarchy that decreases as the level
of altruism increases. Since the bound is also strictly lower than
the robust price of anarchy, it exhibits a natural example in which
Nash equilibria are more efficient than more permissive notions of equilibrium.
\end{abstract}

\thispagestyle{empty}
\newpage
\pagestyle{plain}
\setcounter{page}{1}

\section{Introduction}
Many large-scale decentralized systems, such as infrastructure
investments or traffic on roads or computer networks,
bring together large numbers of individuals with different and
oftentimes competing objectives.
When these individuals choose actions to benefit themselves, the
result is frequently suboptimal for society as a whole.
This basic insight has led to a study of such systems from the
viewpoint of game theory, focusing on the inefficiency of stable
outcomes. Traditionally, ``stable outcomes'' have been associated with
pure Nash equilibria of the corresponding game. The notions of
\emph{price of anarchy} \cite{koutsoupias2} and \emph{price of
  stability} \cite{anshelevich} provide natural measures of the system
degradation, by capturing the
degradation of the worst and best Nash equilibria, respectively, compared
to the socially optimal outcome.

However, the predictive power of such bounds has been questioned on
(at least) two grounds:
\begin{enumerate}\itemsep0pt
\item The adoption of Nash equilibria as a prescriptive solution
concept implicitly assumes that players are able to reach such
equilibria. In particular in light of several known hardness
results for finding Nash equilibria, this assumption is very suspect
for computationally bounded players.
In response, recent work has begun analyzing the outcomes of natural
response dynamics \cite{blum:even-dar,blum,roughgarden}, as well as
more permissive solution concepts such as correlated or coarse
correlated equilibria \cite{aumann,hannan,roughgarden3}.
This general direction of inquiry has become known as ``robust price
of anarchy''.
\item The assumption that players seek only to maximize
their own utility is at odds with altruistic behavior
routinely observed in the real world.
While modeling human incentives and behavior accurately is a
formidable task, several papers have proposed natural models of
altruism \cite{ledyard,levine} and analyzed its impact on the outcomes
of games \cite{caragiannis,chen:david,chenkempe,networkdesign}.
\end{enumerate}

The goal of this paper is to begin a thorough investigation of the
effects of relaxing both of the standard assumptions
simultaneously, i.e., considering the combination of weaker
solution concepts and notions of partially altruistic behavior by players.
In Section \ref{sec:preliminaries}, we formally define the
\emph{altruistic extension} of
an $n$-player game in the spirit of past work on altruism
(see \cite[p.~154]{ledyard} and \cite{skopalik,chen:david}):
player $i$ has an associated altruism parameter $\alpha_i$,
and player $i$'s cost (or payoff) is a convex combination of $(1-\alpha_i)$ times his direct cost (or payoff)
and $\alpha_i$ times the social cost (or social welfare). By tuning the
parameters $\alpha_i$, this model allows smooth interpolation
between pure selfishness ($\alpha_i = 0$) and pure altruism ($\alpha_i=1$).

In order to analyze the degradation of system performance in light
of partially altruistic behavior,
we extend the notion of \emph{robust price of anarchy}
\cite{roughgarden} to altruistic extensions, and show that a suitably
adapted notion of \emph{smoothness} \cite{roughgarden} captures the
properties of a system that determine its robust price of anarchy.

We use this framework to analyze three classes of games:
\begin{enumerate}\itemsep0pt
\item In a \emph{cost-sharing game} \cite{anshelevich},
players choose subsets of resources, and all players choosing the
same resource share its cost evenly.
Thus, cost-sharing games model scenarios in which individual
players have an interest in building infrastructure, and can
share the cost of infrastructure that benefits several of the players.
Using our framework, we derive a bound of $n/(1-\maxbeta)$ on the
robust price of anarchy of these games, where $\maxbeta$ is the
maximum altruism level of a player. This bound is tight
for uniformly altruistic players.

\item In \emph{utility games} \cite{vetta}, players
choose subsets of resources and derive utility of the chosen set.
The total welfare is determined by a submodular function of the union
of all chosen sets.
Utility games thus model scenarios in which different players
build infrastructure with different objectives, and the lack of
coordination may be societally suboptimal.
We derive a bound of 2 on the robust price of anarchy of these
games. In particular, the bound remains at 2 regardless of the
(possibly different) altruism levels of the players.
This bound is tight.

\item We revisit and extend the analysis of
\emph{atomic congestion games} \cite{roughgarden4}, in which
players choose subsets of resources whose costs increase (linearly)
with the number of players using them.
Thus, they are natural models of traffic on roads or in
computer networks as well as scheduling on machines,
where selfish choices can lead to overcongestion of
resources which would be much faster if used in moderation.
Caragiannis et al.~\cite{caragiannis} recently derived a tight bound of
$(5+4\alpha)/(2+\alpha)$ on the pure price of anarchy when all
players have the \emph{same} altruism level $\alpha$.\footnote{The altruism
    model of \cite{caragiannis} differs from ours in a slight
    technicality discussed in Section \ref{sec:preliminaries} (Remark~\ref{rem:rel-Car}).
    Therefore, various bounds we cite here are stated differently in
    \cite{caragiannis}.}
  Our framework makes it an easy observation that their proof in fact
  bounds the robust price of anarchy. We generalize their bound to the
  case when different players have different altruism levels,
  obtaining a bound in terms of the maximum and
  minimum altruism levels. This partially answers
  an open question from \cite{caragiannis}.
  For the special case of symmetric singleton congestion games (which
  corresponds to selfish scheduling on machines), we extend our study
  of non-uniform altruism and obtain an improved bound of
  $(4-2\alpha)/(3-\alpha)$ on the price of anarchy when an $\alpha$-fraction of the
  players are entirely altruistic and the remaining players are entirely selfish.
\end{enumerate}

Notice that many of these bounds on the robust price of anarchy reveal a counter-intuitive trend:
at best, for utility games, the bound is independent of the
level of altruism, and for congestion games and cost-sharing
games, it actually \emph{increases} in the altruism level,
unboundedly so for cost-sharing games.
Intuitively, this phenomenon is explained by the fact that a change of
strategy by player $i$ may affect many players.
An altruistic player will care more about these other players than a
selfish player; hence, an altruistic player accepts more states as
``stable''. This suggests that the best stable solution can also be
chosen from a larger set, and the price of
stability should thus decrease. Our results on the price of stability
lend support to this intuition: for congestion games, we derive an
upper bound on the
price of stability which decreases as $2/(1+\alpha)$; similarly, for
cost-sharing games, we establish an upper bound which decreases as
$(1-\alpha) H_n + \alpha$.

The increase in the price of anarchy is not a universal
phenomenon, demonstrated by \emph{symmetric singleton} congestion games.
Caragiannis et al.~\cite{caragiannis} showed a bound of
$4/(3+\alpha)$ for pure Nash equilibria with uniformly altruistic
players, which decreases with the altruism level $\alpha$.
Our bound of $(4-2\alpha)/(3-\alpha)$ for mixtures of entirely altruistic and selfish players is also decreasing in the fraction of entirely altruistic players.
We also extend an example of L\"{u}cking et al.~\cite{luecking} to
show that symmetric singleton congestion games may have a mixed price
of anarchy arbitrarily close to 2 for arbitrary altruism levels. In
light of the above bounds, this establishes that pure Nash equilibria
can result in strictly lower price of anarchy than weaker solution
concepts.

\Omit{\subsection{Contributions}
The contributions we make in this paper are the following:
\begin{itemize}
 \item We extend the $(\lambda,\mu)$-smoothness technique of Roughgarden to an altruistic setting, providing a tool for obtaining robust price of anarchy bounds in altruistic extensions of classes of games.
 \item We analyze the (pure) price of anarchy for altruistic extensions of fair cost-sharing games, (atomic) linear congestion games, (atomic) symmetric linear singleton congestion games, and valid utility games.
 \begin{itemize}
  \item For all of these classes of games, our bounds are tight for the case of uniform altruism.
  \item For all four classes of games, we manage to obtain upper bounds in terms of the largest and smallest altruism level among the players.
  \item For linear symmetric singleton congestion games, we obtain upper bounds on the pure price of anarchy for arbitrary altruism distributions.
  \item For the case of valid utility games, we provide a bound that is tight for arbitrary altruism distributions.
 \end{itemize}
 \item Because many of our upper bounds are obtained by making use of $(\lambda,\mu,\altvec)$-smoothness, these bounds automatically extend to the price of anarchy for more general solution concepts, such as the mixed price of anarchy, the correlated price of anarchy, and the coarse price of anarchy.
 \item An interesting observation of the results just mentioned is that the price of anarchy does not necessarily improve as the amount of altruism increases. In fact, the price of anarchy gets worse in case of fair cost-sharing games, and linear congestion games. However, our results for linear symmetric singleton congestion games shows that there are cases for which the price of anarchy actually does improve. For valid utility games, the price of anarchy turns out to be completely independent of the altruism distribution.
 \item For the cases of fair cost-sharing games and linear congestion games with uniform altruism, we additionally provide bounds on the pure price of stability.
 \item Finally, we also prove convexity results about $(\lambda,\mu,\altvec)$-smoothness and the robust price of anarchy. These results give us information for general classes of games about the behavior of the price of anarchy as a function of the altruism levels of the players.
\end{itemize}

\bigskip\noindent}

\bigskip\noindent
\textbf{Related Work.~~}
Much of our analysis is based on extensions of the notion of
\emph{smoothness} as proposed by Roughgarden \cite{roughgarden} (see
Section~\ref{sec:smoothness}). The basic idea is to bound the sum of
cost increases of individual players switching strategies by a
combination of the costs of two states.
Because these types of bounds capture local improvement
dynamics, they bound the price of anarchy not only for Nash equilibria, but also
more general solution concepts, including coarse correlated equilibria.
The smoothness notion was recently refined in the \emph{local
  smoothness} framework by Roughgarden and Schoppmann \cite{roughgarden3}.
They require the types of bounds described above only for nearby
states, thus obtaining tighter bounds, albeit only for more
restrictive solution concepts and convex
strategy sets. Using the local smoothness framework,
they obtained optimal upper bounds for atomic splittable congestion games.
Nadav and Roughgarden \cite{roughgarden2} showed that smoothness
bounds apply all the way to solution concepts called ``average
coarse correlated equilibrium,'' but not beyond.

A comparison between the costs in worst-case outcomes under solution
concepts of different generality was recently undertaken by Bradonjic
et al.~\cite{bradonjic} under the name ``price of mediation:''
specifically for the case of symmetric singleton congestion
games with convex latency functions, they showed that the ratio
between the most expensive correlated equilibrium and the most
expensive Nash equilibrium can grow exponentially in the number of
players.

Hayrapetyan et al.~\cite{hayrapetyan:tardos:wexler:collusion} studied
the impact of ``collusion'' in network congestion games, where players
form coalitions to minimize their collective cost. These coalitions
are assumed to be formed exogeneously, i.e., conceptually, each
coalition is replaced by a ``super-player'' that acts on behalf of its
members.
The authors show that collusion in network congestion games can lead
to Nash equilibria that are inferior to the ones of the collusion-free
game (in terms of social cost).
They also derive bounds on the the price of anarchy caused by collusion.
Note that the cooperation within each coalition can be interpreted as
a kind of ``locally'' altruistic behavior, i.e., each player only
cares about the cost of the members of his coalition.
In a sense, the setting considered in
\cite{hayrapetyan:tardos:wexler:collusion} can therefore be regarded
as being orthogonal to the viewpoint that we adopt in this paper: in
their setting, players are assumed to be entirely altruistic but
locally attached to their coalitions. In contrast, in our setting,
players may have different levels of altruism but locality does not
play a role.

Several recent studies investigate ``irrational'' player behavior in
games; examples include studies on malicious (or spiteful) behavior
\cite{babaioff:kleinberg:papadimitriou:malicious,BSS07,chenkempe,KV07}
and unpredictable (or Byzantine) behavior \cite{blum,
  MSW06,Roth08}. The work that is most related to our work in this
context is the one by Blum et al.~\cite{blum}. The authors consider
repeated games in which every player is assumed to minimize his own
regret. They derive bounds on the inefficiency, called \emph{total
  price of anarchy}, of the resulting outcomes for certain classes of
games, including congestion games and valid utility games. The
exhibited bounds exactly match the respective price of anarchy and
even continue to hold if only some of the players minimize their
regret while the others are Byzantine.
The latter result is surprising in
the context of valid utility games because it means that the
price of total anarchy remains at $2$, even if additional players
are added to the game that behave arbitrarily.
Our findings allow us to draw an even more dramatic conclusion.
Our bounds on the robust price of anarchy also extend to the total
price of anarchy of the respective repeated games
(see Section~\ref{sec:pre-res}). As a consequence, our
result for valid utility games implies that the price of total
anarchy would remain at 2, even if the ``Byzantine'' players
were to act altruistically.
That is, while the result in \cite{blum} suggests that
arbitrary behavior does not harm the inefficiency of the final
outcome, our result shows that altruistic behavior does not help.


If players' altruism levels are not uniform, then even the existence
of pure Nash equilibria is not obvious. Hoefer and Skopalik established it for
several subclasses of atomic congestion games \cite{skopalik}; for the
generalization of arbitrary player-specific cost functions, Milchtaich
\cite{milchtaich} showed existence for singleton congestion games, and
Ackermann et al.~\cite{ackermann} for matroid congestion games, in
which the strategy space of each player is the basis of a matroid on
the set of resources.

\bigskip\noindent
\textbf{Models of Altruism.~~}
Models of altruism either identical or very similar to the one in this
paper have been studied in several papers. Perhaps the first published
suggestion of a similar model is due to Ledyard \cite{ledyard}, but
since then, different variations of it have been studied more
extensively, e.g., \cite{caragiannis,chen:david,chenkempe,networkdesign}.
The main difference is that in some of these models,
linear combinations (rather than convex combinations) are considered,
e.g., with the selfish term having a factor of 1. For most of these
variations, a straightforward scaling of the coefficients shows
equivalence with the model we consider here.
The altruism model can be naturally extended
to include $\alpha_i < 0$, modeling spiteful behavior (see, e.g., \cite{chenkempe}).
While the modeling extension is natural, several results in this and
other papers do not continue to hold directly for negative $\alpha_i$.
Our model is strictly more general than some of the previous work in
that the social cost function need not be the sum of all players'
costs, but rather only needs to be bounded by the sum.

Besides models based on linear combinations of individual players'
costs (as well as social welfare), several other approaches have been
studied. Generally, altruism or other ``other-regarding'' social
behavior has received some attention in the behavioral economics
literature (e.g., \cite{gintis}).
Alternative models of altruism and spite have been proposed
by Levine \cite{levine}, Rabin \cite{rabin} and Geneakoplos et
al.~\cite{geanakoplos}.
These models are designed more with the goal of modeling the
psychological processes underlying spite or altruism (and
reciprocity): they involve players forming beliefs about other
players. As a result, they are well-suited for experimental work, but
perhaps not as directly suited for the type of analysis in this paper.

\section{Preliminaries} \label{sec:preliminaries}

Let $G = (N, \{\Sigma_i\}_{i \in N}, \{C_i\}_{i \in N})$ be a finite strategic game,
where $N = [n]$ is the set of players, $\Sigma_i$ the strategy space of player $i$, and
$C_i : \Sigma \rightarrow \mathbb{R}$ the cost function of player $i$,
mapping every strategy profile
$s \in \Sigma = \Sigma_1 \times \dots \times \Sigma_n$ to the player's
direct cost.
Unless stated otherwise, we assume that every player $i$ wants to minimize his individual cost function
$C_i$. We also call such games \emph{cost-minimization games}. A \emph{social cost} function
$C: \Sigma \rightarrow \mathbb{R}$ maps strategies to social costs.
We require that $C$ is \emph{sum-bounded}, that is, $C(s) \leq \sum_{i=1}^n C_i(s)$ for all $s \in \Sigma$.
We study \emph{altruistic extensions} of strategic
games equipped with sum-bounded social cost functions.
Our definition is based on one used (among others) in
\cite{chen:david}, and similar to ones given in \cite{caragiannis,chenkempe,ledyard}.

\begin{definition}[Altruistic extension] \label{def:alt-ext}
Let $\altvec \in [0,1]^n$.
The \emph{$\altvec$-altruistic extension} of $G$
(or simply \emph{$\altvec$-altruistic game})
is defined as the strategic game
$G^\altvec = (N, \{\Sigma_i\}_{i \in N}, \{C_i^{\altvec}\}_{i \in N})$, where
for every $i \in N$ and $s \in \Sigma$,
\begin{eqnarray*}
C_i^{\altvec}(s) & = & (1 - \alpha_i)C_i(s) + \alpha_i C(s).
\end{eqnarray*}

\end{definition}

Thus, the perceived cost that player $i$ experiences
is a convex combination of his direct (selfish) cost and the social cost; we
call such a player \emph{$\alpha_i$-altruistic}.\footnote{%
We note that the altruistic part of an individual's perceived cost
does not recursively take other players' \emph{perceived} cost into
account. Such recursive definitions of altruistic utility have been
studied, e.g., by Bergstrom \cite{bergstrom:benevolent}, and can be
reduced to our definition under suitable technical conditions.}
When $\alpha_i =0$, player $i$ is entirely selfish;
thus, $\altvec=\mathbf{0}$ recovers the original game.
A player with $\alpha_i = 1$ is entirely altruistic.
Given an altruism vector $\alpha \in [0,1]^n$, we let
$\maxbeta = \max_{i \in N} \alpha_i$ and $\minbeta = \min_{i \in N} \alpha_i$
denote the maximum and minimum altruism levels, respectively.
When $\alpha_i = \alpha$ (a scalar) for all $i$, we
call such games \emph{uniformly $\alpha$-altruistic games}.

\begin{remark}\label{rem:rel-Car}
In a recent paper, Caragiannis et al.~\cite{caragiannis} model
uniformly altruistic players by defining the
perceived cost of player $i$ as $(1-\xi) C_i(s) + \xi (C(s)-C_i(s))$,
where $\xi \in [0, 1]$. It is not hard to see that in the range
  $\xi \in [0, \frac12]$ this definition is equivalent to ours
by setting $\alpha = \xi/(1-\xi)$ or
$\xi = \alpha/(1+\alpha)$.\footnote{The model of \cite{caragiannis} with
  $\xi \in (\frac12, 1]$ has players assign strictly more weight to
  others than to themselves, a possibility not present in our model
  since we consider altruism to be caring about others' costs at most
  as much as about one's own cost.}
\end{remark}

The altruistic extension of a \emph{payoff-maximization game}, in
which players seek to maximize their payoff functions $\{\Pi_i\}_{i
\in N}$, with a social welfare function $\Pi$ is defined analogously
to Definition~\ref{def:alt-ext}; the only difference is that
every player $i$ wants to \emph{maximize} $\Pi_i^{\altvec}$ instead of
minimizing $C_i^{\altvec}$ here.

\subsection{Equilibrium Concepts} \label{sec:equilibrium}
We study the inefficiency of equilibria in altruistic extensions
of various games. The most general equilibrium concept that we will
deal with is the following one.

\begin{definition}[Coarse equilibrium]
A \emph{coarse equilibrium} (or \emph{coarse correlated equilibrium})
of a game $G$ is a probability distribution $\sigma$ over
$\Sigma = \Sigma_1 \times \cdots \times \Sigma_n$ with the following property:
if $s$ is a random variable with distribution $\sigma$, then for each
player $i$, and all $s_i^* \in \Sigma_i$:
\begin{eqnarray}\label{eq:coarseeq}
\Expect[s \sim \sigma]{C_i(s)}
& \leq & \Expect[s_{-i} \sim \sigma_{-i}] {C_i(s_i^*, s_{-i})},
\end{eqnarray}
where $\sigma_{-i}$ is the
projection of $\sigma$ on $\Sigma_{-i} = \Sigma_1 \times \cdots \times
\Sigma_{i-1} \times \Sigma_{i+1} \times \cdots \times \Sigma_n$.
\end{definition}

The set of all coarse equilibria is also known as the
\emph{Hannan Set} (see, e.g., \cite{young}).
It includes several other solution concepts,
such as correlated equilibria, mixed Nash equilibria and pure Nash equilibria.
We briefly review these equilibrium notions.

Informally, the difference between a coarse equilibrium and
a \emph{correlated equilibrium} is the following: in a coarse
equilibrium, it is required that a player ``adheres'' to $s$ when he
is informed of the distribution $\sigma$ from which $s$ is drawn. In
a correlated equilibrium, a player is only required to adhere to $s$
when he is informed of the distribution $\sigma$ as well as the
strategy that has been drawn for him, i.e., that he will play under $s$.

A \emph{mixed Nash equilibrium} is a coarse equilibrium whose
distribution $\sigma$ is the product of \emph{independent}
distributions $\sigma_1, \ldots, \sigma_n$ for the players.
Thus, any mixed Nash equilibrium is also a correlated equilibrium.
A \textit{pure Nash equilibrium} is a strategy
profile $s$ such that for each player $i$,
$C_i(s) \leq C_i(s_i^*, s_{-i})$ for all $s_i^* \in \Sigma_i$.
A pure Nash equilibrium is a special case of a
mixed Nash equilibrium where the support of $\sigma_i$ has cardinality
$1$ for all $i$.

We use $\pne(G)$, $\mne(G)$, $\ce(G)$, and $\cce(G)$,
to denote the set of pure Nash equilibria, mixed Nash
equilibria, correlated equilibria, and coarse equilibria of a game
$G$, respectively.

The \emph{price of anarchy} \cite{koutsoupias2} and
\emph{price of stability} \cite{anshelevich} are natural ways of
quantifying the inefficiency of equilibria for classes of games:

\begin{definition}[Price of anarchy, price of stability]\label{def:poa}
Let $S\subseteq \Sigma$ be a set of strategy profiles for a cost-minimization game
$G$ with social cost function $C$, and let $s^*$ be a strategy profile that
minimizes $C$.
We define
\begin{equation*}
\PoA(S,G) = \sup\left\{\frac{C(s)}{C(s^*)} : s \in S\right\}
\quad \text{and}\quad
\PoS(S,G) = \inf\left\{\frac{C(s)}{C(s^*)} : s \in S\right\}.
\end{equation*}
The \emph{coarse} (respectively \emph{correlated, mixed, pure}) \emph{price of anarchy} of
a class of games $\mathcal{G}$ is defined as
\begin{equation}\label{poadef}
\sup \{\PoA(S_G,G) : G \in \mathcal{G} \},
\end{equation}
 where $S_G = \cce(G)$ (respectively $\ce(G)$, $\mne(G)$, $\pne(G)$).
The \emph{coarse} (respectively \emph{correlated, mixed, pure}) \emph{price of stability} of a class of games is defined analogously, i.e., by replacing $\PoA$ by $\PoS$ in (\ref{poadef}).
\end{definition}

Notice that the price of anarchy and price of stability are defined with respect to the \emph{original} social cost function $C$, not accounting for the altruistic components.
This reflects our desire to understand the overall performance of the
system (or strategic game), which is not affected by different
\emph{perceptions} of costs by individuals.
Note, however, that if all players have a uniform altruism level
$\alpha_i = \alpha \in [0,1]$ and the social cost function $C$ is equal
to the sum of all players' individual costs, then for every strategy
profile $s \in \Sigma$,
$C^\alpha(s) = (1-\alpha+\alpha n) C(s)$, where $C^\alpha(s) = \sum_{i \in
  N} C_i^\alpha(s)$ denotes the sum of all players' perceived costs. In
particular, bounding the price of anarchy with respect to $C$ is
equivalent to bounding the price of anarchy with respect to total
perceived cost $C^\alpha$ in this case.

We extend Definition~\ref{def:poa} in the obvious way to payoff-maximization games $G$ with social welfare function $\Pi$ by considering the ratio $\Pi(s^*)/\Pi(s)$, where $s^*$ refers to a strategy profile maximizing $\Pi$.

\subsection{Smoothness} \label{sec:smoothness}
Many proofs bounding the price of anarchy for specific games (e.g.,
\cite{roughgarden4,vetta}) use the fact that deviating from an
equilibrium to the strategy at optimum is not beneficial for any
player. The addition of these inequalities, combined with suitable
properties of the social cost function, then gives a bound on the
equilibrium's cost. Roughgarden \cite{roughgarden} recently captured
the essence of this type of argument with his definition of
\emph{$(\lambda, \mu)$-smoothness} of a game, thus providing a generic
template for proving bounds on the price of anarchy.
Indeed, because such arguments only reason about local moves by
players, they immediately imply bounds not only for Nash equilibria,
but all classes of equilibria defined in Section~\ref{sec:equilibrium},
as well as the outcomes of no-regret sequences of play \cite{blum,blum:even-dar}.
Recent work has explored both the limits of this concept
\cite{roughgarden2} and a refinement requiring smoothness only in
local neighborhoods \cite{roughgarden3}. The latter permits more
fine-grained analysis of games, but applies only to correlated
equilibria and their subclasses.

In extending the definition of smoothness to altruistic games,
we have to exercise some care. Simply applying Roughgarden's
definition to the new game does not work, as the social cost function
we wish to bound is the sum of all direct costs \emph{without
  consideration of the altruistic component}. Thus, with respect to
the social cost, altruistic games are in general not sum-bounded.
For this reason, we propose a slightly revised definition of
\emph{$(\lambda,\mu,\altvec)$-smoothness}; the bulk of this paper
is devoted to showing that with the definition, many useful properties
of Roughgarden's definition are preserved.


For notational convenience, we define $C_{-i}(s) = C(s) - C_i(s) \le
\sum_{j \neq i} C_j(s)$. Note that when the social cost is the sum of
all players' costs, the inequality is an equality.

\begin{definition}[$(\lambda, \mu, \altvec)$-smoothness]\label{def:smoothness}
Let $G^\altvec$ be a $\altvec$-altruistic extension of a game with
sum-bounded social cost function $C$.
$G^\altvec$ is \emph{$(\lambda, \mu, \altvec)$-smooth} iff
for any two strategy profiles $s, s^* \in \Sigma$,
\begin{equation}
\sum_{i = 1}^n C_i(s_i^*, s_{-i}) + \alpha_i(C_{-i}(s_i^*, s_{-i}) - C_{-i}(s)) \leq \lambda C(s^*) + \mu C(s) \enspace . \label{eq:smoothness}
\end{equation}
\end{definition}
For $\altvec = \mathbf{0}$, this definition coincides with
Roughgarden's notion of $(\lambda, \mu)$-smoothness.
To gain some intuition, consider two strategy profiles
$s, s^* \in \Sigma$, and a player $i \in N$ who switches from his
strategy $s_i$ under $s$ to $s^*_{i}$, while the strategies of the
other players remain fixed at $s_{-i}$. The contribution of player $i$
to the left-hand side of \eqref{eq:smoothness} then accounts for the
individual cost that player $i$ perceives after the switch plus
$\alpha_i$ times the difference in social cost
caused by this switch exluding player $i$. The sum of these contributions needs to be
bounded by $\lambda C(s^*) + \mu C(s)$.
We will see that this definition of $(\lambda,\mu,\alpha)$-smoothness allows us to quantify the price of anarchy of some large classes of altruistic games with respect to the very broad class of coarse correlated equilibria.

\subsection{Preliminary Results}\label{sec:pre-res}

We first show that many of the results in \cite{roughgarden} following
from $(\lambda,\mu)$-smoothness carry over to our altruistic setting
using the extended $(\lambda,\mu,\altvec)$-smoothness notion
(Definition~\ref{def:smoothness}). Even though some care has to be
taken in extending these results, most of the proofs of the
propositions in this section follow along similar lines as their analogues in
\cite{roughgarden}.

\begin{proposition}\label{thm:coarsesmoothpoa}
Let $G^\altvec$ be a $\altvec$-altruistic game.
If $G^\altvec$ is $(\lambda, \mu, \altvec)$-smooth with $\mu < 1$,
then the coarse (and thus correlated, mixed, and pure) price of anarchy of
$G^\altvec$ is at most $\frac{\lambda}{1 - \mu}$.
\end{proposition}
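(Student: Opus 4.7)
The plan is to follow the template of Roughgarden's proof for $(\lambda,\mu)$-smoothness, but apply the coarse equilibrium condition with respect to the \emph{perceived} cost functions $C_i^\altvec$ rather than the direct costs $C_i$. The first step is to rewrite the altruistic cost in a form that telescopes nicely. Using that $C$ is sum-bounded with $C_{-i}(s) = C(s) - C_i(s)$, one obtains the clean identity
\begin{equation*}
C_i^\altvec(s) \;=\; (1-\alpha_i) C_i(s) + \alpha_i C(s) \;=\; C_i(s) + \alpha_i C_{-i}(s),
\end{equation*}
and similarly for the deviation $(s_i^*, s_{-i})$. This identity is what makes the smoothness inequality \eqref{eq:smoothness} match up exactly with a sum of per-player deviation inequalities.

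The second step is to sum the coarse equilibrium conditions. Let $\sigma$ be a coarse equilibrium of $G^\altvec$ and let $s^*$ minimize $C$. Applying \eqref{eq:coarseeq} to each $C_i^\altvec$, substituting the identity above on both sides, and summing over $i$, the terms $\alpha_i C_{-i}(s)$ on the left combine with those on the right to give
\begin{equation*}
\Expect[s \sim \sigma]{\sum_{i=1}^n C_i(s)} \;\leq\; \Expect[s \sim \sigma]{\sum_{i=1}^n \bigl( C_i(s_i^*, s_{-i}) + \alpha_i (C_{-i}(s_i^*, s_{-i}) - C_{-i}(s)) \bigr)}.
\end{equation*}
This is the exact expression appearing on the left-hand side of the $(\lambda,\mu,\altvec)$-smoothness inequality, which is precisely why the definition was crafted this way.

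The third step is to apply $(\lambda,\mu,\altvec)$-smoothness pointwise (for each realization of $s$) to bound the right-hand side by $\lambda C(s^*) + \mu \Expect[s \sim \sigma]{C(s)}$, and then to use sum-boundedness of $C$ to conclude $\Expect{C(s)} \leq \sum_i \Expect{C_i(s)}$. Chaining these gives
\begin{equation*}
\Expect[s \sim \sigma]{C(s)} \;\leq\; \lambda C(s^*) + \mu \Expect[s \sim \sigma]{C(s)},
\end{equation*}
and rearranging with $\mu < 1$ yields the bound $\Expect{C(s)}/C(s^*) \le \lambda/(1-\mu)$. Since this holds for every coarse equilibrium $\sigma$, the claim on the coarse price of anarchy follows, and the inclusions among solution concepts from Section~\ref{sec:equilibrium} give the statement for correlated, mixed, and pure equilibria.

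The only mildly delicate point, and the one requiring the altered definition in the first place, is that $G^\altvec$ viewed with social cost $C$ is \emph{not} sum-bounded with respect to the perceived costs $C_i^\altvec$; one must be careful to apply the equilibrium condition to $C_i^\altvec$ while bounding the \emph{original} social cost $C$. The identity $C_i^\altvec(s) = C_i(s) + \alpha_i C_{-i}(s)$ is what lets the altruistic term appear symmetrically on both sides of the summed inequality so that only the difference $C_{-i}(s_i^*, s_{-i}) - C_{-i}(s)$ survives, matching \eqref{eq:smoothness} exactly. Beyond this bookkeeping, the proof is routine.
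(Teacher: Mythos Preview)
Your proof is correct and follows essentially the same approach as the paper: both apply the coarse equilibrium condition for the perceived costs $C_i^\altvec$, rewrite via $C_i^\altvec(s) = C_i(s) + \alpha_i C_{-i}(s)$ so that only the difference $C_{-i}(s_i^*,s_{-i}) - C_{-i}(s)$ survives after summing, then invoke smoothness and sum-boundedness to obtain $\Expect{C(s)} \le \lambda C(s^*) + \mu \Expect{C(s)}$. Your write-up is slightly more explicit than the paper's about where sum-boundedness is used, but the argument is the same.
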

\begin{proof}
Let $\sigma$ be a coarse equilibrium of $G^\alpha$, $s$ a random
variable with distribution $\sigma$, and $s^* \in \Sigma$ an
arbitrary strategy profile. The coarse equilibrium condition implies
that for every player $i \in N$:
\begin{equation*}
\Expect{(1 - \alpha_i) C_i(s) + \alpha_i C(s)} \leq \Expect{(1 - \alpha_i) C_i(s_i^*, s_{-i}) + \alpha_i C(s_i^*, s_{-i})} \enspace.
\end{equation*}
By linearity of expectation, for every player $i \in N$:
\begin{equation*}
\Expect{C_i(s)} \leq \Expect{C_i(s_i^*, s_{-i}) + \alpha_i (C(s_i^*, s_{-i}) - C_i(s^*_i, s_{-i})) - \alpha_i(C(s) - C_i(s))} \enspace .
\end{equation*}
By summing over all players and using linearity of expectation, we obtain
\begin{equation*}
\Expect{C(s)} \leq \Expect{\sum_{i=1}^{n} C_i(s_i^*, s_{-i}) + \alpha_i (C_{-i}(s_i^*, s_{-i}) - C_{-i}(s))} \enspace .
\end{equation*}
Now we use the smoothness property (\ref{eq:smoothness}) to conclude
\begin{equation*}
\Expect{C(s)} \leq \Expect{\lambda C(s^*) + \mu C(s)} = \lambda C(s^*) + \mu \Expect{C(s)} \enspace .
\end{equation*}
Solving for $\Expect{C(s)}$ now proves the claim.
As coarse equilibria include correlated equilibria, mixed Nash
equilibria and pure Nash equilibria,
the correlated, mixed, and pure price of anarchy are thus also bounded
by $\frac{\lambda}{1 - \mu}$.
\end{proof}

As we show later, for many important classes of games,
the bounds obtained by $(\lambda, \mu, \altvec)$-smoothness arguments
are actually tight, even for pure Nash equilibria.
Therefore, as in \cite{roughgarden}, we define the
\emph{robust price of anarchy} as the best
possible bound on the coarse price of anarchy obtainable by a
$(\lambda, \mu, \altvec)$-smoothness argument.

\begin{definition}\label{def:rpoa}
The \emph{robust price of anarchy} of a $\alpha$-altruistic game $G^\altvec$ is defined as
\begin{equation*}
\RPoA_G(\altvec) = \inf \; \Set{\textstyle{\frac{\lambda}{1-\mu}}}{\text{$G^\alpha$ is $(\lambda, \mu, \altvec)$-smooth, $\mu < 1$}} \enspace .
\end{equation*}
For a class $\mathcal{G}$ of games, we define
$\RPoA_{\mathcal{G}}(\altvec) = \sup\Set{\RPoA_G(\altvec)}{G \in \mathcal{G}}$.
We omit the subscript when the game (or class of games) is clear from the context.
\end{definition}

The smoothness condition also proves useful in the context of no-regret
sequences and the \emph{price of total anarchy}, introduced by Blum et al.~\cite{blum}.


\begin{proposition}\label{prop:no-regret}
Let $s^*$ be a strategy profile minimizing the social cost function $C$
of an $\altvec$-altruistic game $G^\alpha$,
and $s^1, \dots, s^T$ a sequence of strategy profiles
in which every player $i \in N$ experiences vanishing average external regret, i.e.,
\begin{equation*}
\sum_{t=1}^T C^\alpha_i(s^t) \le \left(\min_{s'_i \in \Sigma_i} \sum_{t=1}^T C^\alpha_i(s'_i, s^t_{-i})\right) + o(T) \enspace.
\end{equation*}
The average cost of this sequence of $T$ strategy profiles then satisfies
\begin{equation*}
\frac{1}{T} \sum_{t=1}^T C(s^t) \le \RPoA(\alpha) \cdot C(s^*)
\quad \text{as } T \rightarrow \infty.
\end{equation*}
\end{proposition}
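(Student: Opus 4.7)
The plan is to mirror the proof of Proposition~\ref{thm:coarsesmoothpoa}, but replace the expectation over the coarse-equilibrium distribution $\sigma$ with the time-average $\frac{1}{T}\sum_{t=1}^T$ over the no-regret sequence. In essence, a vanishing-regret sequence behaves like an approximate coarse equilibrium on average, so the same smoothness-based deduction goes through modulo an additive $o(T)$ error that disappears after normalization.

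Concretely, the first step is to take any $(\lambda,\mu,\altvec)$-smoothness witness with $\mu < 1$ and, for each player $i$, instantiate the vanishing-regret hypothesis with the comparator strategy $s^*_i$, giving
\begin{equation*}
\sum_{t=1}^T C^\alpha_i(s^t) \le \sum_{t=1}^T C^\alpha_i(s^*_i, s^t_{-i}) + o(T).
\end{equation*}
Expanding $C^\alpha_i = (1-\alpha_i)C_i + \alpha_i C$ on both sides and using $C - C_i = C_{-i}$, this rearranges (exactly as in the proof of Proposition~\ref{thm:coarsesmoothpoa}) to
\begin{equation*}
\sum_{t=1}^T C_i(s^t) \le \sum_{t=1}^T \Bigl[ C_i(s^*_i, s^t_{-i}) + \alpha_i\bigl( C_{-i}(s^*_i, s^t_{-i}) - C_{-i}(s^t) \bigr) \Bigr] + o(T).
\end{equation*}

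The second step is to sum these inequalities over all players $i \in N$. The left-hand side becomes $\sum_{t=1}^T C(s^t)$, while the right-hand side is a sum of $T$ expressions, each of which is exactly the left-hand side of the smoothness condition~\eqref{eq:smoothness} evaluated at the pair $(s^t, s^*)$. Applying $(\lambda,\mu,\altvec)$-smoothness at each $t$ yields
\begin{equation*}
\sum_{t=1}^T C(s^t) \le \lambda T \cdot C(s^*) + \mu \sum_{t=1}^T C(s^t) + o(T).
\end{equation*}

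Since $\mu < 1$, the final step is to move the $\mu$-term to the left, divide by $T(1-\mu)$, and let $T \to \infty$ so that the $o(T)/T$ term vanishes; this gives $\frac{1}{T}\sum_{t=1}^T C(s^t) \le \frac{\lambda}{1-\mu} \cdot C(s^*)$ in the limit. Because this holds for every $(\lambda,\mu,\altvec)$-smoothness witness with $\mu<1$, taking the infimum on the right produces the $\RPoA(\altvec)$ bound as defined in Definition~\ref{def:rpoa}. There is no real obstacle here: the only point requiring care is the algebraic expansion in the first step, where one must correctly track that the $(1-\alpha_i)C_i$ and $\alpha_i C$ parts recombine into the $C_i + \alpha_i(C_{-i}(s^*_i,s_{-i})-C_{-i}(s))$ shape matching the left-hand side of~\eqref{eq:smoothness}, so that the smoothness condition can be applied verbatim at each time step.
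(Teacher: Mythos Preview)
Your proposal is correct and follows essentially the same route as the paper: expand $C^\alpha_i$, rearrange into the form $C_i(s^*_i,s^t_{-i}) + \alpha_i(C_{-i}(s^*_i,s^t_{-i}) - C_{-i}(s^t))$, sum over players, apply $(\lambda,\mu,\altvec)$-smoothness at each time step, and let the $o(T)$ regret term vanish after dividing by $T(1-\mu)$. The paper organizes the same computation via the auxiliary quantities $\delta^\alpha_i(s^t)$ and $\Delta(s^t)$, but the content is identical. One small wording caveat: after summing over players the left-hand side is $\sum_t \sum_i C_i(s^t)$, which in general only \emph{dominates} $\sum_t C(s^t)$ by sum-boundedness rather than equaling it; the inequality direction is preserved, so your argument still goes through unchanged.
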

\begin{proof}
Consider a sequence $s^1, \dots, s^T$ of strategy profiles of an
$\altvec$-altruistic game $G^{\altvec}$ that is $(\lambda, \mu,\altvec)$-smooth
with $\mu < 1$.
For every $i \in N$ and $t \in \{1, \dots, T\}$, define
\begin{equation*}
\delta^{\altvec}_i(s^t) = C^\alpha_i(s^t) - C^\alpha_i(s^*_i, s^t_{-i}) \enspace .
\end{equation*}
Let $\Delta(s^t) = \sum_{i = 1}^n \delta_i^{\altvec}(s^t)$.
We have
\begin{eqnarray*}
\Delta(s^t) & = & \sum_{i = 1}^n C^\alpha_i(s^t) - C^\alpha_i(s_i^*, s^t_{-i}) \\
& = & \sum_{i=1}^n \left( (1-\alpha_i) C_i(s^t) + \alpha_i C(s^t) - \left( (1-\alpha_i) C_i(s^*_i, s^t_{-i}) + \alpha_i C(s^*_i, s^t_{-i})\right) \right) \\
& = & C(s^t) - \sum_{i=1}^n \left( C_i(s^*_i, s^t_{-i}) + \alpha_i ( C_{-i}(s^*_i, s^t_{-i}) - C_{-i}(s^t) )\right) \enspace .
\end{eqnarray*}

Exploiting the $(\lambda, \mu,\altvec)$-smoothness property, we obtain
\begin{eqnarray}\label{eq:st-delta}
C(s^t) & \le & \textstyle\frac{\lambda}{1-\mu} C(s^*) + \frac{1}{1-\mu} \Delta(s^t) \enspace .
\end{eqnarray}
Suppose that $s^1, \dots, s^T$ is a sequence of strategy profiles in which every player experiences vanishing average external regret, i.e.,
\begin{equation*}
\sum_{t=1}^T C^\alpha_i(s^t) \le \left(\min_{s'_i \in \Sigma_i} \sum_{t=1}^T C^\alpha_i(s'_i, s^t_{-i})\right) + o(T) \enspace .
\end{equation*}
We obtain that for every player $i \in N$:
\begin{equation*}
\frac{1}{T} \sum_{t=1}^T \delta^\alpha_i(s^t) \le \frac{1}{T} \left(\sum_{t=1}^T C^\alpha_i(s^t) - \min_{s'_i \in \Sigma_i} \sum_{t = 1}^T C^\alpha_i(s'_i, s^t_{-i})\right) = o(1) \enspace .
\end{equation*}
Using this inequality and \eqref{eq:st-delta}, we obtain that the average cost of the sequence of $T$ strategy profiles is
\begin{equation*}
\frac{1}{T} \sum_{t=1}^T C(s^t) \leq \frac{\lambda}{1-\mu} C(s^*) + \frac{1}{1-\mu} \sum_{i = 1}^n \left( \frac{1}{T} \sum_{t=1}^T \delta^\alpha_i(s^t) \right) \overset{T \rightarrow \infty}{\longrightarrow}\frac{\lambda}{1-\mu} C(s^*) \enspace .
\end{equation*}
\end{proof}

Roughgarden \cite[Proposition 2.6]{roughgarden} shows that for games that
have an \textit{underestimating exact potential function}, best
response dynamics\footnote{Best response dynamics are a natural way of
  searching for a pure Nash equilibrium: if the current strategy
  profile is not a Nash equilibrium, then pick a player who can
  improve his cost and change his strategy to one that minimizes his
  cost.} converge rapidly to a strategy profile of social cost close
to the robust price of anarchy times  the optimum social cost of the
game; see \cite{roughgarden} for a precise statement of this result
and the accompanying definitions. Proposition 2.6 in
\cite{roughgarden} and its proof straightforwardly carry over to
$(\lambda,\mu,\altvec)$-smooth games that have such an underestimating
exact potential function.

The results in this section continue to hold for altruistic extensions
of payoff-maximization games if we adapt Definition
\ref{def:smoothness} as follows.
Let $G^\alpha$ be an $\alpha$-altruistic extension of a
payoff-maximization game with social welfare function $\Pi$.
Define $\Pi_{-i}(s) = \Pi(s) - \Pi_i(s)$. $G^\alpha$ is $(\lambda,\mu,\altvec)$-smooth iff for every two strategy profiles $s, s^* \in \Sigma$,
\begin{equation}
\sum_{i = 1}^n (\Pi_i(s_i^*, s_{-i}) + \alpha_i(\Pi_{-i}(s_i^*, s_{-i}) - \Pi_{-i}(s))) \geq \lambda \Pi(s^*) - \mu \Pi(s) \enspace. \label{eq:smooth-max}
\end{equation}
Given this smoothness definition, all the results above hold when we
replace $\frac{\lambda}{1 - \mu}$ by $\frac{1 + \mu}{\lambda}$ and
$\mu < 1$ by $\mu > -1$ in Definition \ref{def:rpoa}.

\section{Fair Cost-sharing Games}\label{sec:csg}

A \emph{fair cost-sharing game}
$G = (N, E, \{\Sigma_i\}_{i \in N}, \{c_e\}_{e \in E})$
is characterized by a set $E$ of \emph{resources} (or \emph{facilities}),
and strategy sets $\Sigma_i \subseteq 2^E$; that is,
players' strategies $s_i \in \Sigma_i$ are subsets of resources.
Given a strategy profile
$s \in \Sigma = \Sigma_1 \times \dots \times \Sigma_n$,
we define $x_e(s) = |\{ i \in N \; : \; e \in s_i\}|$ as the number of
players that use resource $e \in E$ under $s$.
Let $U(s)$ be the set of resources that are used under $s$, i.e., $U(s) = \bigcup_{i \in N} s_i$.
Each facility $e \in E$ has a cost $c_e$ which is evenly shared by all
players using $e$, i.e., the direct cost of player $i$ is $C_i(s) = \sum_{e \in s_i} c_e/x_e(s)$.
The social cost function is $C(s) = \sum_{i=1}^{n} C_i(s) = \sum_{e \in U(s)} c_e$.

It is well-known that the pure price of anarchy of fair cost-sharing games is $n$ \cite{nisan}.
We show that it can get significantly worse in the presence of
altruistic players: the following theorem gives a much worse
upper bound, which we subsequently show to be tight.
\begin{theorem}\label{thm:cspoauniform}
The robust price of anarchy of $\altvec$-altruistic cost-sharing games
is at most $\frac{n}{1-\maxbeta}$ (with $n/0 = \infty$).
\end{theorem}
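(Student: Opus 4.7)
By Proposition~\ref{thm:coarsesmoothpoa} and Definition~\ref{def:rpoa}, it suffices to exhibit $(\lambda,\mu,\altvec)$-smoothness with $\lambda/(1-\mu) = n/(1-\maxbeta)$. I will take $\lambda = n$ and $\mu = \maxbeta$ (assuming $\maxbeta<1$; when $\maxbeta=1$ the stated bound $n/0=\infty$ is vacuous). So for arbitrary $s,s^*\in\Sigma$ I must prove
\[
\sum_{i=1}^n\Bigl(C_i(s_i^*,s_{-i}) + \alpha_i\bigl(C_{-i}(s_i^*,s_{-i})-C_{-i}(s)\bigr)\Bigr) \;\le\; n\,C(s^*) + \maxbeta\,C(s).
\]
I will handle the ``selfish part'' and the ``altruistic part'' separately.

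\textbf{Selfish part.} I will use the textbook bound for fair cost sharing: since player $i$ is one of the users of each resource $e\in s_i^*$ under the profile $(s_i^*,s_{-i})$, we have $C_i(s_i^*,s_{-i}) = \sum_{e\in s_i^*} c_e/x_e(s_i^*,s_{-i}) \le \sum_{e\in s_i^*} c_e$. Summing over $i$ and swapping the order of summation gives $\sum_i C_i(s_i^*,s_{-i}) \le \sum_e c_e\,x_e(s^*) \le n\,C(s^*)$, since $x_e(s^*)\le n$ and $\sum_{e:x_e(s^*)\ge 1} c_e = C(s^*)$.

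\textbf{Altruistic part.} The core inequality is the per-player bound
\[
C_{-i}(s_i^*,s_{-i}) - C_{-i}(s) \;\le\; C_i(s).
\]
To prove this, I use the identity $C(s)=\sum_j C_j(s)$ specific to cost-sharing games to write $C_{-i}(s_i^*,s_{-i}) - C_{-i}(s) = \sum_{j\ne i}\bigl(C_j(s_i^*,s_{-i}) - C_j(s)\bigr)$, and then swap the sum into $\sum_e c_e\,x_e(s_{-i})\bigl(1/x_e(s_i^*,s_{-i}) - 1/x_e(s)\bigr)$, where $x_e(s_{-i})$ is the number of users of $e$ other than $i$. Resources in $s_i\cap s_i^*$ or outside $s_i\cup s_i^*$ contribute $0$. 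For $e\in s_i^*\setminus s_i$, the contribution is non-positive (player $i$ only lowers the other users' shares), so I drop these terms. For $e\in s_i\setminus s_i^*$ with $x_e(s)\ge 2$, the contribution evaluates to $c_e/x_e(s)$ (player $i$'s former share, which is exactly what the remaining users collectively absorb); when $x_e(s)=1$, the factor $x_e(s_{-i})=0$ makes the contribution vanish. Thus
\[
C_{-i}(s_i^*,s_{-i}) - C_{-i}(s) \;\le\; \sum_{e\in s_i\setminus s_i^*,\;x_e(s)\ge 2} \frac{c_e}{x_e(s)} \;\le\; \sum_{e\in s_i} \frac{c_e}{x_e(s)} \;=\; C_i(s).
\]
Multiplying by $\alpha_i\in[0,\maxbeta]$, summing, and using $\sum_i C_i(s)=C(s)$ yields $\sum_i \alpha_i\bigl(C_{-i}(s_i^*,s_{-i})-C_{-i}(s)\bigr) \le \maxbeta\,C(s)$.

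\textbf{Combining.} Adding the two parts gives the claimed $(n,\maxbeta,\altvec)$-smoothness, and Proposition~\ref{thm:coarsesmoothpoa} concludes the bound $n/(1-\maxbeta)$. The main obstacle is the per-player inequality $C_{-i}(s_i^*,s_{-i})-C_{-i}(s)\le C_i(s)$: the bookkeeping is easy once one carefully tracks how $i$'s deviation redistributes his former shares, but the key conceptual point is recognizing that negative (beneficial-for-others) contributions from resources $i$ newly adopts can be safely dropped, so the altruistic term only needs to be charged against $i$'s current cost, not against the social optimum.
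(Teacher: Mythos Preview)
Your proof is correct and, like the paper's, establishes $(n,\maxbeta,\altvec)$-smoothness. The difference is only in how the two parts are split and how the altruistic part is bounded.

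The paper first rewrites the $i$th summand of the smoothness condition as
\[
(1-\alpha_i)\,C_i(s_i^*,s_{-i}) \;+\; \alpha_i\bigl(C(s_i^*,s_{-i})-C(s)\bigr) \;+\; \alpha_i\,C_i(s),
\]
and then bounds the first two terms together using the set-cover form $C(s)=\sum_{e\in U(s)} c_e$, which gives $C(s_i^*,s_{-i})-C(s)\le \sum_{e\in s_i^*\setminus U(s)} c_e$ and hence $(1-\alpha_i)C_i(s_i^*,s_{-i})+\alpha_i(C(s_i^*,s_{-i})-C(s))\le C_i(s_i^*,s_{-i})$. The remaining term $\alpha_i C_i(s)$ sums to at most $\maxbeta\,C(s)$. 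Your per-player inequality $C_{-i}(s_i^*,s_{-i})-C_{-i}(s)\le C_i(s)$ is, after adding $C_i(s_i^*,s_{-i})-C_i(s)$ to both sides, literally the same inequality $C(s_i^*,s_{-i})-C(s)\le C_i(s_i^*,s_{-i})$; you just prove it by expanding $\sum_{j\ne i}C_j$ rather than invoking the closed form of $C$. The paper's route is a little shorter, while yours makes the ``externality $\le$ own cost'' interpretation explicit, which is a pleasant way to see why the altruistic term costs only an extra $\maxbeta\,C(s)$.
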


\begin{proof}
The claim is true for $\maxbeta = 1$ because $\RPoA(\altvec) \le \infty$ holds trivially.
We show that $G^\altvec$ is $(n, \maxbeta, \altvec)$-smooth for $\maxbeta \in [0,1)$.
Let $s$ and $s^*$ be two strategy profiles.
Fix an arbitrary player $i \in N$. We have
\begin{equation*}
C(s_i^*, s_{-i}) - C(s) = \sum_{e \in U(s_i^*, s_{-i})} c_e - \sum_{e \in U(s)} c_e \le \sum_{e \in s^*_i \setminus U(s)} c_e \enspace .
\end{equation*}
We use this inequality to obtain the following bound:
\begin{align*}
(1 - \alpha_i)C_i(s_i^*, s_{-i}) + \alpha_i( C(s_i^*, s_{-i}) - C(s))
& \leq (1-\alpha_i) \sum_{e \in s^*_i} \frac{c_e}{x_e(s_i^*, s_{-i})} + \alpha_i \sum_{e \in s^*_i \setminus U(s)} \frac{c_e}{x_e(s_i^*, s_{-i})} \\
& \leq \sum_{e \in s^*_i} \frac{c_e}{x_e(s_i^*, s_{-i})} \leq \sum_{e \in s^*_i} \frac{n \cdot c_e}{x_e(s^*)} \enspace.
\end{align*}
The first inequality holds because $x_e(s^*_i, s_{-i}) = 1$ for every $e \in s^*_i \setminus U(s)$, and
the last inequality follows from $x_e(s^*_i, s_{-i}) \ge x_e(s^*)/n$ for every $e \in s^*_i$.
The left-hand side of the smoothness condition \eqref{eq:smoothness} is equivalent to
\begin{align*}
\sum_{i = 1}^n \left((1 - \alpha_i)C_i(s_i^*, s_{-i}) + \alpha_i( C(s_i^*, s_{-i}) - C(s)) + \alpha_i C_i(s) \right)
& \leq \sum_{i = 1}^n \left(\sum_{e \in s^*_i} \frac{n \cdot c_e}{x_e(s^*)} \right) + \maxbeta C(s) \\
& = n C(s^*) + \maxbeta C(s) \enspace .
\end{align*}
We conclude that the robust price of anarchy is at most
$\frac{n}{1-\maxbeta}$.
Example~\ref{lem:costsharinglb} shows that this bound is tight, even
for pure Nash equilibria.
\end{proof}

\begin{example}\label{lem:costsharinglb}
Consider the cost-sharing game in which $n$ players can choose between
two different facilities $e_1$ and $e_2$ of cost $1$ and
$n/(1-\alpha)$, respectively. Let $s^* = (e_1, \dots, e_1)$ and
$s = (e_2, \dots, e_2)$ refer to the strategy profiles in which every
player chooses $e_1$ and $e_2$, respectively. Then $C(s^*) = 1$ and
$C(s) = n/(1-\alpha)$. Note that $s$ is a pure Nash equilibrium of the
$\alpha$-altruistic extension of this game because for every player
$i$ we have
\begin{align*}
(1-\alpha) C_i(s) + \alpha C(s) = 1+ \alpha \frac{n}{1-\alpha} = C_i^\alpha(\{e_1\}, s_{-i}) \enspace.
\end{align*}
The pure price of anarchy is therefore at least $n/(1-\alpha)$.
\end{example}

We turn to the pure price of stability of uniformly $\alpha$-altruistic cost-sharing games.
Clearly, an upper bound on the pure price of stability extends to the mixed, correlated and coarse price of stability.
As opposed to the price of anarchy, the price of stability does improve with increased altruism.
The proof of the following proposition exploits a standard technique to bound the pure price of stability of exact potential games (see, e.g., \cite{nisan}).

\begin{proposition}\label{pro:poscsg}
The pure price of stability of uniformly $\alpha$-altruistic cost-sharing games is at most $(1-\alpha)H_n + \alpha$.
\end{proposition}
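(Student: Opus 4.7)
The plan is to follow the classical potential-function template for bounding the price of stability in exact potential games, adapted to the altruistic setting. The first step is to exhibit an exact potential function $\Phi^{\alpha}$ for the uniformly $\alpha$-altruistic extension $G^{\alpha}$. Two natural exact potentials are already available on the original game: Rosenthal's potential $\Phi_R(s) = \sum_{e \in E} c_e H_{x_e(s)}$ tracks the changes in the individual costs $C_i$, while the social cost $C(s) = \sum_{e \in U(s)} c_e$ is itself an exact potential for the function $C$ (a unilateral deviation by any player $i$ changes $C$ by exactly $C(s_i', s_{-i}) - C(s)$, independent of $i$'s identity). Since each player's perceived cost $C_i^{\alpha}(s) = (1-\alpha) C_i(s) + \alpha C(s)$ is a convex combination of the two quantities just mentioned, the natural candidate potential for $G^{\alpha}$ is
\[
  \Phi^{\alpha}(s) \;=\; (1-\alpha)\, \Phi_R(s) \;+\; \alpha\, C(s),
\]
and a direct telescoping check confirms that a unilateral deviation by player $i$ changes $\Phi^{\alpha}$ by exactly $C_i^{\alpha}(s_i', s_{-i}) - C_i^{\alpha}(s)$, so $\Phi^{\alpha}$ is an exact potential for $G^{\alpha}$.

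The second step is to sandwich $\Phi^{\alpha}$ between scalar multiples of the social cost. Using the classical inequalities $C(s) \le \Phi_R(s) \le H_n \cdot C(s)$ for Rosenthal's potential in fair cost-sharing games (the lower bound because $x_e(s)\ge 1$ on $U(s)$, the upper bound because $x_e(s)\le n$), I obtain
\[
  C(s) \;\le\; \Phi^{\alpha}(s) \;\le\; \bigl((1-\alpha) H_n + \alpha\bigr)\, C(s)
\]
for every $s \in \Sigma$.

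The final step is the standard argument: in the finite game $G^{\alpha}$, the potential $\Phi^{\alpha}$ attains a global minimum at some $s^{\circ}$, and this minimizer is necessarily a pure Nash equilibrium of $G^{\alpha}$ since any profitable unilateral deviation would strictly decrease $\Phi^{\alpha}$. Letting $s^{*}$ denote a minimizer of $C$, chaining the inequalities yields
\[
  C(s^{\circ}) \;\le\; \Phi^{\alpha}(s^{\circ}) \;\le\; \Phi^{\alpha}(s^{*}) \;\le\; \bigl((1-\alpha) H_n + \alpha\bigr)\, C(s^{*}),
\]
which establishes the claimed bound. The only real hurdle is guessing the correct potential, but the convex-combination structure of the altruistic cost makes the choice essentially forced; once $\Phi^{\alpha}$ is in hand, the remainder is a routine adaptation of the standard potential-function technique.
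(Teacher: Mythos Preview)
Your proposal is correct and follows essentially the same approach as the paper: the paper uses the identical potential $\Phi^{\alpha}(s) = (1-\alpha)\Phi(s) + \alpha C(s)$ with $\Phi(s) = \sum_{e \in E}\sum_{i=1}^{x_e(s)} c_e/i$ (which is your $\Phi_R$), derives the same sandwich $C(s) \le \Phi^{\alpha}(s) \le ((1-\alpha)H_n + \alpha)C(s)$, and then applies the standard potential-minimizer argument.
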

\begin{proof}
Let $G^\alpha$ be a uniformly $\alpha$-altruistic cost-sharing game. It is not hard to verify that $G^\alpha$ is an exact potential game with potential function $\Phi^\alpha(s) = (1-\alpha) \Phi(s) + \alpha C(s)$, where $\Phi(s) = \sum_{e \in E} \sum_{i = 1}^{x_e(s)} c_e /i$.
Observe that
\begin{align*}
\Phi^\alpha(s)=(1-\alpha)\sum_{e \in E} \sum_{i = 1}^{x_e(s)} \frac{c_e}{i} +\alpha\sum_{e \in U(s)}c_e\le((1-\alpha)H_n+\alpha)\sum_{e \in U(s)} c_e=((1-\alpha)H_n+\alpha)C(s).
\end{align*}
We therefore have that $C(s) \le \Phi^\alpha(s) \le ((1-\alpha)H_n + \alpha) C(s)$.

Let $s$ be a strategy profile that minimizes $\Phi^\alpha$, and
let $s^*$ be an optimal strategy profile that minimizes the social cost function $C$.
Note that $s$ is a pure Nash equilibrium of $G^\alpha$.
We have
\begin{equation*}
C(s) \le \Phi^\alpha(s) \le \Phi^\alpha(s^*) \le
((1-\alpha)H_n + \alpha) C(s^*) \enspace,
\end{equation*}
which proves the claim.
\end{proof}

\section{Valid Utility Games}\label{sec:vug}

A valid utility game \cite{vetta} is a payoff maximization game given
by $G = (N, E, \{\Sigma_i\}_{i \in N}, \{\Pi_i\}_{i \in N}, V)$, where
$E$ is a ground set of resources, the strategy sets $\Sigma_i$
are subsets of $E$, $\Pi_i$ is the payoff function of player $i$, and $V$
is a submodular\footnote{For a finite
  set $E$, a function $f : 2^E \rightarrow \mathbb{R}$ is
  \emph{submodular} iff $f(A \cup \{x\}) - f(A) \geq f(B \cup \{x\}) -
  f(B)$ for any $A \subseteq B \subseteq E, x \in E$.} and
non-negative function on $E$. Every player strives to maximize his
individual payoff function $\Pi_i$.

For a strategy profile $s \in \Sigma$, let
$U(s) = \bigcup_{i \in N} s_i \subseteq E$ be
the union of all players' strategies under $s$. The social welfare
function $\Pi: \Sigma \rightarrow \mathbb{R}$ to be maximized
is $\Pi(s) = V(U(s))$, and thus depends only on the union of the
players' chosen strategies, evaluated by $V$.
The individual payoff functions of all
players $i \in N$ are assumed to satisfy\footnote{We abuse notation and
  write $\Pi(\varnothing, s_{-i})$ to denote $V(U(s) \backslash
  s_i)$.} $\Pi_i(s) \ge \Pi(s) - \Pi(\varnothing, s_{-i})$ for every
strategy profile $s \in \Sigma$. Intuitively, this means that the
individual payoff of a player is at least his contribution to the
social welfare. Moreover, it is assumed that $\Pi(s) \ge \sum_{i = 1}^n
\Pi_i(s)$ for every $s \in \Sigma$. See \cite{vetta} for a detailed
description and justification of these assumptions.

Examples of games falling into this framework include natural
game-theoretic variants of the facility location,
$k$-median and network routing problems \cite{vetta}.
Vetta \cite{vetta} proved a bound of $2$ on the pure price of anarchy
for valid utility games with non-decreasing $V$, and Roughgarden
showed in \cite{roughgarden} how this bound is achieved via a
$(\lambda, \mu)$-smoothness argument. We extend this result to
altruistic extensions of these games.


\begin{theorem}\label{thm:vugpoa}
The robust price of anarchy of $\altvec$-altruistic valid utility games is $2$.
\end{theorem}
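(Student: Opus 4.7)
The plan is to establish that every $\altvec$-altruistic valid utility game is $(1,1,\altvec)$-smooth in the payoff-maximization sense of \eqref{eq:smooth-max}; by the payoff analog of Proposition~\ref{thm:coarsesmoothpoa}, this yields a robust price of anarchy of at most $(1+1)/1 = 2$. Matching tightness then follows from the classical lower bound of $2$ for non-altruistic valid utility games due to Vetta~\cite{vetta}, which is just the special case $\altvec = \mathbf{0}$.

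To prove smoothness, I fix strategy profiles $s, s^* \in \Sigma$, expand $\Pi_{-i}(\cdot) = \Pi(\cdot) - \Pi_i(\cdot)$, and rewrite each summand on the left-hand side of \eqref{eq:smooth-max} as the convex combination $(1-\alpha_i)\Pi_i(s_i^*, s_{-i}) + \alpha_i \Pi(s_i^*, s_{-i})$ plus the correction $-\alpha_i\Pi(s) + \alpha_i \Pi_i(s)$. Applying the valid utility inequality $\Pi_i(s_i^*, s_{-i}) \ge \Pi(s_i^*, s_{-i}) - \Pi(\varnothing, s_{-i})$ only to the $(1-\alpha_i)$-weighted piece lower-bounds the $i$-th summand by
\begin{equation*}
\bigl[\Pi(s_i^*, s_{-i}) - \Pi(\varnothing, s_{-i})\bigr] + \alpha_i\bigl[\Pi_i(s) - (\Pi(s) - \Pi(\varnothing, s_{-i}))\bigr].
\end{equation*}
The second bracket is nonnegative by a second invocation of valid utility, so these altruism terms can simply be dropped, leaving the task of showing $\sum_{i=1}^n \bigl[\Pi(s_i^*, s_{-i}) - \Pi(\varnothing, s_{-i})\bigr] \ge \Pi(s^*) - \Pi(s)$.

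This last inequality is Vetta's classical non-altruistic bound: writing $\Pi(\varnothing, s_{-i}) = V(U(s) \setminus s_i)$ and using submodularity on the subset $U(s)\setminus s_i \subseteq U(s)$ gives $\Pi(s_i^*, s_{-i}) - \Pi(\varnothing, s_{-i}) \ge V(U(s) \cup s_i^*) - V(U(s))$; iterating submodularity across players yields $\sum_i [V(U(s) \cup s_i^*) - V(U(s))] \ge V(U(s) \cup U(s^*)) - V(U(s))$; and monotonicity of $V$ gives $V(U(s) \cup U(s^*)) \ge V(U(s^*)) = \Pi(s^*)$. The main conceptual obstacle is precisely the recombination in the previous paragraph: one must identify that the altruistic expansion of the perceived payoff naturally pairs $(1-\alpha_i)\Pi_i(s_i^*, s_{-i})$ with $\alpha_i\Pi(s_i^*, s_{-i})$, so that valid utility need only be applied to the $(1-\alpha_i)$-weighted part while the remaining $\alpha_i$-weighted contribution is a nonnegative leftover that can be discarded. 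This is the reason the bound stays at $2$ uniformly in $\altvec$, in sharp contrast to the cost-minimization results established earlier in the paper, where the altruism terms genuinely worsen the smoothness parameters.
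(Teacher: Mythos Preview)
Your upper-bound argument is correct and essentially identical to the paper's: both rewrite the $i$-th summand as $(1-\alpha_i)\Pi_i(s_i^*,s_{-i}) + \alpha_i\bigl(\Pi(s_i^*,s_{-i}) - \Pi(s) + \Pi_i(s)\bigr)$, apply the valid-utility inequality once at $(s_i^*,s_{-i})$ and once at $s$, and then reduce to Vetta's telescoping submodularity bound. Your grouping (bound the $(1-\alpha_i)$-part, then observe the leftover $\alpha_i$-part is nonnegative) and the paper's (bound each of the two pieces by $\Pi(s_i^*,s_{-i}) - \Pi(\varnothing,s_{-i})$ and take the convex combination) are the same computation arranged differently.

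There is, however, a gap in your tightness claim. The theorem asserts $\RPoA(\altvec) = 2$ for \emph{every} $\altvec \in [0,1]^n$, so you need a matching lower bound for each such $\altvec$, not just for $\altvec = \mathbf{0}$. Citing Vetta's classical example only handles the purely selfish case; it does not, a priori, survive when players are partially or fully altruistic (altruism changes the equilibrium conditions, so a profile that is a pure Nash equilibrium at $\altvec = \mathbf{0}$ need not remain one at other $\altvec$). The paper closes this gap with an explicit two-player example (Example~\ref{lem:validutilitytight}) in which the bad profile is verified to be a pure Nash equilibrium of the $\altvec$-altruistic extension for \emph{arbitrary} $\altvec \in [0,1]^2$. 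You need an analogous construction, or at least to check that some known tight example remains an equilibrium under altruism.
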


Note that in the statement above, $\altvec$ is a vector, i.e., the claim holds for arbitrary non-uniform altruism.

\begin{proof}
We show that the $\altvec$-altruistic extension $G^{\altvec}$ of a
valid utility game is $(1, 1, \altvec$)-smooth.

Fix two strategy profiles $s, s^* \in \Sigma$ and consider an
arbitrary player $i \in N$. By assumption, we have
$\Pi_i(s) \geq \Pi(s) - \Pi(\varnothing, s_{-i})$.
Therefore, for each player $i \in N$,
\begin{eqnarray}
\Pi(s_i^*, s_{-i}) - \Pi(s) + \Pi_i(s) & = & (\Pi(s_i^*, s_{-i}) - \Pi(\varnothing, s_{-i})) \notag
- (\Pi(s) - \Pi(\varnothing, s_{-i})) + \Pi_i(s) \notag \\
& \geq & \Pi(s_i^*, s_{-i}) - \Pi(\varnothing, s_{-i}) \enspace . \label{eq:qrs}
\end{eqnarray}
Now let $U_i = \bigcup_{j=1}^n s_j \cup \bigcup_{j=1}^i s_j^*$. Summing over all $i \in N$,
\begin{eqnarray*}
\sum_{i=1}^n ((1-\alpha_i) \Pi_i(s_i^*, s_{-i}) + \alpha_i(\Pi(s_i^*, s_{-i}) - \Pi(s) + \Pi_i(s)))
& \geq & \sum_{i=1}^n (\Pi(s_i^*, s_{-i}) - \Pi(\varnothing, s_{-i})) \\
& = & \sum_{i=1}^n (V(U(s_i^*, s_{-i})) - V(U(s) \setminus s_i)) \\
& \geq & \sum_{i=1}^n (V(U_i) - V(U_{i-1})) \\
& \geq &\Pi(s^*) - \Pi(s) \enspace .
\end{eqnarray*}
Here, the first inequality follows from \eqref{eq:qrs} and because
$\Pi_i(s) \ge \Pi(s) - \Pi(\varnothing, s_{-i})$ for every $i$,
the second inequality holds because $V$ is submodular, and the
final inequality follows from $V$ being non-decreasing.
We conclude that $G^\altvec$ is $(1, 1, \altvec)$-smooth,
which proves an upper bound of $2$ on the robust price of anarchy.
This bound is tight, as shown by Example~\ref{lem:validutilitytight}.
\end{proof}

\begin{example}\label{lem:validutilitytight}
Consider a valid utility game $G$ with two players $N = \{1,2\}$,
a ground set $E = \{1,2\}$ of two elements and strategy sets
$\Sigma_1 = \{\{1\}, \{2\}\},\ \Sigma_2 = \{ \varnothing, \{1\} \}$.
Define $V(S) = |S|$ for every subset $S \subseteq E$. Note that $V$ is non-negative, non-decreasing and submodular.

For a given strategy profile $s \in \Sigma$, the individual profits
$\Pi_1(s)$ and $\Pi_2(s)$ of player 1 and player 2, respectively, are
defined as  follows:
$\Pi_1(s) = 1$ for all strategy profiles $s$.
$\Pi_2(s) = 1$ if $s = (\{2\},\{1\})$ and $\Pi_2(s) = 0$ otherwise.
It is not hard to verify that for every player $i$ and every strategy
profile $s \in \Sigma$ we have $\Pi_i(s) \geq \Pi(s) -
\Pi(\varnothing, s_{-i})$. Moreover, $\Pi(s) \geq \Pi_1(s) + \Pi_2(s)$
for every $s \in \Sigma$.
We conclude that $G$ is a valid utility game.

Let $\altvec \in [0,1]^2$, and consider the $\altvec$-altruistic
extension $G^\altvec$ of $G$. We claim that $s = (\{1\}, \varnothing)$
is a pure Nash equilibrium of $G^\altvec$: the profit of player $1$
under $s$ is $(1 - \alpha_1) + \alpha_1 = 1$. His profit remains $1$ if
he switches to strategy $\{2\}$. The profit of player $2$ under $s$ is
$\alpha_2$. If he switches to strategy $\{1\}$, then his profit is
$\alpha_2$ as well. Thus, $s$ is a pure Nash equilibrium. Since $\Pi(s)
= 1$ and $\Pi((\{2\}, \{1\})) = 2$, the pure price of anarchy of $G$ is $2$.
\end{example}

\section{Congestion Games}\label{sec:cg}
In an \emph{atomic congestion game}
$G = (N, E, \{\Sigma_i\}_{i \in N}, \{d_e\}_{e \in E})$,
players' strategies are again subsets of facilities,
$\Sigma_i \subseteq 2^E$.
Each facility $e \in E$ has an associated
\emph{delay function} $d_e : \mathbb{N} \rightarrow \mathbb{R}$.
As in Section \ref{sec:csg}, we write $x_e(s)$ for the number of
players using facility $e$.
Player $i$'s cost is $C_i(s) = \sum_{e \in s_i} d_e(x_e(s))$,
and the social cost is $C(s) = \sum_{i=1}^{n} C_i(s)$.
We focus on \emph{linear} congestion games, i.e., the delay functions are of the
form $d_e(x) = a_ex + b_e$, where $a_e, b_e$ are non-negative rational numbers.
Pure Nash equilibria of altruistic extensions of linear congestion games
always exist \cite{skopalik}; this may
not be the case for arbitrary (non-linear) congestion
games.

The PoA of linear congestion games is known to be $\frac{5}{2}$
\cite{christodoulou}.
Recently, Caragiannis et al.~\cite{caragiannis} extended this result
to linear congestion games with uniformly altruistic players. Applying
the transformation outlined in Remark~\ref{rem:rel-Car}, their
result can be stated as follows:

\begin{theorem}[Caragiannis et al. \cite{caragiannis}]\label{thm:cara}
The pure price of anarchy of uniformly $\alpha$-altruistic linear congestion games
is at most $\frac{5+4\alpha}{2+\alpha}$.
\end{theorem}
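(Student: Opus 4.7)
The plan is to obtain the bound via the $(\lambda,\mu,\alpha)$-smoothness framework (Definition~\ref{def:smoothness}) and Proposition~\ref{thm:coarsesmoothpoa}, by establishing $(\lambda,\mu,\alpha)$-smoothness with $\lambda = \tfrac{5+4\alpha}{3}$ and $\mu = \tfrac{1-\alpha}{3}$, so that $\lambda/(1-\mu) = (5+4\alpha)/(2+\alpha)$. A convenient byproduct is that the smoothness framework automatically upgrades the conclusion from the pure to the robust price of anarchy.

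To verify the smoothness inequality~\eqref{eq:smoothness}, I would decompose its LHS resource-by-resource. Fix $e \in E$ with $d_e(x) = a_e x + b_e$, and set $x = x_e(s)$, $y = x_e(s^*)$, and $z = |\{i : e \in s_i \cap s_i^*\}|$. A case analysis on whether $e$ belongs to $s_i$, $s_i^*$, both, or neither yields, for the contribution of $e$ to the LHS,
\begin{align*}
a_e\bigl[xy + (y-z)(1+\alpha x) - \alpha(x-z)(x-1)\bigr] + b_e\, y.
\end{align*}
This expression is linear in $z$ with negative slope $-(1+\alpha)a_e$, so it is maximized at $z=0$ where it becomes $a_e[(1+\alpha)xy + y + \alpha x - \alpha x^2] + b_e y$. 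To bound this by $\lambda(a_e y^2 + b_e y) + \mu(a_e x^2 + b_e x)$, the $b_e$ part is trivial since $\lambda \ge 1$ and $\mu \ge 0$; the $a_e$ part reduces to the arithmetic inequality
\begin{align*}
(1+\alpha)xy + y + \alpha x \le \tfrac{5+4\alpha}{3}\, y^2 + \tfrac{1+2\alpha}{3}\, x^2, \qquad x,y \in \mathbb{Z}_{\ge 0},
\end{align*}
which I would derive as the convex combination with weights $1-\alpha$ and $\alpha$ of (i) the classical Christodoulou--Koutsoupias inequality $y(x+1) \le \tfrac{5}{3}y^2 + \tfrac{1}{3}x^2$ underlying the $\tfrac{5}{2}$ bound, and (ii) the auxiliary inequality $2xy + y + x \le 3y^2 + x^2$. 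Summing the per-resource bounds over $e \in E$ and invoking Proposition~\ref{thm:coarsesmoothpoa} then yields the theorem.

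The main obstacle is (ii): its continuous relaxation fails near $(x,y)=(\tfrac32,1)$, so integrality of $x,y$ is genuinely used. I would verify it by treating $3y^2 + x^2 - 2xy - y - x$ as a quadratic in $x$ with non-integer minimizer $x = y + \tfrac12$, and checking that the two nearest integers $x = y$ and $x = y+1$ both evaluate to the non-negative quantity $2y^2 - 2y$ for every $y \in \mathbb{Z}_{\ge 0}$.
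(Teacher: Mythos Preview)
Your argument is correct. The overall strategy---establishing $(\tfrac{5+4\alpha}{3},\tfrac{1-\alpha}{3},\alpha)$-smoothness via a per-resource decomposition---is the same route the paper takes (Theorem~\ref{thm:cara} itself is only cited from~\cite{caragiannis}, but the paper's own proof of the generalization Theorem~\ref{thm:cgpoa} specializes to exactly these parameters when $\maxbeta=\minbeta=\alpha$). Your per-edge expression and its reduction at $z=0$ match the paper's intermediate bound $\sum_e\bigl(((1+\alpha)x_e+1)x_e^* + \alpha(1-x_e)x_e\bigr)$.

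Where you differ is in proving the arithmetic inequality. The paper establishes it (in the more general form of Lemma~\ref{lem:calc}) via Lemma~\ref{lem:calc2}, whose proof is a somewhat lengthy case analysis on the sign of $x-y$ followed by substitutions and term-by-term estimates. Your approach---writing the target inequality as the convex combination $(1-\alpha)\cdot\text{(i)}+\alpha\cdot\text{(ii)}$ of the classical Christodoulou--Koutsoupias inequality and the $\alpha=1$ endpoint $2xy+y+x\le 3y^2+x^2$---is shorter and conceptually cleaner for uniform altruism; it also makes transparent why the bound interpolates between $\tfrac{5}{2}$ and $3$. The price is that the convex-combination trick is tied to a single scalar $\alpha$, whereas the paper's Lemma~\ref{lem:calc2} is formulated with separate parameters and carries through to the non-uniform bound of Theorem~\ref{thm:cgpoa}. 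A minor further difference: the paper first reduces to $d_e(x)=x$ via a facility-splitting lemma, while you keep $b_e$ and dispatch it directly with $\lambda\ge 1$, $\mu\ge 0$; both are fine.
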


The proof in \cite{caragiannis} implicitly uses a smoothness argument
in the framework we define here for altruistic games.
Thus, without any additional work, our framework allows the extension
of Theorem \ref{thm:cara} to the robust PoA.
Caragiannis et al.~\cite{caragiannis} also showed that the bound of Theorem \ref{thm:cara} is asymptotically tight.
A simpler example (given below) proves tightness of this bound (not only asymptotically). Thus, the robust
price of anarchy is exactly $\frac{5+4\alpha}{2+\alpha}$.
We give a refinement of Theorem \ref{thm:cara} to non-uniform altruism distributions,
obtaining a bound in terms of the maximum and minimum altruism levels.

\begin{theorem}\label{thm:cgpoa}
The robust price of anarchy of $\altvec$-altruistic linear congestion
games is at most
$\frac{5+2\maxbeta+2\minbeta}{2-\maxbeta+2\minbeta}$.
\end{theorem}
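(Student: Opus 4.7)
\medskip
\noindent\textbf{Proof plan.} The plan is to show that every $\altvec$-altruistic linear congestion game is $(\lambda,\mu,\altvec)$-smooth with
\[
\lambda \;=\; \tfrac{5 + 2\maxbeta + 2\minbeta}{3} \quad\text{and}\quad \mu \;=\; \tfrac{1 + \maxbeta - 2\minbeta}{3},
\]
which satisfy $\mu\in[0,1)$ (using $\minbeta\leq\maxbeta\leq 1$) and yield $\lambda/(1-\mu)=\frac{5+2\maxbeta+2\minbeta}{2-\maxbeta+2\minbeta}$; the claimed bound then follows from Proposition~\ref{thm:coarsesmoothpoa}. For uniform $\alpha$ (i.e., $\maxbeta=\minbeta=\alpha$) these parameters specialize to Caragiannis et al.'s $(\frac{5+4\alpha}{3}, \frac{1-\alpha}{3}, \alpha)$, so the approach generalizes their argument; the new ingredient is the careful use of the \emph{maximum} altruism to bound ``upward'' contributions and the \emph{minimum} altruism to bound ``downward'' contributions.

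\medskip
\noindent The heart of the argument is a facility-by-facility decomposition. Fix strategy profiles $s,s^*$, and for each facility $e$ with $d_e(x)=a_ex+b_e$, partition the players into \emph{stayers} ($e\in s_i\cap s_i^*$; count $\delta_e^=$), \emph{newcomers} ($e\in s_i^*\setminus s_i$; count $\delta_e^+$), and \emph{leavers} ($e\in s_i\setminus s_i^*$; count $\delta_e^-$), so that $n_e=x_e(s)=\delta_e^-+\delta_e^=$ and $o_e=x_e(s^*)=\delta_e^++\delta_e^=$. A direct computation shows
\[
\sum_i C_i(s_i^*, s_{-i}) \;=\; \sum_e \Bigl(a_e\bigl[n_e \delta_e^= + (n_e+1)\delta_e^+\bigr] + b_e\, o_e\Bigr),
\]
and for linear delays the effect of $i$'s unilateral move at a single facility $e$ contributes $a_e n_e$ (when $i$ is a newcomer at $e$) or $-a_e(n_e-1)$ (when $i$ is a leaver at $e$) to $C_{-i}(s_i^*,s_{-i})-C_{-i}(s)$, and $0$ otherwise. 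Hence
\[
\sum_i \alpha_i\bigl(C_{-i}(s_i^*, s_{-i}) - C_{-i}(s)\bigr) \;=\; \sum_e a_e \Bigl[\,n_e\!\!\!\sum_{i\in\mathrm{new}(e)}\!\!\!\alpha_i \;-\; (n_e{-}1)\!\!\!\sum_{i\in\mathrm{leave}(e)}\!\!\!\alpha_i\Bigr].
\]
Since the first inner sum is non-negative and multiplied by $n_e\geq 0$, I bound it using $\alpha_i\leq \maxbeta$; the second is non-negative and subtracted, so I use $\alpha_i\geq \minbeta$ (legitimately since $n_e\geq 1$ whenever a leaver exists). This gives an upper bound on the smoothness LHS that is a sum over facilities of $a_e F_e + b_e o_e$, where
\[
F_e \;=\; n_e \delta_e^= + (n_e+1)\delta_e^+ + \maxbeta\, n_e \delta_e^+ - \minbeta (n_e-1)\delta_e^-.
\]

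\medskip
\noindent The remaining task is the per-facility inequality $a_e F_e + b_e o_e \leq a_e(\lambda o_e^2 + \mu n_e^2) + b_e(\lambda o_e + \mu n_e)$. The $b_e$ part is immediate from $\lambda\geq 1$. For the $a_e$ part, writing $x=\delta_e^+,\, y=\delta_e^-,\, z=\delta_e^=$ reduces everything to the purely numerical claim that for all non-negative integers $x,y,z$ and all $0\leq \minbeta\leq \maxbeta\leq 1$,
\[
z^2 + (1-\minbeta)yz + (1+\maxbeta)(xy+xz) + x - \minbeta\, y^2 + \minbeta\, y \;\leq\; \lambda(x+z)^2 + \mu(y+z)^2.
\]
I would prove this by writing the difference of the two sides as a non-negative combination of the Christodoulou--Koutsoupias inequality $(x+z+1)(y+z)\leq \tfrac{5}{3}(y+z)^2+\tfrac{1}{3}(x+z)^2$, the simple quadratic inequalities $(x-y)^2\geq 0$ and $(y-z)^2\geq 0$, the monotonicity bound $\maxbeta\geq \minbeta$, and the integer slacks $x\geq 0,\ y(y-1)\geq 0$. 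The tight case to watch is the Pigou-like profile $(x,y,z)=(1,1,0)$, where both sides equal $2+\maxbeta$; all other cases leave slack.

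\medskip
\noindent The main obstacle is the last step: finding an explicit non-negative combination certifying the inequality uniformly in the three independent parameters $\maxbeta,\minbeta,$ and the integer triple $(x,y,z)$. Since the uniform case is already handled in \cite{caragiannis}, I expect one can interpolate between the $\maxbeta$-uniform and $\minbeta$-uniform smoothness proofs, weighting the ``newcomer'' and ``leaver'' halves of the Christodoulou--Koutsoupias bound separately; the integrality constraints (in particular $y^2\geq y$) are needed to absorb the stray $\minbeta y$ term.
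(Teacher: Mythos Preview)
Your high-level plan is exactly the paper's: establish $(\lambda,\mu,\altvec)$-smoothness with $\lambda=\tfrac{5+2\maxbeta+2\minbeta}{3}$ and $\mu=\tfrac{1+\maxbeta-2\minbeta}{3}$, using $\maxbeta$ on the ``newcomer'' contributions and $\minbeta$ on the ``leaver'' contributions. The difference is in how much you simplify before attacking the key inequality. The paper first reduces (by a gadget argument) to the case $d_e(x)=x$, eliminating the $b_e$ terms entirely; it then upper-bounds the smoothness left-hand side more aggressively than you do, arriving at the \emph{two}-variable per-facility bound $((1+\maxbeta)x_e+1)x_e^* + \minbeta(1-x_e)x_e$, and proves the corresponding two-variable numerical lemma (their Lemma~\ref{lem:calc}) by a direct case analysis. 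Your stayer/newcomer/leaver decomposition retains strictly more information and yields a tighter intermediate bound $F_e$, but the price is a three-variable inequality that you leave as the ``main obstacle.'' In fact your $F_e$ is dominated by the paper's two-variable expression---the gap is $(\maxbeta-\minbeta)z(y+z)+(1+\minbeta)z\ge 0$---so your inequality is an immediate consequence of the paper's Lemma~\ref{lem:calc}, and you need not search for a bespoke SOS certificate. One small slip: the Christodoulou--Koutsoupias inequality you invoke has $(x+z)$ and $(y+z)$ interchanged relative to the form that is useful here; you want the version putting the $\tfrac{5}{3}$ coefficient on $o_e^2=(x+z)^2$, not on $n_e^2=(y+z)^2$.
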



As a first step, we show that without loss of generality, we can focus
on simpler instances of linear congestion games.

\begin{lemma}
Without loss of generality, all delay functions are of the form
$d_e(x) = x$.
\end{lemma}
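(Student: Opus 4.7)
The plan is to combine two standard observations that together justify restricting attention to $d_e(x) = x$. First, both sides of the $(\lambda, \mu, \altvec)$-smoothness inequality \eqref{eq:smoothness} decompose additively over facilities and are linear in the coefficients $(a_e, b_e)$ of each delay function. Indeed, $C_i(s) = \sum_{e \in s_i}(a_e x_e(s) + b_e)$, so every quantity appearing in \eqref{eq:smoothness}---namely $C(s)$, $C(s^*)$, $C_i(s_i^*, s_{-i})$, $C_{-i}(s_i^*, s_{-i})$, and $C_{-i}(s)$---is a linear function of $(a_e, b_e)$. Hence it suffices to verify smoothness separately for two ``basis'' delay types on a single facility at a time: the purely linear $d_e(x) = x$ and the purely constant $d_e(x) = 1$.

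The constant case will be essentially free. For a single facility with $d_e(x) = 1$, let $S = \{i : e \in s_i\}$ and $S^* = \{i : e \in s_i^*\}$; this facility contributes $|S^*|$ to $C(s^*)$, $|S|$ to $C(s)$, and $[i \in S^*]$ to $C_i(s_i^*, s_{-i})$. The key point is that $C_{-i}^e(s_i^*, s_{-i}) = C_{-i}^e(s) = |S \setminus \{i\}|$, so the altruistic correction terms cancel identically, and the contribution of this facility to \eqref{eq:smoothness} reduces to $|S^*| \le \lambda |S^*| + \mu |S|$. This holds as soon as $\lambda \geq 1$, a condition easily verified for the smoothness bounds we will derive in Theorem \ref{thm:cgpoa}. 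Thus the constants $b_e$ impose no additional constraint on $(\lambda, \mu)$.

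The linear case is handled by a standard facility-splitting reduction. After scaling all delays by a common denominator (which multiplies both $C(s)$ and $C(s^*)$ by the same positive constant and hence preserves the ratio $\tfrac{\lambda}{1-\mu}$), we may assume each $a_e$ is a non-negative integer. We then replace each facility $e$ with $d_e(x) = a_e x$ by $a_e$ parallel copies $e_1, \ldots, e_{a_e}$, each with delay $\tilde d(x) = x$, and modify every player's strategy so that every occurrence of $e$ is replaced by the full set $\{e_1, \ldots, e_{a_e}\}$. Since each copy is used by precisely the same set of players as $e$, the total delay incurred by a user of $e$ across the copies equals $a_e \cdot x_e(s)$, matching the original delay exactly. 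Consequently $C_i(s)$ and $C(s)$ are preserved pointwise, and $(\lambda, \mu, \altvec)$-smoothness of the transformed game (which has $\tilde d(x) = x$ throughout) implies the same for the original.

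The only subtle point, and the one I would write out carefully, is to make explicit that the linearity-in-coefficients decomposition applies uniformly to the altruistic terms $\alpha_i(C_{-i}(s_i^*, s_{-i}) - C_{-i}(s))$, not just to the selfish part $C_i(s_i^*, s_{-i})$. This is immediate from the definition of $C_{-i} = C - C_i$ and the linearity of $C$ and $C_i$ in $(a_e, b_e)$, but it is the one place where one could accidentally conflate the altruism-weighted sum with a non-linear operation; once noted, the reduction goes through cleanly and we may assume $d_e(x) = x$ for the remainder of the analysis.
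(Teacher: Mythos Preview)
Your argument is correct, but it handles the constant terms $b_e$ differently from the paper. The paper performs an explicit game transformation: each facility $e$ with $d_e(x) = a_e x + b_e$ is replaced by facilities $e_0, e_1, \ldots, e_n$ with $d_{e_0}(x) = a_e x$ and $d_{e_j}(x) = b_e x$ for $j \geq 1$, and each strategy of player $i$ containing $e$ is replaced by one containing $\{e_0, e_i\}$. Since each $e_j$ with $j \geq 1$ is used by at most one player, its delay is exactly $b_e$, so all individual and social costs are preserved pointwise; this is a genuine ``without loss of generality'' reduction valid for \emph{any} $(\lambda,\mu)$. Your analytic decomposition via linearity in $(a_e,b_e)$ is more direct for the smoothness application and avoids the artificial player-indexed facilities, but it is not a pure reduction: the per-facility inequality $|S^*| \le \lambda |S^*| + \mu |S|$ for the constant piece needs both $\lambda \geq 1$ \emph{and} $\mu \geq 0$ (the latter for the case $|S^*| = 0 < |S|$, which you omitted). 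Both conditions are satisfied by the $(\lambda,\mu) = \bigl(\tfrac{1}{3}(5+2\maxbeta+2\minbeta),\, \tfrac{1}{3}(1+\maxbeta-2\minbeta)\bigr)$ of Theorem~\ref{thm:cgpoa}, so your argument goes through for the intended application.
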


\begin{proof}
First, we may assume that for every delay function $d_e$, the $a_e$
and $b_e$ coefficients are integers.
This can be ensured by multiplying all coefficients among all
facilities by their least common multiple.
In the resulting game, all coefficients are integers, the price of
anarchy is the same, and so is the set of all equilibria.

Next, we can assume that $b_e = 0$ for all $e \in E$.
To show this, we replace any facility $e \in E$ with delay function
$d(x) = a_e x + b_e$ by $n+1$ facilities
$e_0, \dots, e_n$ with delay functions $d_{e_0}(x) = a_e x$ and
$d_{e_i}(x) = b_e x$ for $1 \leq i \leq n$.
We then adapt the strategy space $\Sigma_i$ of each player $i$ as
follows: we replace every strategy $s_i \in \Sigma_i$ in which $e$
occurs by the strategy $s_i \setminus \{e\} \cup \{e_0, e_i\}$.
There is an obvious bijection between the strategy profiles in the
original game and those in the new game, preserving the values of
individual cost functions and the social cost function.
(Notice that this construction exploits the fact that all players have unit weight,
and would not carry over to weighted congestion games.)

Finally, for the same reason, we can also assume that $a_e = 1$ for all
$e \in E$. We replace $e$ with facilities $e_1, \ldots, e_{a_e}$, each
having delay function $d_{e_i}(x) = x$, and adapt the strategy space
$\Sigma_i$ of each player $i$ by replacing each strategy $s_i$ in
which $e$ occurs by $s_i \setminus \{e\} \cup \{e_1, \ldots, a_{a_e}\}$.
Now, all delay functions are $d_e(x)=x$.
\end{proof}

The next step in the proof of Theorem~\ref{thm:cgpoa} is the following
technical lemma:
\begin{lemma}\label{lem:calc}
For every two non-negative integers $x, y$ and $\maxbeta, \minbeta \in [0,1]$ with $\maxbeta \ge \minbeta$,
\begin{equation*}
((1+ \maxbeta)x + 1)y + \minbeta (1-x) x \le \frac{5+2\maxbeta+2\minbeta}{3} y^2 + \frac{1+\maxbeta-2\minbeta}{3} x^2.
\end{equation*}
\end{lemma}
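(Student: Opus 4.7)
My plan is to multiply the claim by $3$ and bring everything to one side, reducing the lemma to the polynomial inequality
\[
H(x,y) \;:=\; (1+\maxbeta+\minbeta)x^2 + (5+2\maxbeta+2\minbeta)y^2 - 3(1+\maxbeta)xy - 3\minbeta x - 3y \;\ge\; 0
\]
for every pair of non-negative integers $x,y$ and every $0 \le \minbeta \le \maxbeta \le 1$. Since $1+\maxbeta+\minbeta > 0$, I would view $H$ as a quadratic in $x$ with positive leading coefficient, and split the proof into the cases $y=0$, $y=1$, and $y\ge 2$.

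For $y=0$, a direct factorization gives $H(x,0) = x\bigl[(1+\maxbeta+\minbeta)x - 3\minbeta\bigr]$. The bracketed factor equals $1+\maxbeta-2\minbeta$ at $x=1$, which is $\ge 1-\minbeta \ge 0$ because $\maxbeta \ge \minbeta$; hence $H(x,0) \ge 0$ for every integer $x \ge 0$. For $y=1$, collecting coefficients leads to a telescoping simplification: the linear coefficient becomes $-3(1+\maxbeta)-3\minbeta = -3(1+\maxbeta+\minbeta)$ and the constant term becomes $(5+2\maxbeta+2\minbeta)-3 = 2(1+\maxbeta+\minbeta)$, producing the clean factorization
\[
H(x,1) \;=\; (1+\maxbeta+\minbeta)(x-1)(x-2),
\]
which is non-negative for every integer $x$. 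Notice that this case is tight exactly at $x \in \{1,2\}$, matching the tight instance of Caragiannis et al.\ for uniform altruism.

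For $y \ge 2$, the goal is to show that $H(x,y) \ge 0$ already holds for all \emph{real} $x$, i.e., that the discriminant of $H$ (viewed as a quadratic in $x$) is non-positive. Setting $A := 4(1+\maxbeta+\minbeta)(5+2\maxbeta+2\minbeta) - 9(1+\maxbeta)^2$ and $B := 18(1+\maxbeta)\minbeta + 12(1+\maxbeta+\minbeta)$, this amounts to showing $Ay^2 - By - 9\minbeta^2 \ge 0$ for every integer $y \ge 2$. A direct expansion shows that $A > 0$ and $4A - B > 0$ (so the vertex of the auxiliary quadratic in $y$ lies at $y<2$); a second expansion, together with $\maxbeta \le 1$ to absorb the lone negative term $-4\maxbeta^2$, yields $4A - 2B - 9\minbeta^2 \ge 0$. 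Together these establish non-negativity at $y=2$ and hence for every $y \ge 2$.

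The main obstacle is the $y=1$ case: the continuous minimum of $H(\cdot,1)$ over real $x$ is actually strictly negative, equal to $-(1+\maxbeta+\minbeta)/4$, attained at $x=3/2$, so integrality of $x$ is essential. The polynomial's clean factorization as $(1+\maxbeta+\minbeta)(x-1)(x-2)$ is what saves the argument and, pleasingly, also pinpoints the integer pairs $(x,y)\in\{(1,1),(2,1)\}$ at which the bound of Theorem~\ref{thm:cgpoa} becomes tight.
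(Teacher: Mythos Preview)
Your argument is correct. The rearranged polynomial $H(x,y)$ is exactly the right object, the factorizations for $y=0$ and $y=1$ are accurate (in particular $H(x,1)=(1+\maxbeta+\minbeta)(x-1)(x-2)$ is a clean identity), and the discriminant bound for $y\ge 2$ goes through: one checks that $A=11+10\maxbeta+28\minbeta-\maxbeta^2+16\maxbeta\minbeta+8\minbeta^2>0$, that $4A-B=32+28\maxbeta+82\minbeta-4\maxbeta^2+46\maxbeta\minbeta+32\minbeta^2>0$ (so the vertex of $Ay^2-By-9\minbeta^2$ lies left of $y=2$), and that $4A-2B-9\minbeta^2=20+16\maxbeta+52\minbeta-4\maxbeta^2+28\maxbeta\minbeta+23\minbeta^2>0$, each time absorbing the sole negative term via $\maxbeta^2\le 1$.

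Your route is genuinely different from the paper's. The paper does not prove the inequality directly; instead it establishes an auxiliary one-parameter family (Lemma~\ref{lem:calc2}), namely $((1+\alpha)x+1)y+\beta\alpha(1-x)x\le (2+\alpha-\gamma)y^2+\gamma x^2$ for all $\gamma\in[\tfrac13(1+\alpha-2\beta\alpha),\,1+\alpha]$, proved by a case split on the sign of $x-y$ and an optimization over the resulting rational expressions. It then writes $\minbeta=\beta\maxbeta$, applies the auxiliary lemma with $\alpha=\maxbeta$, and specializes to the endpoint $\gamma=\tfrac13(1+\maxbeta-2\minbeta)$ that minimizes $\lambda/(1-\mu)$. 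So the paper's argument simultaneously exhibits the full interval of admissible $(\lambda,\mu)$ pairs, while your proof targets only the optimal pair and is correspondingly shorter and more elementary; the nice factorization at $y=1$ also makes the tightness at $(x,y)\in\{(1,1),(2,1)\}$ transparent, whereas in the paper those extremal points emerge from the maximization of $f(y,a)$ inside the proof of Lemma~\ref{lem:calc2}.
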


To prove this lemma, we make use of the following result:
\begin{lemma}\label{lem:calc2}
For all $x,y \in \mathbb{N}_0$, $\alpha \in [0,1]$ and $\beta \in [0,1]$, it holds that
\begin{equation*}
((1+\alpha)x+1)y + \beta\alpha(1-x)x \le (2+\alpha-\gamma) y^2 + \gamma x^2
\end{equation*}
for all $\gamma \in \textstyle[\frac13 (1+\alpha-2\beta\alpha),1+\alpha]$.

\end{lemma}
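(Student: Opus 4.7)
The key structural observation is that the difference
\[
(2+\alpha-\gamma)y^2 + \gamma x^2 - ((1+\alpha)x+1)y - \beta\alpha(1-x)x
= (2+\alpha)y^2 + \gamma(x^2-y^2) - ((1+\alpha)x+1)y - \beta\alpha(1-x)x
\]
is \emph{affine} in $\gamma$ with slope $x^2-y^2$. Hence, as $\gamma$ ranges over $[\tfrac13(1+\alpha-2\beta\alpha),\,1+\alpha]$, this difference is minimized at the lower endpoint when $x\ge y$ and at the upper endpoint when $x<y$. The plan is therefore to verify only these two endpoint inequalities (each on its own half-plane); the linearity in $\gamma$ takes care of all intermediate values.

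For the upper endpoint $\gamma=1+\alpha$ with $x<y$, the inequality rearranges cleanly to $y(y-1) + (1+\alpha)x(x-y) + \beta\alpha x(x-1)\ge 0$. I would substitute $y=x+k$ with $k\ge 1$ and collect in $x$, obtaining the quadratic $(1+\beta\alpha)x^2 + [(1-\alpha)k-1-\beta\alpha]x + k(k-1)$ with positive leading coefficient. For $k=1$ this factors as $x[(1+\beta\alpha)x - \alpha(1+\beta)]$, which is non-negative on $\mathbb{N}_0$ because $(1+\beta\alpha)-\alpha(1+\beta) = 1-\alpha \ge 0$. For $k\ge 2$, bounding $(1-\alpha)k-1-\beta\alpha \in [-2,k-1]$ shows the discriminant is at most $\max((k-1)^2,4) - 4(1+\beta\alpha)k(k-1)$, which is strictly negative (at $k=2$ the crude bound gives $4-8<0$, and for $k\ge 3$ it gives $(k-1)^2-4k(k-1)=-(k-1)(3k+1)<0$), so the quadratic is positive for every real $x$.

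For the lower endpoint $\gamma=\tfrac13(1+\alpha-2\beta\alpha)$ with $x\ge y$, the inequality becomes
\[
\tfrac{1+\alpha+\beta\alpha}{3}\,x^2 - [(1+\alpha)y+\beta\alpha]\,x + \tfrac{5+2\alpha+2\beta\alpha}{3}\,y^2 - y \ge 0,
\]
again a quadratic in $x$ with positive leading coefficient. Direct substitution handles the base cases: $y=0$ reduces to $1+\alpha-2\beta\alpha \ge 0$ (which holds since $1+\alpha-2\beta\alpha \ge 1-\alpha \ge 0$), and at $y=1$ the expression factors pleasantly as $\tfrac{1+\alpha+\beta\alpha}{3}(x-1)(x-2)$, non-negative on $\mathbb{N}_0$. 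For $y\ge 2$, I would view the discriminant in $x$ as a downward-opening quadratic $\Delta(y) = -A_0 y^2 + B_0 y + (\beta\alpha)^2$ in $y$ (here $A_0>0$ because, writing $p=1+\alpha$, $q=\beta\alpha$, the coefficient $p^2 - \tfrac{4(p+q)(3+2p+2q)}{9}$ is seen to be negative using $(p+q)^2 \ge p^2$ and $p+q\le 3$, and $B_0=2pq+\tfrac{4(p+q)}{3}>0$), then verify that $\Delta(2)\le 0$ and $\Delta'(2)\le 0$ hold uniformly in $\alpha,\beta\in[0,1]$. Together these force $\Delta(y)\le 0$ for all $y\ge 2$, so the quadratic in $x$ is non-negative on all of $\mathbb{R}$.

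The conceptual content is entirely in the linearity-in-$\gamma$ reduction; the main obstacle is the parameter-heavy discriminant arithmetic for $y\ge 2$ in the lower-endpoint case, where one must track $\alpha,\beta\in[0,1]$ carefully to certify $\Delta(2)\le 0$ and $\Delta'(2)\le 0$. The vanishing of the expression at $y=1$, $x\in\{1,2\}$ in the lower-endpoint analysis confirms moreover that the stated interval $[\tfrac13(1+\alpha-2\beta\alpha),\,1+\alpha]$ for $\gamma$ is tight and cannot be enlarged.
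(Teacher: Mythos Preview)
Your approach is correct and shares the same starting point as the paper's proof: the observation that the inequality is affine in $\gamma$, so it suffices to check the two endpoints of the interval (on the appropriate half-planes $x\ge y$ and $x<y$). Where you diverge is in the execution. The paper rewrites the inequality as $\gamma(x^2-y^2) \ge \text{(stuff)}$ and, for each sign of $x^2-y^2$, treats the resulting bound on $\gamma$ as a ratio $f(y,a)$ (respectively $g(x,a)$) in the substitution $x=y+a$ (respectively $y=x+a$); it then locates the extremal value of this ratio by direct case analysis over small integer values, thereby \emph{deriving} the endpoints $\tfrac13(1+\alpha-2\beta\alpha)$ and $1+\alpha$. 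You instead take the endpoints as given, plug them in, and verify the resulting inequality as a quadratic in $x$ via discriminant analysis.

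Both routes arrive at the same tight cases ($x=2,y=1$ for the lower endpoint, $x=0,y=1$ for the upper), and your factorizations at $k=1$ and $y=1$ recover exactly these. Your method is arguably cleaner for the upper-endpoint case (the $k\ge 2$ discriminant bound is short), but for the lower endpoint you defer the heaviest step---certifying $\Delta(2)\le 0$ and $\Delta'(2)\le 0$ uniformly over $\alpha,\beta\in[0,1]$---to a computation you describe but do not carry out. The paper's ratio-maximization argument, while longer, is fully explicit at this point. If you complete that two-parameter verification (it does go through: for instance $\Delta'(2)=-4A_0+B_0$ is a polynomial in $p=1+\alpha$, $q=\beta\alpha$ that one can check is negative on $[1,2]\times[0,1]$), your proof will be complete and self-contained.
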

\begin{proof}
The inequality is equivalent to
\begin{equation*}
((1+\alpha)x+1)y + \beta\alpha(1-x)x -(2+\alpha) y^2 \leq \gamma (x^2-y^2) \enspace .
\end{equation*}
Assume that $x=y$. The inequality is then trivially satisfied because $x \le x^2$ for all $x \in \mathbb{N}_0$.
Next suppose that $x > y$. Then
\begin{equation*}
\gamma \geq \frac{((1+\alpha)x+1)y + \beta\alpha(1-x)x -(2+\alpha) y^2}{x^2-y^2} \enspace .
\end{equation*}
We show that the maximum of the expression on the right-hand side is
attained by $x = 2$ and $y = 1$. First, we fill in these values and
conclude that for these values,
$\gamma \geq \frac13(1+\alpha-2\beta\alpha) \geq 0$.
We now write $x$ as $y + a, a \geq 1$,
and rewrite the right-hand side as
\begin{equation}\label{eq:lem1rhs1}
f(y,a) = \frac{(1 + \alpha)y + \beta \alpha}{2y + a} + \frac{(1 + \beta\alpha)(y - y^2)}{a(2y + a)} - \beta\alpha \enspace .
\end{equation}
Because we know that there are choices of $x$ and $a$ for which $f(y,
a)$ is positive (e.g., when $y = 1$ and $a = 1$), and because $a$ only
occurs in the denominators, we know that \eqref{eq:lem1rhs1} reaches
its maximum when $a = 1$. So we assume $a = 1$. When we then fill in
$y = 0$, we see that $f(0,1) = 0$, so $f(1,1) \geq f(0,1)$. When $y >
1$ we can write $y$ as $w + 2$, where $w \geq 0$, and we can now
further rewrite $f(y,a)$ as
\begin{equation*}
f(w + 2, 1) = \frac{2\alpha - 6\beta\alpha}{2w+5} - \frac{(2 - \alpha + 5\beta\alpha)w + (1 + \beta\alpha)w^2}{2w + 5} \leq \frac{2\alpha - 6\beta\alpha}{2w+5} \enspace .
\end{equation*}
When $2\alpha - 6\beta\alpha$ is negative, this term is certainly less
than $f(1,1)$. When $2\alpha - 6\beta\alpha$ is positive, we have
\begin{equation*}
f(w + 2, 1) \leq \frac{2\alpha - 6\beta\alpha}{2w+5} \leq \frac{2\alpha - 6\beta\alpha}{5} \leq \frac13(2\alpha - 6\beta\alpha) \leq \frac13(1 + \alpha - 2\beta\alpha) = f(1,1) \enspace .
\end{equation*}
This shows that $\gamma \geq f(1,1) = \frac13(1 + \alpha - 2\beta\alpha)$.

The final case is when $x < y$. Then,
\begin{equation*}
\gamma \leq \frac{(2+\alpha) y^2 - ((1+\alpha)x+1)y - \beta\alpha(1-x)x}{y^2-x^2} \enspace .
\end{equation*}
We show that the minimum of the expression on the right-hand side is
attained by $x = 0$ and $y = 1$. First, we fill in these values and
conclude that for these values, $\gamma \leq 1+\alpha$.
We now write $y$ as $x + a$, $a \geq 1$, and rewrite the
right-hand side as
\begin{equation*}
g(x,a) = \frac{(1 + \beta \alpha)x^2 - (1 + a + (a + \beta)\alpha)x - a}{a(2x + a)} + 2 + \alpha \enspace .
\end{equation*}
Suppose first that $x = 0$ and that $a \geq 2$.
Then we can write $a$ as $1 + b$, $b > 0$, and therefore
\[
f(0, 1 + b) = 2 + \alpha - \frac{1}{1 + b} \geq \frac{3}{2} + \alpha
\geq 1 + \alpha = f(0,1).
\]
When $x \geq 1$, we can write $x$ as $1 + b$, $b \geq 0$. We then have
\begin{equation*}
f(1 + b, a) = 2 + \alpha - \frac{2 + \alpha + (1 - \alpha) b}{2b + 2 + a} + \frac{(1 + \beta\alpha)(b^2 + b)}{a(2b + 2 + a)} \enspace .
\end{equation*}
The last of these terms is positive, hence
\begin{eqnarray*}
f(1 + b, a) & \geq & 2 + \alpha - \frac{2 + \alpha + (1 - \alpha) b}{2b + 2 + a} \geq 2 + \alpha - \frac{2 + 1 + b}{2b + 2 + a} \\
 & \geq & 2 + \alpha - 1 = 1 + \alpha = f(0,1) \enspace.
\end{eqnarray*}
This shows that $\gamma \leq f(0,1) = 1 + \alpha$.
\end{proof}

\noindent Now we can complete the proof of Lemma \ref{lem:calc}.

\begin{proof}[Proof of Lemma~\ref{lem:calc}]
Choose $\beta \in [0,1]$ such that $\minbeta = \beta\maxbeta$.
Using Lemma~\ref{lem:calc2} above, we obtain
\begin{equation*}
((1+ \maxbeta)x + 1)y + \minbeta (1-x) x = ((1+ \maxbeta)x + 1)y + \beta\maxbeta (1-x) x \leq (2+\maxbeta-\gamma) y^2 + \gamma x^2 \enspace ,
\end{equation*}
where $\gamma \in [\frac13 (1+\maxbeta-2\beta\maxbeta), 1+\maxbeta]$.
By choosing $\gamma = \frac13(1+\maxbeta-2\beta\maxbeta)$, we obtain
\begin{equation*}
((1+ \maxbeta)x + 1)y + \minbeta (1-x) x \leq \frac{5+2\maxbeta + 2\beta\maxbeta}{3}y^2 + \frac{1+\maxbeta-2\beta\maxbeta}{3}x^2 \enspace.
\end{equation*}
Substituting $\beta\maxbeta = \minbeta$ yields the claim.
\end{proof}
We remark that the choice of $\gamma$ in the proof above has been made
in order to minimize the expression $\lambda/(1-\mu)$ (which is an
increasing function in $\gamma$).

Lemma \ref{lem:calc} is essentially the part that generalizes the
proof in \cite{caragiannis}, and allows us to complete the proof
of Theorem \ref{thm:cgpoa}.

\begin{proof}[Proof of Theorem~\ref{thm:cgpoa}]
We show that the $\alpha$-altruistic extension $G^\alpha$ of a linear congestion game is $(\frac13(5+2\maxbeta+2\minbeta), \frac{1}{3} (1+\maxbeta-2\minbeta), \altvec)$-smooth.

Let $s$ and $s^*$ be two strategy profiles, and write $x_e = x_e(s), x_e^* = x_e(s^*)$.
The left-hand side of the smoothness condition \eqref{eq:smoothness} is equivalent to
\begin{align*}
& \sum_{i=1}^n \left((1 - \alpha_i)C_i(s_i^*, s_{-i}) + \alpha_i( C(s_i^*, s_{-i}) - C(s)) + \alpha_i C_i(s) \right) \\
& = \sum_{i=1}^n \left((1 - \alpha_i)\Bigg(\sum_{e \in s_i^* \backslash s_i} (x_e+1) + \sum_{e \in s_i \cap s_i^*} x_e\Bigg)
 + \alpha_i \left(\sum_{e \in s_i^* \setminus s_i} (2x_e + 1) + \sum_{e \in s_i \backslash s_i^*} (1 - 2x_e)\right) + \alpha_i C_i(s) \right) \\
 & \leq
 \sum_{i=1}^n \left(\sum_{e \in s_i^*} ((1 + \alpha_i)x_e + 1) + \alpha_i \sum_{e \in s_i} (1-x_e) \right) \\
 & \leq \sum_{e \in E} \left(((1 + \maxbeta)x_e + 1) x^*_e + \minbeta(1-x_e) x_e \right) \enspace .
\end{align*}
In the above derivation, the first inequality follows from the fact that
$(1 - \alpha_i)x_e \leq (1 + \alpha_i)x_e + 1 + \alpha_i(1 - 2x_e)$ for every $e \in s_i \cap s^*_i$. Therefore, it is possible to simply replace all the $(1 - \alpha_i)x_e$ (in the third summation operator of the left hand side of the first inequality) by $(1 + \alpha_i)x_e + 1 + \alpha_i(1 - 2x_e)$, write $C_i(s)$ as $\sum_{e \in s_i} x_e$, and finally rewrite the resulting expression into the form of the right hand side of the first inequality.
The second inequality holds because for every $i \in N$ and $e \in
s_i$, $1-x_e \le 0$ and by the definition of $\maxbeta$ and
$\minbeta$.
The bound on the robust price of anarchy now follows from Lemma~\ref{lem:calc}.
\end{proof}

The following is a simple example that shows that the bound of
$\frac{5+4\alpha}{2+\alpha}$ on the robust price of anarchy for uniformly
$\alpha$-altruistic linear congestion games is tight, even for pure
Nash equilibria.
It slightly improves the lower bound example of \cite{caragiannis}, because it is simpler and it shows tightness of the bound not only asymptotically.

\begin{example}\label{lem:congestiongamelb}
Consider a game with
six resources $E = E_1 \cup E_2$, $E_1 = \{h_0, h_1, h_2\}$,
$E_2 = \{g_0, g_1, g_2\}$
and three $\alpha$-altruistic players.
The delay functions are given by $d_e(x)=(1+\alpha)x$ for $e \in E_1$,
and $d_e(x) = x$ for $e \in E_2$.
Each player $i$ has two pure strategies: $\{h_{i-1},g_{i-1}\}$ and
$\{h_{(i-2)\text{ (mod 3)}},h_{i\text{ (mod 3)}},g_{i\text{ (mod 3)}}\}$.
The strategy profile in which every player selects his first strategy
is a social optimum of cost $(1+\alpha) \cdot 3+3=(2+\alpha) \cdot 3$.

Consider the strategy profile $s$ in which every player chooses his
second strategy. We argue that $s$ is a Nash equilibrium.
Each player's perceived individual cost is
$c_1 = (1-\alpha)(4(1+\alpha)+1)+\alpha(5+4\alpha) \cdot 3$, whereas
if a player unilaterally deviates to his first strategy, the new
social cost would become $(5+4\alpha) \cdot 3+1-\alpha$.
Thus, the player's new perceived individual cost is
$c_2 = (1-\alpha)(3(1+\alpha)+2)+\alpha((5+4\alpha) \cdot 3+1-\alpha)$.
Because $c_1=c_2$, $s$ is a Nash equilibrium, of cost
$4(1+\alpha)\cdot 3+3=(5+4\alpha) \cdot 3$.
We conclude that the price of anarchy is at least $\frac{5+4\alpha}{2+\alpha}$ for $\alpha \in [0,1]$.
\end{example}

We turn to the pure price of stability of $\alpha$-altruistic congestion games.
Again, an upper bound on the pure price of stability extends to the
mixed, correlated and coarse price of stability.

\begin{proposition}\label{prop:posacg}
The pure price of stability of uniformly $\alpha$-altruistic linear
congestion games is at most $\frac{2}{1+\alpha}$.
\end{proposition}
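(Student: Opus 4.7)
The plan is to mirror the potential-function argument used in the proof of Proposition~\ref{pro:poscsg}. First I would identify an exact potential function for the uniformly $\alpha$-altruistic linear congestion game. Since Rosenthal's potential $\Phi(s) = \sum_{e \in E} \sum_{k=1}^{x_e(s)} d_e(k)$ is an exact potential for the direct cost functions $C_i$, and $C$ is trivially an exact potential for the fully altruistic game in which every player's perceived cost equals $C(s)$, a convex combination of the two is an exact potential for the convex combination of the cost functions. Hence
\begin{equation*}
\Phi^\alpha(s) \;=\; (1-\alpha)\,\Phi(s) \,+\, \alpha\, C(s)
\end{equation*}
is an exact potential for $G^\alpha$, and any global minimizer of $\Phi^\alpha$ (which exists since $\Sigma$ is finite) is automatically a pure Nash equilibrium of $G^\alpha$.

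Next I would relate $\Phi^\alpha$ to the social cost $C$. For linear delays $d_e(x)=a_e x + b_e$ with $a_e,b_e\ge 0$, the standard elementary estimates $\frac{x_e(x_e+1)}{2}\le x_e^2 \le x_e(x_e+1)$ (trivial when $x_e=0$, routine when $x_e\ge 1$) yield the well-known sandwiching
\begin{equation*}
\tfrac{1}{2}\, C(s) \;\le\; \Phi(s) \;\le\; C(s).
\end{equation*}
Taking the convex combination with $C(s)$ gives the desired two-sided bound
\begin{equation*}
\tfrac{1+\alpha}{2}\, C(s) \;\le\; \Phi^\alpha(s) \;\le\; C(s)
\end{equation*}
for every strategy profile $s$.

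Finally, let $s$ be a minimizer of $\Phi^\alpha$ and let $s^*$ minimize the social cost $C$. Since $s$ is a pure Nash equilibrium of $G^\alpha$, chaining the two-sided bound with the inequality $\Phi^\alpha(s)\le \Phi^\alpha(s^*)$ yields
\begin{equation*}
\tfrac{1+\alpha}{2}\, C(s) \;\le\; \Phi^\alpha(s) \;\le\; \Phi^\alpha(s^*) \;\le\; C(s^*),
\end{equation*}
so $C(s)/C(s^*) \le 2/(1+\alpha)$, proving the claim.

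There is no real obstacle here: the entire argument is the classical Rosenthal-potential proof of the PoS upper bound, the only twist being to observe that linear combinations of exact potentials are exact potentials (so the altruistic game is again a potential game) and that the resulting potential still satisfies a ``stretch factor'' of at most $2/(1+\alpha)$ rather than $2$. In particular, since the upper bound on $\Phi^\alpha$ in terms of $C$ is unchanged by the altruistic term (both summands are bounded by $C(s)$), the improvement comes entirely from the lower bound, where the altruistic component contributes $\alpha C(s)$ directly.
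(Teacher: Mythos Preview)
Your proof is correct and follows essentially the same approach as the paper: both identify the exact potential $\Phi^\alpha(s)=(1-\alpha)\Phi(s)+\alpha C(s)$, establish the sandwiching $\tfrac{1+\alpha}{2}C(s)\le\Phi^\alpha(s)\le C(s)$, and then run the standard potential-minimizer argument. The only cosmetic difference is that the paper does the computation after reducing to the case $d_e(x)=x$ (relying on the earlier WLOG lemma), whereas you treat general linear delays $d_e(x)=a_ex+b_e$ directly via the inequality $\tfrac12 C(s)\le\Phi(s)\le C(s)$; this is not a substantive divergence.
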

\begin{proof}
Let $G^\alpha$ be a uniformly $\alpha$-altruistic extension of a linear congestion game. It is not hard to verify that $G^\alpha$ is an exact potential game with potential function $\Phi^\alpha(s) = (1-\alpha) \Phi(s) + \alpha C(s)$, where $\Phi(s) = \sum_{e \in E} \sum_{i = 1}^{x_e(s)} i$ is Rosenthal's potential function.
Observe that
\begin{eqnarray*}
\Phi^\alpha(s) & = & (1-\alpha) \sum_{e \in E} \sum_{i=1}^{x_e(s)} i + \alpha C(s) = \frac{1-\alpha}{2} \sum_{e \in E} (x^2_e(s) + x_e(s)) + \alpha \sum_{e \in E} x_e^2(s) \\
& = & \frac{1+\alpha}{2} C(s) + \frac{1-\alpha}{2} \sum_{e \in E} x_e(s) \enspace .
\end{eqnarray*}
We therefore have $\frac{1+\alpha}{2} C(s) \le \Phi^\alpha(s) \le C(s)$.
The claim now follows by using similar arguments as in 
which proves the claim.
\end{proof}

\section{Symmetric Singleton Congestion Games}

\emph{Symmetric singleton congestion games}
are an important special case of congestion games.
They are defined as $G = (N, E, \{\Sigma_i
\}_{i \in N}, \{ d_e\}_{e \in E})$: every player chooses one
facility (also called \emph{edge}) from $E = \{1,...,m\}$, and all strategy
sets are identical, i.e., $\Sigma_i = E$ for every $i$.
We refer to these games simply as \emph{singleton congestion games} below.
In \emph{singleton linear congestion games}, the focus here, delay
functions are also assumed to be linear, of the form $d_e(x) = a_e x + b_e$.

\subsection{\protect Uniform Altruism}

Caragiannis et al.~\cite{caragiannis} prove the following theorem
(stated using the transformation from Remark~\ref{rem:rel-Car}). It
shows that the pure price of anarchy does not always increase with the
altruism level; the relationship between $\alpha$ and the price of anarchy is thus
rather subtle.

\begin{theorem}[Caragiannis et al.~\cite{caragiannis}] \label{thm:poa-sscg}
The pure price of anarchy of uniformly $\alpha$-altruistic singleton linear congestion games is $\frac{4}{3+\alpha}$.
\end{theorem}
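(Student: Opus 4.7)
Because the paper subsequently shows the mixed PoA of symmetric singleton linear congestion games can approach~$2$ regardless of the altruism level, the bound $\tfrac{4}{3+\alpha} \le \tfrac{4}{3}$ cannot follow from $(\lambda,\mu,\altvec)$-smoothness---any such argument would equally upper-bound the coarse PoA. The proof must therefore exploit the pure-Nash structure directly. My first step is to specialize the altruistic Nash condition to the linear singleton setting. Expanding $C^\altvec_i(s) \leq C^\altvec_i(e,s_{-i})$ for a player $i$ on edge $s_i$ considering a deviation to edge $e$, and using the explicit one-player-switch identity
\[ C(e,s_{-i}) - C(s) \;=\; a_e(2 x_e + 1) + b_e - a_{s_i}(2 x_{s_i} - 1) - b_{s_i} \]
(available since delays are linear), the altruistic Nash condition simplifies to
\[ a_{s_i}\bigl[(1+\alpha)x_{s_i} - \alpha\bigr] + b_{s_i} \;\le\; a_e\bigl[(1+\alpha)x_e + 1\bigr] + b_e, \]
where $x_e = x_e(s)$. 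This interpolates between the standard Nash condition ($\alpha = 0$) and the marginal-cost equilibrium characterizing social optima ($\alpha = 1$), already suggesting that the PoA should decrease in $\alpha$.

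I then apply this inequality to each player $i$ with deviation target $e = s_i^*$ and sum over $i$. Using $\sum_i g(s_i) = \sum_e x_e\, g(e)$ and similarly for $s^*$, the summed inequality becomes the aggregate
\[ (1+\alpha) \sum_e a_e x_e^2 + \sum_e b_e x_e - \alpha \sum_e a_e x_e \;\le\; (1+\alpha) \sum_e a_e x_e x_e^* + \sum_e a_e x_e^* + \sum_e b_e x_e^*, \]
with $x_e^* = x_e(s^*)$. From here the proof reduces to an integer-specific real inequality: for non-negative integers $x, y$ and $\alpha \in [0,1]$, a bound of the form
\[ (1+\alpha) xy + y - \alpha x \;\le\; \lambda\bigl[(1+\alpha) y^2 + y\bigr] + \mu\bigl[(1+\alpha) x^2 - \alpha x\bigr], \]
for constants $\lambda, \mu$ with $\lambda/(1-\mu) = 4/(3+\alpha)$. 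Multiplying by $a_e$, summing across edges, and handling the $b_e$-terms analogously then yields $C(s) \leq \tfrac{4}{3+\alpha} C(s^*)$.

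The main obstacle is establishing this pointwise inequality with the correct constants. At $\alpha = 0$ it must reduce to the inequality underlying L\"ucking et al.'s $\tfrac{4}{3}$ bound; at $\alpha = 1$ it degenerates into the marginal-cost identity. For interpolating $\alpha$, I anticipate a case analysis on small integer values of $x$ and $y$ where the bound is tight, along with a careful tracking of the coupling between the $a_e$ and $b_e$ contributions. Crucially, improving on the general bound $\tfrac{5+4\alpha}{2+\alpha}$ of Theorem~\ref{thm:cgpoa} requires exploiting both integrality of loads and singleton symmetry (implicit through $\sum_e x_e = \sum_e x_e^* = n$). For the matching lower bound, I would construct a symmetric-singleton family in which the altruistic Nash condition is binding at every active edge; a two-edge Pigou-style instance is insufficient, yielding only $\tfrac{4(1+\alpha+\alpha^2)}{3(1+\alpha)^2} < \tfrac{4}{3+\alpha}$ for $0 < \alpha < 1$, so the tight construction is more intricate than in the $\alpha = 0$ case and likely involves multiple parallel edges with delays tuned to $\alpha$ and $n$.
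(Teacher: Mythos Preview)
Your derivation of the altruistic Nash condition in singleton form is correct and matches the paper's: for a player on edge $e$ considering $\bar e$,
\[(1+\alpha)a_e x_e + b_e - \alpha a_e \;\le\; (1+\alpha)a_{\bar e} x_{\bar e} + b_{\bar e} + a_{\bar e}.\]
From there, however, your plan has a real gap. You propose to sum each player's Nash inequality with deviation target $s_i^*$ and then establish a per-edge inequality of the shape $(\text{cross term}) \le \lambda\cdot(\text{opt term}) + \mu\cdot(\text{Nash term})$ with $\lambda/(1-\mu)=4/(3+\alpha)$. But as your own opening paragraph argues, any inequality that holds pointwise for every edge and every integer pair $(x,y)$ is a smoothness argument and would equally bound the mixed PoA, contradicting Proposition~\ref{prop:mixedpoalb}. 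You try to rescue this by invoking the global identity $\sum_e x_e = \sum_e x_e^*$, yet you give no mechanism for how that constraint enters an otherwise per-edge calculation; that mechanism is precisely the missing idea.

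The paper's (omitted alternative) proof does not sum Nash inequalities toward $s_i^*$. It partitions edges by the sign of $\Delta_e = x_e - x_e^*$ into $E^+$ and $E^-$ and applies the Nash condition \emph{across} these sets: every $e\in E^+$ carries at least one player (since $x_e > x_e^* \ge 0$), who may deviate to any $\bar e\in E^-$, yielding
\[\max_{e\in E^+}\bigl\{(1+\alpha)a_e x_e + b_e - \alpha a_e\bigr\} \;\le\; \min_{\bar e\in E^-}\bigl\{(1+\alpha)a_{\bar e}x_{\bar e} + b_{\bar e} + a_{\bar e}\bigr\}.\]
Combining this with $\sum_{E^+}\Delta_e = -\sum_{E^-}\Delta_e$ and the elementary bound $xy \le \tfrac14(x+y)^2$ gives $C(s) \le C(s^*) + \tfrac14(1-\alpha)C(s)$ directly. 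Symmetry is used here not through the counting identity you cite, but through the freedom to route a player from any overloaded edge to any underloaded one.

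Your lower-bound assessment is also incorrect. The two-player, two-edge instance with $d_1(x)=x$ and $d_2(x)=2+\alpha$ already achieves the bound: both players on edge~$1$ is an $\alpha$-altruistic Nash equilibrium (perceived cost $2+2\alpha$ whether or not a player deviates), with $C(s)=4$ and $C(s^*)=3+\alpha$. No multi-edge construction is needed.
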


\Omit{
A proof is given in \cite{caragiannis}. We provide an alternative proof.
\begin{proof}
Let $s$ be a pure Nash
equilibrium of $G^\alpha$ and $s^*$ an optimal strategy
profile. We write $x_e = x_e(s)$ and $x^*_e = x_e(s^*)$.
For every edge $e \in E$, define $\Delta_e = x_e - x^*_e$.
Let $E^+$ and $E^-$ be the set of edges with $\Delta_e > 0$ and $\Delta_e
< 0$, respectively. Define
$\Delta = \sum_{e \in E^+} \Delta_e > 0$.
Because $s$ and $s^*$ assign the same number of players to edges,
$\Delta =
\sum_{e \in E^+} \Delta_e = - \sum_{e \in E^-} \Delta_e$.
If $\Delta = 0$, then the price of anarchy is 1. Hence, we assume that $\Delta >
0$, in which case both $E^+$ and $E^-$ are non-empty.

By definition, $x_e > x^*_e \ge 0$ for every edge $e \in E^+$. Because $s$
is a Nash equilibrium of $G^\alpha$, we have for every edge $e \in E^+$
and $\bar{e} \in E$ that

\begin{align*}
& (1-\alpha)(a_{e} x_{e} + b_{e}) + \alpha ( (a_{e} x^2_{e} + b_{e}x_{e}) + (a_{\bar e} x^2_{\bar e} + b_{\bar e} x_{\bar e})) \\
& \leq (1-\alpha)(a_{\bar e}(x_{\bar e}+1)+ b_{\bar e}) \\
& \qquad + \alpha \left((a_{e} (x_{e}-1)^2 + b_{e} (x_{e}-1)) + (a_{\bar e}(x_{\bar e}+1)^2+b_{\bar e}(x_{\bar e}+1))\right) \enspace ,
\end{align*}
which is equivalent to
\begin{equation}
(1+\alpha)a_e x_e +b_e -\alpha a_e \leq (1+\alpha)a_{\bar e}x_{\bar e}+b_{\bar e}+ a_{\bar e} \enspace. \label{eq:nash}
\end{equation}%
We can use this relation in order to show that
\begin{align}
& \sum_{e\in E^+}\Delta_e((1+\alpha)a_e x^*_e+b_e+a_e\Delta_e) + \sum_{e\in E^-}\Delta_e((1+\alpha)a_e x^*_e+b_e+\alpha a_e\Delta_e) \notag \\
& \quad = \sum_{e\in E^+}\Delta_e((1+\alpha)a_e x_e+b_e-\alpha a_e\Delta_e) + \sum_{e\in E^-}\Delta_e((1+\alpha)a_e x_e+b_e-a_e\Delta_e) \notag \\
& \quad \leq \sum_{e\in E^+}\Delta_e((1+\alpha)a_e x_e+b_e-\alpha a_e) + \sum_{e\in E^-}\Delta_e((1+\alpha)a_e x_e+b_e+ a_e) \notag \\
& \quad \leq \Delta \left(\max_{e\in E^+}\{(1+\alpha)a_e x_e+b_e-\alpha a_e\} - \min_{e\in E^-}\{(1+\alpha)a_e x_e+b_e+ a_e\} \right) \notag \\
& \quad \leq 0 \enspace . \label{eq:rel}
\end{align}
The first inequality follows from the definition of $\Delta_e$ and
because $\Delta_e \ge 1$ for every $e \in E^+$ and $\Delta_e \le -1$
for every $e \in E^-$; the last inequality follows from \eqref{eq:nash}.
Thus,
\begin{eqnarray*}
C(s) & = & \sum_{e \in E}(x^*_e  +\Delta_e)(a_e(x^*_e + \Delta_e)+b_e) \\
& = & \sum_{e \in E} (a_e x^{*2}_e+b_e x^*_e) +\sum_{e\in E^+}\Delta_e(2a_e x^*_e+b_e+a_e\Delta_e) \\
& & \qquad + \sum_{e\in E^-}\Delta_e(2a_e x^*_e+b_e+a_e\Delta_e) \\
& \leq & C(s^*) + (1-\alpha) \left(\sum_{e \in E^+} \Delta_e a_e x^*_e + \sum_{e \in E^-}  \Delta_e a_e x_e \right) \\
& \leq & C(s^*)+{\textstyle\frac14} (1-\alpha)\sum_{e\in E^+}a_e (x^*_e + \Delta_e)^2 \\
& \leq & C(s^*)+{\textstyle\frac14}(1-\alpha) C(s) \enspace .
\end{eqnarray*}
The first inequality holds because of \eqref{eq:rel}.
The second inequality uses that $xy\leq\frac14(x+y)^2$ for arbitrary
real numbers $x, y$ and that
$\Delta_e a_e x_e \le 0$ for every $e \in E^-$. Hence, the
pure price of anarchy is at most $4/(3+\alpha)$.

To see that this bound is tight, consider the $\alpha$-altruistic
extension of a congestion game with two players and two edges $E =
\{1, 2\}$ with delay functions $d_1(x) = x$ and $d_2(x) = 2+\alpha$. If
the players use different edges, we obtain an optimal strategy profile
of cost $3+\alpha$. If both players use edge 1, we obtain a Nash
equilibrium of cost $4$.
\end{proof}
}

We show that even the mixed price of anarchy (and thus also the robust price of anarchy) will be at least
2 regardless of the altruism levels of the players, by generalizing a
result of L\"{u}cking et al.~\cite[Theorem~5.4]{luecking}.
This implies that the benefits of higher altruism in singleton
congestion games are only reaped in pure Nash equilibria,
and the gap between the pure and mixed price of anarchy increases in $\alpha$.
Also it shows that singleton congestion games constitute a class of games for which the smoothness argument cannot deliver tight bounds.

\begin{proposition}\label{prop:mixedpoalb}
For every $\alpha \in [0,1]^n$, the mixed price of anarchy for
$\alpha$-altruistic singleton linear congestion games is at least $2$.
\end{proposition}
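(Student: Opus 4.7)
The plan is to generalize the classical example of L\"{u}cking et al.~to the altruistic setting by exhibiting, for each $n$, an instance of a singleton linear congestion game together with a mixed Nash equilibrium of the $\altvec$-altruistic extension whose expected social cost is at least $(2-1/n)$ times the optimum; letting $n \to \infty$ then yields the claim. Concretely, I would take $n$ players and $n$ identical resources $E = \{1,\ldots,n\}$, with delay function $d_e(x) = x$ for every $e \in E$, and consider the symmetric mixed strategy profile $\sigma$ in which every player picks each resource with probability $1/n$ independently.

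First I would bound the social optimum: assigning each player to a distinct resource yields $C(s^*) = n$. Next I would verify that $\sigma$ is a mixed Nash equilibrium of $G^{\altvec}$ for every $\altvec \in [0,1]^n$. The key point is a symmetry argument: by the invariance of the instance and of $\sigma_{-i}$ under permutations of $E$, the conditional expectations $\Expect{C_i(e, s_{-i})}$ and $\Expect{C(e, s_{-i})}$ (with $s_{-i} \sim \sigma_{-i}$) do not depend on the choice of resource $e \in E$. Hence the perceived cost $(1-\alpha_i)\Expect{C_i(e,s_{-i})} + \alpha_i \Expect{C(e,s_{-i})}$ is constant in $e$, so every pure strategy (and thus every mixture) is a best response for player $i$ against $\sigma_{-i}$, regardless of $\alpha_i$.

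Then I would compute the expected social cost under $\sigma$. Since $x_e(s)$ is Binomial$(n,1/n)$ with mean $1$ and variance $1 - 1/n$, we get $\Expect{x_e(s)^2} = 2 - 1/n$, so
\begin{equation*}
\Expect[s \sim \sigma]{C(s)} = \sum_{e \in E} \Expect{x_e(s)^2} = n\bigl(2 - 1/n\bigr) = 2n - 1.
\end{equation*}
Thus the ratio $\Expect{C(s)}/C(s^*) = (2n-1)/n$ tends to $2$ as $n \to \infty$, which establishes the lower bound.

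The main obstacle is very mild: it is simply checking that the uniform profile is a Nash equilibrium of the altruistic game, which would be purely routine without altruism but needs the observation that \emph{both} the expected individual cost and the expected social cost seen by player $i$ are independent of $i$'s pure deviation $e$, so that any convex combination of the two (i.e., any $\alpha_i \in [0,1]$) still makes all pure actions best responses. Once this symmetry point is made, the calculation of $2n - 1$ is immediate from standard moments of the Binomial distribution, and the proposition follows.
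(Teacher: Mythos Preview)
Your proposal is correct and follows essentially the same approach as the paper: both use the $n$-player, $n$-resource instance with identity delay functions and the fully mixed uniform profile, and both compute the expected social cost $2n-1$ against the optimum $n$. The only (cosmetic) difference is in the equilibrium verification---you invoke the permutation symmetry of the instance to conclude directly that $\Expect{C_i(e,s_{-i})}$ and $\Expect{C(e,s_{-i})}$ are independent of $e$ (hence every pure action is a best response for any $\alpha_i$), whereas the paper explicitly computes $\Expect{C(s_1^*,s_{-1})}=2m-1$; your argument is slightly more economical but the substance is identical.
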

\begin{proof}
Let $m \geq 2$ and consider the instance with player set
$\{1, \ldots, m\}$ and facility set $\{1, \ldots, m\}$,
with $d_e(x) = x$ ($a_e = 1$ and $b_e = 0$) for each facility $e$.
Denote by $s$ the mixed strategy where each player chooses each link
with probability $1/m$. When $\alpha_i = 0$ for every player, $s$
is a mixed Nash equilibrium, and $\mathbf{E}[C(s)] = 2m - 1$ as proved
in \cite{luecking}. The optimum is clearly $m$, so the price of
anarchy of this instance is $2 - 1/m$.

All that is left to show is that $s$ is also a Nash equilibrium under
arbitrary altruism levels.
By symmetry, it suffices to show that the expected cost of player $1$
increases if he deviates to the strategy where he chooses facility $1$
with probability 1. Let $s_1^* = 1$. We have
\begin{eqnarray*}
\mathbf{E}[C_1^{\alpha}(s_1^*, s_{-1})] & = & \mathbf{E}[(1-\alpha_1) C_1(s_1^*, s_{-1}) + \alpha_1 C(s_1^*, s_{-1})] \\
 & = & (1-\alpha_1) \mathbf{E}[C_1(s_1^*, s_{-1})] + \alpha_1 \mathbf{E}[C(s_1^*, s_{-1})] .
\end{eqnarray*}
We already know that $\mathbf{E}[C_1(s_1^*, s_{-1})] \geq
\mathbf{E}[C_1(s)]$ because $s$ is a Nash equilibrium when the players
are completely selfish, so we are done when we show
$\mathbf{E}[C(s_1^*, s_{-1})] \geq \mathbf{E}[C(s)] = 2m - 1$.

For an arbitrary pure strategy profile $s$, let $X_{i,e}(s')$ be the
indicator function that maps to $1$ if player $i$ chooses facility $e$
under $s'$, and $0$ otherwise. Then it is clear that
$C_i(s') = \sum_{e=1}^m X_{i,e}(s')d_e(s')$ for
$i = 1, \ldots, m$, and
$d_e(s) = \sum_{i=1}^m X_{i,e}(s')$ for $e = 1, \ldots, m$.
So $C_i(s') = \sum_{e,j=1}^m X_{i,e}(s')X_{j,e}(s')$.
Using this last identity, along with symmetry, independence, and
linearity of  expectation, the following derivation is easily made
(letting $s' = (s_1^*, s_{-1}$)):
\begin{eqnarray*}
\mathbf{E}[C(s_1^*, s_{-1})] & = & \sum_{i = 1}^m \mathbf{E}[C_i(s')] \\
& = & \mathbf{E}[C_1(s')] + (m-1)\mathbf{E}[C_2(s')] \\
& = & \mathbf{E}[d_1(s')] + (m-1)\sum_{e,j=1}^m \mathbf{E}[X_{2,e}(s')X_{j,e}(s')] \\
& = & \sum_{i=1}^m \mathbf{E}[X_{i,1}(s')] + (m-1)\left(\sum_{j=1}^m \mathbf{E}[X_{2,1}(s')X_{j,1}(s')]
+ (m-1)\sum_{j = 1}^m \mathbf{E}[X_{2,2}(s')X_{j,2}(s')]\right) \\
& = & \left(1 + (m-1)\frac{1}{m}\right) + (m-1)\left(\frac{1}{m} + \frac{1}{m} + (m-2)\frac{1}{m^2}
+ (m-1)\left(0 + \frac{1}{m} + (m-2)\frac{1}{m^2}\right)\right) \\
& = & 2m -1.
\end{eqnarray*}
\end{proof}

\subsection{Non-Uniform Altruism}

We analyze
the case when all altruism levels are in $\SET{0,1}$, i.e., each
player is either completely altruistic or completely
selfish.\footnote{This model relates naturally to \emph{Stackelberg
    scheduling games} (see, e.g., \cite{chenkempe}).}
Then, the system is entirely characterized by the fraction $\alpha$ of
altruistic players (which coincides with the average altruism level).
The next theorem shows that in this case, too, the pure price of anarchy \emph{improves} with the overall altruism level.

\begin{theorem}\label{thm:0-1bound}
Assume that an $\alpha$ fraction of the players are completely
altruistic, and the remaining $(1-\alpha)$ fraction are completely
selfish.
Then, the pure price of anarchy of the altruistic singleton linear congestion game
is at most $\frac{4-2\alpha}{3-\alpha}$.
\end{theorem}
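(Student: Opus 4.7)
My plan follows the proof of Theorem~\ref{thm:poa-sscg} (the uniform-altruism bound $\frac{4}{3+\alpha}$) and adapts it to the mixed $\{0,1\}$ setting. Let $s$ be a pure Nash equilibrium of the altruistic game and $s^*$ a minimizer of $C$; set $\Delta_e = x_e(s) - x_e(s^*)$, $E^+ = \{e : \Delta_e > 0\}$, $E^- = \{e : \Delta_e < 0\}$, and $\Delta = \sum_{e \in E^+}\Delta_e$. For each edge $e$ with $x_e > 0$, let $x_e^A$ and $x_e^S$ count the altruistic and selfish players on $e$ under $s$, and define the \emph{effective altruism} $\alpha_e = x_e^A/x_e \in [0,1]$; the global identity $\sum_e \alpha_e x_e = \alpha n$ encodes the total altruism budget.

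My first step is to take, on each edge, the $(x_e^S/x_e)$-weighted selfish-Nash condition together with the $(x_e^A/x_e)$-weighted altruistic-Nash condition. The resulting convex combination yields, for any target $\bar{e}$,
\[
(1+\alpha_e)\,a_e x_e + b_e - \alpha_e a_e \;\le\; (1+\alpha_e)\,a_{\bar{e}} x_{\bar{e}} + a_{\bar{e}} + b_{\bar{e}},
\]
which is exactly the form of the uniform-$\alpha$ Nash inequality of Theorem~\ref{thm:poa-sscg}, but with $\alpha$ replaced by the per-edge $\alpha_e$. I would then reproduce the uniform argument: multiply the inequality for each pair $(e,\bar{e}) \in E^+ \times E^-$ by $\Delta_e(-\Delta_{\bar{e}})/\Delta$ and sum, introducing the weighted average $\beta = \sum_{e \in E^+}\alpha_e \Delta_e/\Delta$. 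After rewriting via $x_e = x_e^* + \Delta_e$, subtracting the resulting master inequality from $C(s) - C(s^*) = 2\sum_e a_e x_e^* \Delta_e + \sum_e a_e \Delta_e^2 + \sum_e b_e \Delta_e$, discarding the non-positive $E^-$ contribution, and applying AM-GM $x_e^*\Delta_e \le x_e^2/4$ on $E^+$, one obtains
\[
C(s) - C(s^*) \;\le\; \tfrac14 \sum_{e \in E^+}(1-\alpha_e)\,a_e x_e^2 \;=\; \tfrac14 \sum_{e \in E^+} a_e x_e\,x_e^S.
\]

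The hardest part is the closing step: bounding $\sum_{e \in E^+} a_e x_e x_e^S$ by $\frac{2(1-\alpha)}{2-\alpha}\,C(s)$, which after rearrangement delivers exactly $C(s) \le \frac{4-2\alpha}{3-\alpha}\,C(s^*)$. The obstacle is that altruistic players may cluster on $E^-$, leaving $\alpha_e$ small on $E^+$, so the global budget $\sum_e \alpha_e x_e = \alpha n$ alone does not suffice. To close this gap, I would invoke the altruistic-Nash conditions at those $E^-$ edges occupied by altruistic players: each such player cannot decrease $C$ by migrating to an $E^+$ edge, which lower-bounds the $E^+$ delays in terms of the delays they currently experience on $E^-$. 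Combining these lower bounds with the selfish-Nash conditions on $E^+$ (applied per selfish player against its $s^*$-target) via a weighted AM-GM whose parameter is tuned to $\alpha$ yields the coefficient $\frac{2(1-\alpha)}{2-\alpha}$ and completes the proof.
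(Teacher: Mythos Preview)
Your reduction to the inequality
\[
C(s)-C(s^*)\;\le\;\tfrac14\sum_{e\in E^+}(1-\alpha_e)\,a_e x_e^2\;=\;\tfrac14\sum_{e\in E^+}a_e x_e\,x_e^S
\]
is correct, and it is a genuinely different route from the paper's. The paper does not adapt the uniform-$\alpha$ argument at all; instead it partitions the edges into $E_1$ (those carrying at least one altruistic player) and $E_0=E\setminus E_1$, proves that some optimal profile satisfies $x_e^*\ge x_e$ on all of $E_1$ (Lemma~\ref{lem:dom}), applies the selfish $4/3$ bound as a black box to the induced game on the $E_0$-side (Lemma~\ref{lem:decomp}), and finally bounds the fraction $\gamma$ of the equilibrium cost sitting on $E_0$ by $\tfrac{2(1-\alpha)}{2-\alpha}$ (Lemma~\ref{lem:gamma}). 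Your approach instead re-runs the uniform proof with the per-edge parameter $\alpha_e$ and therefore does the work of the $4/3$ bound inline; the paper's decomposition, on the other hand, yields the structural byproduct that on altruist-occupied edges the equilibrium load is already dominated by an optimum (and hence PoA~$=1$ when $\alpha=1$, even for semi-convex delays).

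The gap is in your closing step. The device you describe --- invoking the \emph{altruistic} Nash conditions on $E^-$ to lower-bound the $E^+$ delays, then combining with selfish-Nash on $E^+$ against $s^*$-targets via a tuned AM--GM --- is not the mechanism that produces the coefficient $\tfrac{2(1-\alpha)}{2-\alpha}$, and it is not clear how to make it work (you are trying to \emph{upper}-bound a sum of $E^+$ delays, so lower bounds on those delays go the wrong way). What actually closes the argument is much simpler and is precisely the content of the paper's Lemma~\ref{lem:gamma}: for any selfish player $i$ on edge $e$ and any altruistic player $j$ on edge $\bar e$, the \emph{selfish} Nash condition for $i$ gives
\[
C_i(s)=a_e x_e+b_e\;\le\;a_{\bar e}(x_{\bar e}+1)+b_{\bar e}\;\le\;2(a_{\bar e}x_{\bar e}+b_{\bar e})=2C_j(s).
\]
Since there are $(1-\alpha)n$ selfish and $\alpha n$ altruistic players, summing yields $\sum_{i\text{ selfish}}C_i(s)\le \tfrac{2(1-\alpha)}{2-\alpha}\,C(s)$. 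As $\sum_{e\in E^+}a_e x_e\,x_e^S\le\sum_e(a_ex_e+b_e)x_e^S=\sum_{i\text{ selfish}}C_i(s)$, this delivers your needed bound without any use of the altruistic Nash condition in the closing step. So the two proofs in fact share this final counting step; the difference lies entirely in how one reaches the point where it can be applied.
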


Let $s$ be a pure Nash equilibrium of
$G^\altvec$ and $s^*$ an optimal strategy profile.
Again, let $x_e = x_e(s)$ and $x^*_e = x_e(s^*)$.
Based on the strategy profile $s$, we partition the edges in $E$ into
sets $E_0, E_1$:
\begin{equation*}
E_1 = \{ e \in E: \text{$\exists i \in N$ with $\alpha_i = 1$ and $s_i = \{e\}$}\} \enspace ,
\end{equation*}
is the set of edges having at least one altruistic player, while
$E_0 = E \setminus E_1$ is the set of edges that are used exclusively by selfish
players or not used at all. Let $N_1$ and $N_0$ refer to the respective player sets that
are assigned to $E_1$ and $E_0$.
$N_1$ may contain both altruistic and selfish players, while $N_0$
consists of selfish players only.
Let $k _1= \sum_{e \in E_1} x_e$ and $k_0 = n-k_1$ denote the number
of players in $N_1$ and $N_0$, respectively.

The high-level approach of our proof is as follows: We split the total
cost $C(s)$ of the pure Nash equilibrium into $C(s) = \gamma C(s) +
(1-\gamma) C(s)$ for some $\gamma \in [0,1]$ such that $\gamma C(s) =
\sum_{e \in E_0} x_e d_e(x_e)$ and $(1-\gamma)C(s) = \sum_{e \in E_1}
x_e d_e(x_e)$. We bound these two contributions separately to
show that
\begin{equation}\label{eq:comb}
{\textstyle\frac34} \gamma C(s) + (1-\gamma)C(s) \le C(s^*) \enspace .
\end{equation}
The pure price of anarchy is therefore at most $(\frac34 \gamma +
(1-\gamma))^{-1} = \frac{4}{4-\gamma}$. The bound then follows by deriving an upper bound on
$\gamma$ in Lemma \ref{lem:gamma}.

\begin{lemma}\label{lem:dom}
Let $s$ be a pure Nash equilibrium and assume that the delay functions $(d_e)_{e \in E}$ are
semi-convex. Then there is an optimal strategy profile $s^*$ such that
$x_e(s) \le x_e(s^*)$ for every edge $e \in E_1$.
\end{lemma}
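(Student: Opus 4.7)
The plan is to start from an arbitrary optimal strategy profile $s^*$ and repeatedly transform it through load-preserving swaps, each of which maintains optimality, until the desired domination property on $E_1$ holds. To track progress I would use the potential function
\begin{equation*}
\Phi(s^*) \;=\; \sum_{e \in E_1} \max\bigl(0,\; x_e(s) - x_e(s^*)\bigr),
\end{equation*}
and show that whenever $\Phi(s^*) > 0$ there is a swap that preserves both optimality of $s^*$ and strictly decreases $\Phi$.

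Concretely, suppose $\Phi(s^*) > 0$. Pick $e \in E_1$ with $x_e(s^*) < x_e(s)$; since loads sum to $n$, there exists some edge $e' \neq e$ with $x_{e'}(s^*) > x_{e'}(s)$. By the definition of $E_1$, there is an altruistic player $i$ (with $\alpha_i = 1$) using $e$ in $s$. Writing $f_e(x) = x\,d_e(x)$, I would then chain together four inequalities on marginal costs $f_e(x) - f_e(x-1)$:
\begin{itemize}
\item Optimality of $s^*$ applied to moving one player from $e'$ to $e$ gives $f_e(x_e(s^*)+1) - f_e(x_e(s^*)) \geq f_{e'}(x_{e'}(s^*)) - f_{e'}(x_{e'}(s^*)-1)$.
\item Semi-convexity (i.e., convexity of $f_{e'}$) together with $x_{e'}(s^*) \ge x_{e'}(s)+1$ yields $f_{e'}(x_{e'}(s^*)) - f_{e'}(x_{e'}(s^*)-1) \geq f_{e'}(x_{e'}(s)+1) - f_{e'}(x_{e'}(s))$.
\item The Nash condition for the altruistic player $i$ (whose perceived cost is the total social cost) deviating from $e$ to $e'$ gives $f_{e'}(x_{e'}(s)+1) - f_{e'}(x_{e'}(s)) \geq f_e(x_e(s)) - f_e(x_e(s)-1)$.
\item Semi-convexity of $f_e$ with $x_e(s) \ge x_e(s^*)+1$ yields $f_e(x_e(s)) - f_e(x_e(s)-1) \geq f_e(x_e(s^*)+1) - f_e(x_e(s^*))$.
\end{itemize}
The chain begins and ends at the same quantity, so all four inequalities are equalities; in particular the swap that transfers one player from $e'$ to $e$ in $s^*$ leaves the social cost unchanged, producing a new optimal strategy profile.

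It remains to verify that this swap strictly decreases $\Phi$. The contribution of $e$ decreases by exactly one, since $x_e(s) - x_e(s^*) \geq 1$ before the swap. For $e'$ there are two cases: if $e' \notin E_1$, it does not appear in $\Phi$; if $e' \in E_1$, then $x_{e'}(s^*) \ge x_{e'}(s)+1$ before the swap implies $x_{e'}(s^*) - 1 \ge x_{e'}(s)$ afterwards, so its contribution remains zero. Hence $\Phi$ drops by exactly one per swap, and since $\Phi$ is a non-negative integer the process terminates, giving the desired optimal profile.

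The main subtlety I expect is the equality chain — one must invoke semi-convexity on both edges to transfer the Nash inequality (which involves the $s$-loads) into an inequality between the $s^*$-marginal costs that optimality controls, and ensure the endpoints match up so that the swap is cost-neutral rather than cost-increasing. Apart from this, the singleton structure is what makes the swap legal (every player may play every edge), so no additional feasibility checks are needed.
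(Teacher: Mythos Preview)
Your proof is correct and follows essentially the same swap argument as the paper: pick an overloaded edge in $E_1$, find a compensating edge, and use the altruistic Nash condition together with semi-convexity on both edges to show that moving one player in $s^*$ toward $e$ does not increase the social cost. The only cosmetic differences are that the paper chains three inequalities (Nash plus two semi-convexity bounds) to conclude $C(s') \le C(s^*)$ directly rather than closing a four-step loop via local optimality of $s^*$, and it handles termination informally instead of via your explicit potential $\Phi$.
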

\begin{proof}
Let $s^*$ be an optimal strategy profile, let $x_e$ denote $x_e(s)$, let $x_e^*$ denote $x_e(s^*)$, and assume that $x^*_e < x_e$ for
some $e \in E_1$. Then there is some edge $\bar{e} \in E$ with
$x^*_{\bar e} > x_{\bar e}$. Consider an altruistic player $i \in N_1$
with $s_i = \{e\}$. (Note that $i$ must exist by the definition of
$E_1$.) Because $s$ is a pure Nash equilibrium, player $i$ has no
incentive to deviate from $e$ to $\bar{e}$, i.e., $C(\{\bar{e}\},
s_{-i}) \ge C(s)$, or, equivalently,
\begin{equation}\label{eq:qq1}
(x_{\bar e}+1) d_{\bar e}(x_{\bar e}+1) - x_{\bar e} d_{\bar e}(x_{\bar e}) \geq x_{e} d_e(x_{e}) - (x_{e} -1) d_{e}(x_{e}-1) \enspace.
\end{equation}
Since $x^*_e < x_e$ and $x_{\bar e} < x^*_{\bar e}$, the semi-convexity
of the delay functions implies
\begin{eqnarray}
(x^*_e +1) d_e(x^*_e +1) - x^*_e d_e(x^*_e) & \leq & x_{e} d_e(x_{e}) - (x_{e} -1) d_{e}(x_{e}-1) \enspace, \label{eq:qq2} \\
(x_{\bar e}+1) d_{\bar e}(x_{\bar e}+1) - x_{\bar e} d_{\bar e}(x_{\bar e}) & \leq & x^*_{\bar e} d_{\bar e}(x^*_{\bar e}) - (x^*_{\bar e} -1) d_{\bar e}(x^*_{\bar e}-1) \enspace. \label{eq:qq3}
\end{eqnarray}
By combining \eqref{eq:qq1}, \eqref{eq:qq2} and \eqref{eq:qq3} and re-arranging terms, we obtain
\begin{equation*}
(x^*_e +1) d_e(x^*_e +1) + (x^*_{\bar e} -1) d_{\bar e}(x^*_{\bar e}-1) \leq x^*_e d_e(x^*_e) + x^*_{\bar e} d_{\bar e}(x^*_{\bar e}) \enspace .
\end{equation*}
The above inequality implies that by moving a player $j$ with $s^*_j = \{ \bar e\}$ from $\bar e$ to $e$, we obtain a new strategy profile $s' = (\{ e\}, s^*_{-j})$ of cost $C(s') \le C(s^*)$. (Note that $j$ must exist because $x^*_{\bar e} > x_{\bar e} \ge 0$.) Moreover, the number of players on $e$ under the new strategy profile $s'$ increased by one. We can therefore repeat the above argument (with $s'$ in place of $s^*$) until we obtain an optimal strategy profile that satisfies the claim.
\end{proof}

Note that Lemma~\ref{lem:dom} implies that at least for singleton congestion games, entirely altruistic players will ensure that Nash equilibria are optimal.

\begin{corollary}\label{cor:semi-convex}
The pure price of anarchy of $1$-altruistic extensions of symmetric
singleton congestion games with semi-convex delay functions is $1$.
\end{corollary}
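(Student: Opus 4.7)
My plan is to derive the corollary as a direct consequence of Lemma~\ref{lem:dom} combined with load conservation. The crucial observation is that when $\altvec = \mathbf{1}$ (that is, every player is fully altruistic), every edge that carries at least one player under a pure Nash equilibrium $s$ lies in the set $E_1$ defined just before Lemma~\ref{lem:dom}, since any player on such an edge is fully altruistic by assumption. Hence $E_1$ contains the entire support of the load vector of $s$, and
\[
\sum_{e \in E_1} x_e(s) \;=\; \sum_{e \in E} x_e(s) \;=\; n,
\]
because in a singleton game each player occupies exactly one facility.

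Next I would invoke Lemma~\ref{lem:dom} to produce an optimal profile $s^*$ satisfying $x_e(s) \le x_e(s^*)$ for every $e \in E_1$. Summing these inequalities over $E_1$ and comparing with $\sum_{e \in E} x_e(s^*) = n$, the only way to reconcile $\sum_{e \in E_1} x_e(s) = n$ with the coordinatewise bound is that $x_e(s) = x_e(s^*)$ for every $e \in E_1$ and $x_e(s^*) = 0$ for every $e \notin E_1$. Thus $s$ and $s^*$ induce identical load vectors on every edge, so $C(s) = C(s^*)$, and the pure price of anarchy is $1$.

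I do not anticipate any serious obstacle: the corollary is essentially a counting argument once Lemma~\ref{lem:dom} is available, and the genuine use of semi-convexity has already been carried out inside that lemma. The only subtlety worth flagging is that the conclusion must hold for \emph{every} pure Nash equilibrium (not just a carefully chosen one); this is automatic because the coordinatewise dominance supplied by Lemma~\ref{lem:dom} is already strong enough to rule out any strict inequality via the load conservation identity above.
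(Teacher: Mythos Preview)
Your argument is correct and is exactly the reasoning the paper intends: the paper merely remarks that Lemma~\ref{lem:dom} implies the corollary and provides no further details, and what you have written is precisely the load-counting argument that makes this implication explicit. No changes are needed.
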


Henceforth, we assume that $s^*$ is an optimal strategy profile that satisfies the statement of Lemma~\ref{lem:dom}.


\begin{lemma}\label{lem:decomp}
Define $y^*$ as $y^*_e = x^*_e - x_e \ge 0$ for every $e \in E_1$, and $y^*_e = x^*_e$ for all edges $e \in E_0$.
Then, $\sum_{e \in E_0} x_e d_e(x_e) \le \textstyle\frac43
\sum_{e \in E} y^*_e d_e(x^*_e)$.
\end{lemma}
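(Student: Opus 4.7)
My plan is to partition $E_0$ according to whether the Nash load exceeds the optimal load, control the ``overshoot'' edges via a transport argument that invokes the Nash condition for selfish players (exactly the players on $E_0$), and close the remaining gap with one AM-GM inequality. Set $\Delta_e = x_e - x^*_e$ and let $E_0^+ = \{ e \in E_0 : x_e > x^*_e \}$, $E_0^- = \{ e \in E_0 : x_e \le x^*_e \}$. By Lemma~\ref{lem:dom}, $x_e \le x^*_e$ for every $e \in E_1$; combined with $\sum_e x_e = \sum_e x^*_e = n$, this yields the load-conservation identity $\sum_{e \in E_0^+} \Delta_e = \sum_{f \in E_0^- \cup E_1}(x^*_f - x_f)$.

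The main step is the key inequality
\begin{equation*}
\sum_{e \in E_0^+} \Delta_e\, d_e(x_e) \;\le\; \sum_{f \in E_0^- \cup E_1} (x^*_f - x_f)\, d_f(x^*_f).
\end{equation*}
Since the row sums $\Delta_e$ and column sums $x^*_f - x_f$ agree by the identity above, I can choose any nonnegative transport $\tilde t_{e,f}$ (for $e \in E_0^+$, $f \in E_0^- \cup E_1$) realizing those marginals. For every supported pair we have $e \ne f$, and $e$ carries only selfish players, so the Nash condition gives $d_e(x_e) \le d_f(x_f+1)$; moreover the restriction $f \in E_0^- \cup E_1$ with positive column marginal forces $x_f + 1 \le x^*_f$, so monotonicity of the linear delay delivers $d_f(x_f+1) \le d_f(x^*_f)$. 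Multiplying by $\tilde t_{e,f}$ and summing proves the claim.

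Now split $L := \sum_{e \in E_0} x_e d_e(x_e) = \sum_{E_0^-} x_e d_e(x_e) + \sum_{E_0^+} x^*_e d_e(x_e) + \sum_{E_0^+} \Delta_e d_e(x_e)$: bound the first term by $\sum_{E_0^-} x_e d_e(x^*_e)$ via monotonicity, insert the key inequality on the third, and use the identity $x^*_e d_e(x_e) = y^*_e d_e(x^*_e) + x^*_e a_e \Delta_e$ for $e \in E_0^+$. Since $y^*_e = x^*_e$ on $E_0$, the $E_0^-$ contributions telescope into $\sum_{E_0^-} y^*_e d_e(x^*_e)$ while the $E_0^+$ and $E_1$ contributions aggregate to $\sum_{E_0^+ \cup E_1} y^*_e d_e(x^*_e)$, leaving
\begin{equation*}
L \;\le\; R + \sum_{e \in E_0^+} x^*_e a_e \Delta_e, \qquad R := \sum_{e \in E} y^*_e d_e(x^*_e).
\end{equation*}
The AM-GM estimate $x^*_e(x_e - x^*_e) \le \tfrac14 x_e^2$ together with $a_e x_e^2 \le x_e d_e(x_e)$ bounds the extra term by $\tfrac14 \sum_{E_0^+} x_e d_e(x_e) \le \tfrac14 L$, and rearranging $L \le R + \tfrac14 L$ gives $L \le \tfrac43 R$, as required. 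The main obstacle is the transport step: the targets must be restricted to $E_0^- \cup E_1$ precisely so that the slack $x^*_f - x_f$ both makes the marginals match (via load conservation) and guarantees $d_f(x_f+1) \le d_f(x^*_f)$; any attempt to use the simpler transport that matches the full profile $y^*$ breaks down on $E_0^+$ targets, where $x_f+1$ can vastly exceed $x^*_f$.
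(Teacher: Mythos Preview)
Your proof is correct but takes a genuinely different route from the paper. The paper's argument is a two-line reduction: freeze the players in $N_1$ on their edges in $E_1$, observe that the residual game $\bar G$ on the $k_0$ selfish players in $N_0$ is itself a symmetric singleton linear congestion game (with shifted delays $\bar d_e(z)=d_e(x_e+z)$ on $E_1$), note that the restriction $\bar s$ of $s$ is a pure Nash equilibrium of $\bar G$, and invoke Theorem~\ref{thm:poa-sscg} with $\alpha=0$ to compare $\bar s$ against the feasible profile $y^*$. This immediately gives $\sum_{e\in E_0} x_e d_e(x_e)=\sum_e \bar x_e \bar d_e(\bar x_e)\le \tfrac43\sum_e y^*_e \bar d_e(y^*_e)=\tfrac43\sum_e y^*_e d_e(x^*_e)$.

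What you do instead is essentially re-derive the $4/3$ bound from scratch inside the present lemma, via the transport step (which encodes the selfish Nash inequalities $d_e(x_e)\le d_f(x_f+1)$ together with $x_f+1\le x^*_f$ on the targets) and the AM--GM step $x^*_e(x_e-x^*_e)\le \tfrac14 x_e^2$. This is self-contained and makes the mechanism behind the $4/3$ constant explicit, whereas the paper's reduction is shorter, more modular, and cleanly separates the ``freeze the altruists'' idea from the classical PoA analysis. Both arguments ultimately rest on the same two ingredients (selfish Nash conditions and the quadratic AM--GM estimate); the paper just packages them into a black-box call to the known $4/3$ result.
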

\begin{proof}
Consider the game $\bar G$ induced by $G^\altvec$ if all $k_1$ players
in $N_1$ are fixed on the edges in $E_1$ according to $s$. Note that
all remaining $k_0 = n - k_1$ players in $N_0$ are selfish. That is,
$\bar G$ is a symmetric singleton congestion game with player set
$N_0$, edge set $E$ and delay functions $(\bar d_e)_{e \in E}$, where
$\bar{d}_e(z) = d_e(x_e + z)$ if $e \in E_1$ and $\bar d_e(z) =
d_e(z)$ for $e \in E_0$. Let $\bar s$ be the restriction of $s$ to the
players in $N_0$, and define $\bar x$ as $\bar x_e = 0$ for $e \in E_1$ and $\bar x_e = x_e$ for $e \in E_0$.
It is not hard to verify that $\bar s$ is a pure Nash equilibrium of
the game $\bar G$.
Let $\bar{s}^*$ be a socially optimum profile for $\bar{G}$, and
for each edge $e$, let $\bar{x}^*_e$ be the total number of players on
$e$ under $\bar{S}^*$. Then,

\[
\sum_{e \in E_0} x_e d_e(x_e)
= \sum_{e \in E} \bar{x}_e \bar d_e(\bar x_e) 
\leq \frac43 \sum_{e \in E} \bar x^*_e \bar d_e(\bar x^*_e) 
\leq \frac43 \sum_{e \in E} y^*_e \bar{d}_e(y^*_e) 
= \frac43 \sum_{e \in E} y^*_e d_e(x^*_e) \enspace ,
\]
where the first inequality follows from Theorem~\ref{thm:poa-sscg} and the second inequality follows from the optimality of $\bar x^*$.
\end{proof}

\begin{lemma}\label{lem:gamma} It holds that $\gamma \leq \frac{2n_0}{n+n_0}=\frac{2(1-\alpha)}{2-\alpha}$.
\end{lemma}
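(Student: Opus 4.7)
The plan is to translate the bound $\gamma \le 2n_0/(n+n_0)$ into the equivalent inequality $n_1 C_0 \le 2 n_0 C_1$, where $C_0 = \sum_{e \in E_0} x_e d_e(x_e) = \gamma C(s)$ and $C_1 = \sum_{e \in E_1} x_e d_e(x_e) = (1-\gamma)C(s)$, and $n_0, n_1$ denote the number of selfish and altruistic players respectively. I will derive this by combining the selfish Nash conditions on the $E_0$-edges with the simple observation that every edge in $E_1$ carries at least one player (the witnessing altruistic one).

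The key step will be the following averaging argument. Every player $j$ on an edge $e_j \in E_0$ is selfish, so the Nash condition yields $d_{e_j}(x_{e_j}) \le d_e(x_e+1)$ for every edge $e \in E_1$. Multiplying each such inequality by $x_e$ and summing over all $k_0 = \sum_{e \in E_0} x_e$ players on $E_0$ and all edges $e \in E_1$, the left-hand side collapses to $k_1 C_0$, where $k_1 = \sum_{e \in E_1} x_e$, and the right-hand side becomes $k_0 \sum_{e \in E_1} x_e d_e(x_e+1)$. Hence
\begin{equation*}
k_1 C_0 \;\le\; k_0 \sum_{e \in E_1} x_e d_e(x_e+1).
\end{equation*}

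The main technical ingredient, and the place where the definition of $E_1$ is actually used, is to convert $d_e(x_e+1)$ into something comparable to $d_e(x_e)$. For $e \in E_1$ we have $x_e \ge 1$ (because of the witnessing altruistic player), and since the coefficients satisfy $a_e, b_e \ge 0$, this gives $a_e \le a_e x_e + b_e = d_e(x_e)$. Therefore $d_e(x_e+1) = d_e(x_e) + a_e \le 2 d_e(x_e)$, so the right-hand side above is bounded by $2 k_0 C_1$, yielding $k_1 C_0 \le 2 k_0 C_1$. This is the hard part of the argument: without the altruistic players pinning $x_e \ge 1$ on every edge of $E_1$, the comparison between $d_e(x_e+1)$ and $d_e(x_e)$ would lose a constant factor and the bound would no longer match the target.

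To finish, I use the two structural inequalities between the $k$-counts and the $n$-counts. Since every player on an $E_0$-edge is selfish, $k_0 \le n_0$; and since every altruistic player sits on some edge of $E_1$, $k_1 \ge n_1$. Combined, these give $k_0/k_1 \le n_0/n_1$, so
\begin{equation*}
\frac{C_0}{C_1} \;\le\; \frac{2 k_0}{k_1} \;\le\; \frac{2 n_0}{n_1},
\end{equation*}
and then $\gamma = C_0/(C_0+C_1) \le 2n_0/(2n_0 + n_1) = 2n_0/(n+n_0)$, using $n_1 = n - n_0$. The degenerate cases $C(s)=0$ or $C_1 = 0$ are handled trivially, since either $\gamma$ is vacuous or $n_1 = 0$ forces $C_0 = 0$ as well.
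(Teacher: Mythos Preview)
Your argument is correct and is essentially the paper's proof, reorganized. Both proofs hinge on the same two ingredients: the selfish Nash inequality $d_{e'}(x_{e'}) \le d_e(x_e+1)$ for $e' \in E_0$, $e \in E_1$, and the observation that $x_e \ge 1$ on $E_1$ forces $d_e(x_e+1) \le 2 d_e(x_e)$ for linear delays. The paper packages this as ``every player $j \in N_1$ has cost at least half the average cost on $E_0$'', which after summing over $N_1$ yields exactly your inequality $k_1 C_0 \le 2 k_0 C_1$; you reach the same inequality directly via the weighted double sum. From there both arguments pass through $\gamma \le 2k_0/(n+k_0)$ and then use $k_0 \le n_0$ (equivalently $k_1 \ge n_1$).

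One small wording slip: in your last sentence, ``$n_1 = 0$ forces $C_0 = 0$'' is backwards. What you want is: if $n_1 = 0$ the bound reads $\gamma \le 1$ and there is nothing to prove; if $n_1 > 0$ and $C_1 = 0$, then $k_1 \ge n_1 > 0$ together with $k_1 C_0 \le 2k_0 C_1 = 0$ gives $C_0 = 0$, so $C(s)=0$ and $\gamma$ is vacuous. This does not affect the substance of the proof.
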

\begin{proof}
The claim follows directly from Theorem~\ref{thm:poa-sscg} if $N_1 = \emptyset$.
Assume that $N_1 \neq \emptyset$, and let $j \in N_1$ with
$s_j = \{\bar e\}$.
Let $\bar C(s) = \sum_{i \in N_0} C_i(s)/k_0$ be the average
cost experienced by players in $N_0$. We first show $C_j(s) \geq
\frac12 \bar C(s)$. If $N_0 = \emptyset$, then $C_j(s) \geq \frac12
\bar C(s)$ trivially holds. Suppose that $N_0 \neq \emptyset$, and let
$i \in N_0$ with $s_i = \{e\}$. Recall that $i$ is selfish. Because $s$
is a Nash equilibrium, we have
\[ C_i(s) = a_e x_e + b_e \le a_{\bar e} (x_{\bar e} + 1) + b_{\bar e}
\le 2(a_{\bar e} x_{\bar e}  + b_{\bar e}) = 2 C_j(s).
\]
By summing over all $k_0$ selfish players in $N_0$,
we obtain $C_j(s) \ge \frac12 \bar C(s)$ and thus
$\sum_{j\in N_1} C_{j}(s) \ge \frac12 k_1 \bar{C}(S)$.
We have
\[
\gamma = \frac{\sum_{i\in N_0} C_i(s)}{\sum_{i\in N_0} C_i(s) + \sum_{j\in N_1} C_{j}(s)} 
 \leq \frac{k_0\bar{C}(S)}{k_0\bar{C}(S)+\frac12 k_1 \bar{C}(S)} 
= \frac{2k_0}{n+k_0} 
\leq \frac{2 n_0}{n+n_0} \enspace ,
\]
where the last inequality follows because $k_0 \le n_0$.
\end{proof}

\begin{proof}[Proof of Theorem~\ref{thm:0-1bound}]
Using the above lemmas, we can show that the relation in \eqref{eq:comb} holds:
\begin{eqnarray*}
\frac34 \gamma C(s) + (1-\gamma)C(s) & = & \frac34 \sum_{e \in E_0}  x_e d_e(x_e) + \sum_{e \in E_1} x_e d_e(x_e)
 \leq  \sum_{e \in E}  y^*_e d_e(x^*_e) + \sum_{e \in E_1} x_e d_e(x_e) \\
& = & \sum_{e \in E}  x^*_e d_e(x^*_e) + \sum_{e \in E_1} (x_e d_e(x_e) - x_e d_e(x^*_e))
\leq \sum_{e \in E}  x^*_e d_e(x^*_e) = C(s^*) \enspace ,
\end{eqnarray*}
where the first inequality follows from Lemma~\ref{lem:decomp} and the
last inequality follows from Lemma~\ref{lem:dom} and because delay
functions are monotone increasing. We conclude that the pure price of
anarchy is at most
\[
\left(\frac34 \gamma + (1-\gamma)\right)^{-1}
= \frac{4}{4-\gamma}\leq\frac{4-2\alpha}{3-\alpha}.
\]
The bound now follows from Lemma~\ref{lem:gamma}.
\end{proof}

\section{General Properties of Smoothness}\label{sec:general}
For the game classes that we analyzed (with the exception of symmetric singleton congestion
games), we used $(\lambda,\mu,\altvec)$-smoothness as our main tool to
derive bounds on the price of anarchy. In this section, we provide
some general results about $(\lambda, \mu, \altvec)$-smoothness.

\begin{proposition}\label{prop:convex}
Suppose that $\mathcal{G}$ is a class of cost-minimization games
equipped with sum-bounded social cost functions. The set
$S_{\mathcal{G}} = \{(\lambda, \mu, \altvec) : \forall G \in
\mathcal{G} $, $G^\alpha$ is $(\lambda, \mu, \altvec)$-smooth$\}$ is
convex.
\end{proposition}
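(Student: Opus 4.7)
The plan is to exploit the fact that the smoothness inequality \eqref{eq:smoothness} is \emph{linear} in the parameters $(\lambda,\mu,\altvec)$: the left-hand side is an affine (indeed linear) function of $\altvec$, and the right-hand side is linear in $(\lambda,\mu)$. Thus the set of parameters satisfying a fixed instance of the inequality (for fixed $G$, $s$, $s^*$) is a halfspace, and $S_{\mathcal{G}}$ is the intersection of all such halfspaces over $G \in \mathcal{G}$ and pairs $s, s^* \in \Sigma$. Since any intersection of halfspaces is convex, the claim will follow, but I will carry out the calculation directly rather than invoking this abstraction.

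Concretely, I would fix two triples $(\lambda_1,\mu_1,\altvec_1)$ and $(\lambda_2,\mu_2,\altvec_2)$ in $S_{\mathcal{G}}$, fix $t \in [0,1]$, and set $(\lambda,\mu,\altvec) = t(\lambda_1,\mu_1,\altvec_1) + (1-t)(\lambda_2,\mu_2,\altvec_2)$. Given an arbitrary $G \in \mathcal{G}$ with sum-bounded social cost $C$ and any strategy profiles $s, s^* \in \Sigma$, I would expand the left-hand side of \eqref{eq:smoothness} at $\altvec$: since $\alpha_i = t\,\alpha_{1,i} + (1-t)\,\alpha_{2,i}$ for each $i$, one can split each summand as
\begin{align*}
& C_i(s_i^*,s_{-i}) + \alpha_i\bigl(C_{-i}(s_i^*,s_{-i})-C_{-i}(s)\bigr) \\
&\qquad = t\Bigl[C_i(s_i^*,s_{-i}) + \alpha_{1,i}\bigl(C_{-i}(s_i^*,s_{-i})-C_{-i}(s)\bigr)\Bigr] \\
&\qquad\quad + (1-t)\Bigl[C_i(s_i^*,s_{-i}) + \alpha_{2,i}\bigl(C_{-i}(s_i^*,s_{-i})-C_{-i}(s)\bigr)\Bigr].
\end{align*}

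Summing over $i \in N$ and applying the assumed smoothness inequalities for $(\lambda_1,\mu_1,\altvec_1)$ and $(\lambda_2,\mu_2,\altvec_2)$ to the two bracketed sums separately gives an upper bound of $t(\lambda_1 C(s^*)+\mu_1 C(s)) + (1-t)(\lambda_2 C(s^*)+\mu_2 C(s))$, which is exactly $\lambda C(s^*) + \mu C(s)$. Hence the smoothness inequality holds at $(\lambda,\mu,\altvec)$ for every $G \in \mathcal{G}$ and every pair $s,s^*$, so $(\lambda,\mu,\altvec) \in S_{\mathcal{G}}$.

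There is essentially no obstacle here: the only subtle point is making sure the decomposition is applied \emph{coordinate-wise} in $\altvec$ (so that mixing altruism vectors is allowed, not just a scalar parameter), which is immediate because the $\alpha_i$ appears multiplicatively on a quantity that does not depend on $\altvec$. The same argument transfers to altruistic extensions of payoff-maximization games by using inequality \eqref{eq:smooth-max} in place of \eqref{eq:smoothness}.
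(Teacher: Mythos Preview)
Your proof is correct and follows essentially the same approach as the paper: both exploit the linearity of the smoothness inequality in $(\lambda,\mu,\altvec)$ and verify directly that a convex combination of two smooth parameter triples is again smooth. The paper phrases the reduction slightly differently (first reducing to a single game $G$ via ``intersection of convex sets is convex'', then checking $S_G$), but the core computation is identical to yours.
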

\begin{proof}
Pick an arbitrary game $G \in \mathcal{G}$. It suffices to show that $S_G = \{(\lambda, \mu, \altvec) : \text{$G^\alpha$ is $(\lambda, \mu, \altvec)$-smooth}\}$ is convex, because the intersection of any collection of convex sets is always convex.

Let $(\lambda_1, \mu_1, \altvec^1), (\lambda_2, \mu_2, \altvec^2) \in S_G$ be two elements in $S_G$, and pick an arbitrary $\gamma \in [0,1]$. For all pairs $(s, s^*)$ of strategy profiles of $G$,
\begin{eqnarray*}
& & \gamma \sum_{i = 1}^n (C_i(s_i^*, s_{-i}) + \alpha_i^1(C_{-i}(s_i^*, s_{-i}) - C_{-i}(s)))
+\ (1-\gamma)\sum_{i = 1}^n (C_i(s_i^*, s_{-i}) + \alpha_i^2(C_{-i}(s_i^*, s_{-i}) - C_{-i}(s))) \\
 & \leq & \gamma (\lambda_1 C(s^*) + \mu_1 C(s)) + (1 - \gamma)(\lambda_2 C(s^*) + \mu_2 C(s)) \enspace .
\end{eqnarray*}
By rewriting both sides of the above inequality, we obtain
\begin{align*}
& \sum_{i = 1}^n (C_i(s_i^*, s_{-i}) + (\gamma\alpha_i^1 + (1 - \gamma)\alpha_i^2)(C_{-i}(s_i^*, s_{-i}) - C_{-i}(s))) \\
& \qquad \leq (\gamma \lambda_1 + (1 - \gamma) \lambda_2) C(s^*) + (\gamma \mu_1 + (1 - \gamma)\mu_2) C(s) \enspace.
\end{align*}
We conclude that $G$ is $(\gamma(\lambda_1, \mu_1, \altvec^1) + (1 -
\gamma)(\lambda_2, \mu_2, \altvec^2))$-smooth. Therefore, $S_G$ is
convex.
\end{proof}

A natural question to ask is whether the robust price of anarchy is also a
convex function of $\altvec$. This turns out not to be the
case.
For instance, the robust price of anarchy for uniformly
$\alpha$-altruistic congestion games is
$\frac{5 + 4\alpha}{2 + \alpha}$ (see Section~\ref{sec:cg}),
which is a non-convex function. However, we can
prove a somewhat weaker statement: For a subset $S \subseteq
\mathbb{R}^n$, we call a function $f : S \rightarrow \mathbb{R}$
\emph{quasi-convex} iff
$f(\gamma x + (1 - \gamma) y) \leq \max\{f(x), f(y)\}$
for all $\gamma \in [0,1]$.

\begin{theorem}\label{thm:rpoa-quasiconvex}
Let $\mathcal{G}$ be a class of games equipped with sum-bounded social cost functions. Then $\RPoA_{\mathcal{G}}(\altvec)$ is a quasi-convex function of $\altvec$.
\end{theorem}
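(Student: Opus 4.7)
The plan is to reduce the claim about $\RPoA_{\mathcal{G}}$ to the single-game statement that $\RPoA_G(\altvec)$ is quasi-convex, and then use Proposition~\ref{prop:convex} together with a standard manipulation of the ratio $\lambda/(1-\mu)$.

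First, I would observe that a pointwise supremum of quasi-convex functions is quasi-convex: the sublevel set $\{\altvec : \sup_{G \in \mathcal{G}} \RPoA_G(\altvec) \le c\}$ equals the intersection $\bigcap_{G \in \mathcal{G}}\{\altvec : \RPoA_G(\altvec) \le c\}$, which is convex as soon as each factor is. Hence it suffices to prove that, for every fixed $G \in \mathcal{G}$, the function $\altvec \mapsto \RPoA_G(\altvec)$ is quasi-convex.

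Second, fix $\altvec^1, \altvec^2 \in [0,1]^n$ and $\gamma \in [0,1]$, and write $r = \max\{\RPoA_G(\altvec^1), \RPoA_G(\altvec^2)\}$. Given $\varepsilon > 0$, pick pairs $(\lambda_1, \mu_1)$ and $(\lambda_2, \mu_2)$ with $\mu_j < 1$ such that $G^{\altvec^j}$ is $(\lambda_j, \mu_j, \altvec^j)$-smooth and $\lambda_j/(1-\mu_j) \le r + \varepsilon$ for $j = 1, 2$; these exist by the definition of the infimum in Definition~\ref{def:rpoa}. Equivalently, $\lambda_j \le (r + \varepsilon)(1 - \mu_j)$.

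Third, I would apply Proposition~\ref{prop:convex}: the $\altvec$-coordinate of the smoothness set is closed under convex combinations, so $G^{\gamma \altvec^1 + (1-\gamma)\altvec^2}$ is $(\gamma \lambda_1 + (1-\gamma)\lambda_2,\ \gamma \mu_1 + (1-\gamma)\mu_2,\ \gamma \altvec^1 + (1-\gamma)\altvec^2)$-smooth. Since $\mu_1, \mu_2 < 1$, the combined $\mu$-parameter $\gamma \mu_1 + (1-\gamma)\mu_2$ is also strictly less than $1$, so this pair is admissible in the infimum defining $\RPoA_G$. Now the key one-line computation is
\begin{equation*}
\frac{\gamma \lambda_1 + (1-\gamma)\lambda_2}{1 - \gamma \mu_1 - (1-\gamma)\mu_2} \le \frac{(r+\varepsilon)\bigl(\gamma(1-\mu_1) + (1-\gamma)(1-\mu_2)\bigr)}{\gamma(1-\mu_1) + (1-\gamma)(1-\mu_2)} = r + \varepsilon,
\end{equation*}
using the inequalities $\lambda_j \le (r+\varepsilon)(1-\mu_j)$ in the numerator. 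Hence $\RPoA_G(\gamma \altvec^1 + (1-\gamma)\altvec^2) \le r + \varepsilon$, and letting $\varepsilon \downarrow 0$ gives $\RPoA_G(\gamma \altvec^1 + (1-\gamma)\altvec^2) \le r$, as desired.

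There is no real obstacle beyond the book-keeping; the only subtlety is making sure the combined $\mu$-parameter stays below $1$ (which is automatic from $\mu_1, \mu_2 < 1$) and choosing the right near-optimizers so that a \emph{common} bound $r + \varepsilon$ can be pulled out of both numerators. The use of the infimum (rather than a minimum) in Definition~\ref{def:rpoa} is handled cleanly by the $\varepsilon$-argument and the final limit.
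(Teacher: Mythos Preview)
Your proof is correct and follows essentially the same approach as the paper: use Proposition~\ref{prop:convex} on near-optimal smoothness witnesses for $\altvec^1$ and $\altvec^2$, then take a limit. Your presentation is in fact a bit cleaner --- you make the reduction from $\RPoA_{\mathcal{G}}$ to $\RPoA_G$ explicit via sublevel sets, and your direct bound on the ratio $\frac{\gamma\lambda_1 + (1-\gamma)\lambda_2}{1 - \gamma\mu_1 - (1-\gamma)\mu_2}$ (effectively the mediant inequality) is more transparent than the paper's passage through the $\max$ of the two linear bounds.
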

\begin{proof}
Let $G \in \mathcal{G}$. We show that for any $\altvec^1, \altvec^2 \in \mathbb{R}^n$ and $\gamma \in [0,1]$,
\begin{equation*}
\RPoA(\gamma\altvec^1 + (1 - \gamma)\altvec^2) \leq \max\{\RPoA(\altvec^1), \RPoA(\altvec^2)\} \enspace .
\end{equation*}
Let $(\epsilon_1, \epsilon_2, \ldots)$ be a decreasing sequence of positive real numbers that tends to $0$. Moreover, let
\begin{equation*}
((\lambda_{1,1}, \mu_{1,1}, \altvec^1), (\lambda_{1,2}, \mu_{1,2}, \altvec^1), \ldots) \quad\text{and}\quad ((\lambda_{2,1}, \mu_{2,1}, \altvec^2), (\lambda_{2,2}, \mu_{2,2}, \altvec^2), \ldots)
\end{equation*}
be sequences of elements in $S_G$ (where $S_G$ is as defined in the proof of Proposition \ref{prop:convex}) such that
\begin{equation*}
\RPoA(\altvec^1) + \epsilon_j = \textstyle\frac{\lambda_{1,j}}{1 -
  \mu_{1,j}} \quad \text{and} \quad \RPoA(\altvec^2) +
\epsilon_j = \textstyle\frac{\lambda_{2,j}}{1 - \mu_{2,j}}
\end{equation*}
for all $j$. By Proposition \ref{prop:convex}, we know that for all $j$,
\begin{align*}
 & \sum_{i = 1}^n (C_i(s_i^*, s_{-i}) + (\gamma\alpha_i^1 + (1 - \gamma)\alpha_i^2)(C_{-i}(s_i^*, s_{-i}) - C_{-i}(s))) \\
& \qquad \leq \gamma (\lambda_{1,j} C(s^*) + \mu_{1,j} C(s)) + (1 - \gamma)(\lambda_{2,j} C(s^*) + \mu_{2,j} C(s)) \\
 & \qquad \leq \max\{\lambda_{1,j} C(s^*) + \mu_{1,j} C(s), \lambda_{2,j} C(s^*) + \mu_{2,j} C(s)\} \enspace .
\end{align*}
Hence,
\[
\RPoA(\gamma \altvec^1 + (1 - \gamma) \altvec^2) \leq \max\left\{\frac{\lambda_{1,j}}{1 - \mu_{1,j}} , \frac{\lambda_{2,j}}{1 - \mu_{2,j}} \right\}
 \leq \max\{\RPoA(\altvec^1), \RPoA(\altvec^2)\} + \epsilon_j \enspace ,
\]
for all $j$. By taking the limit as $j$ goes to infinity, we conclude
$\RPoA(\gamma\altvec^1 + (1 - \gamma)\altvec^2) \leq \max\{\RPoA(\altvec^1), \RPoA(\altvec^2)\}$, which proves
the claim.
\end{proof}

The quasi-convexity of $\RPoA_{\mathcal{G}}$ implies:
\begin{corollary}
The points $\altvec$ that minimize $\RPoA_{\mathcal{G}}(\altvec)$ on
the domain $[0,1]^n$ form a convex set. The set of points $\altvec$ that maximize
$\RPoA_{\mathcal{G}}(\altvec)$ on the domain $[0,1]^n$ includes at
least one point that is a 0-1 vector.
\end{corollary}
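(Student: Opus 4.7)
The plan is to derive both parts as standard consequences of quasi-convexity, using only Theorem~\ref{thm:rpoa-quasiconvex} (which says $\RPoA_{\mathcal{G}}$ is quasi-convex on $[0,1]^n$) and the fact that $[0,1]^n$ is the convex hull of its $2^n$ binary vertices. Let me abbreviate $f = \RPoA_{\mathcal{G}}$.

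For the first statement, I would note that the set of minimizers on $[0,1]^n$ is precisely the sublevel set $L = \{\altvec \in [0,1]^n : f(\altvec) \le m\}$, where $m = \inf_{\altvec \in [0,1]^n} f(\altvec)$. The standard one-line argument shows that any sublevel set of a quasi-convex function is convex: if $\altvec^1, \altvec^2 \in L$ and $\gamma \in [0,1]$, then by quasi-convexity $f(\gamma \altvec^1 + (1-\gamma)\altvec^2) \le \max\{f(\altvec^1), f(\altvec^2)\} \le m$, so $\gamma \altvec^1 + (1-\gamma)\altvec^2 \in L$. Since $[0,1]^n$ is convex, $L$ is a convex subset of $[0,1]^n$.

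For the second statement, I would first promote quasi-convexity from pairs to arbitrary finite convex combinations by induction on the number of terms. Assuming $f(\sum_{i=1}^{k-1} \lambda_i' \altvec^i) \le \max_{i \le k-1} f(\altvec^i)$ whenever $\sum \lambda_i' = 1$, a convex combination $\sum_{i=1}^k \lambda_i \altvec^i$ with $\lambda_k < 1$ can be rewritten as $(1-\lambda_k) \bigl(\sum_{i=1}^{k-1} \tfrac{\lambda_i}{1-\lambda_k} \altvec^i\bigr) + \lambda_k \altvec^k$, and two applications of quasi-convexity give the induction step. Next, every $\altvec \in [0,1]^n$ can be written as a convex combination of the $2^n$ vertices of the cube (the $0$-$1$ vectors), so $f(\altvec) \le \max_{v \in \{0,1\}^n} f(v)$. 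Because $\{0,1\}^n$ is finite, the maximum on the right is attained at some $v^* \in \{0,1\}^n$, and the previous inequality then shows $f(v^*) = \sup_{\altvec \in [0,1]^n} f(\altvec)$, so $v^*$ is a maximizer.

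There is no real obstacle here; the only point worth stating explicitly in the writeup is that we do not claim every maximizer is a $0$-$1$ vector (which would require strict quasi-convexity), only that at least one maximizer of this form exists. Correspondingly, for the minimizer set we only claim convexity, not that the minimum is attained at an interior point; the statement is purely about the shape of the optimal sets, which is exactly what quasi-convexity of $f$ on the convex polytope $[0,1]^n$ delivers.
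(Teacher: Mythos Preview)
Your proposal is correct and follows exactly the intended route: the paper states this corollary without proof, presenting it as an immediate consequence of Theorem~\ref{thm:rpoa-quasiconvex}, and you have supplied precisely the standard quasi-convexity arguments (sublevel sets are convex; the maximum over a polytope is attained at a vertex) that the paper leaves implicit.
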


\section{Conclusions and Future Work}\label{sec:discussion}

One might not expect that there are games in which the price of anarchy is greater than $1$ when $\altvec = \mathbf{1}$. This phenomenon is a lot less surprising when approached from a local search point-of-view, as this is only equivalent to saying that there exist local optima in the objective function $C$ with respect to the neighborhood set obtained by taking all strategies obtained by single-player deviations from a given strategy profile $s$. Nevertheless, it still seems to us rather surprising that the price of anarchy can get worse when the altruism level $\altvec$ gets closer to $\mathbf{1}$. This phenomenon has been observed before, in \cite{caragiannis}.
The fact that the price of anarchy does not \textit{necessarily} get worse in all cases is exemplified by our analysis of the pure price of anarchy in symmetric singleton congestion games.

The most immediate future directions include analyzing singleton
congestion games with more general delay functions than linear ones.
While the price of anarchy of such functions increases (e.g., the price of anarchy for
polynomials increases exponentially in the degree \cite{AAE05,christodoulou}),
this also creates room for potentially larger reductions due to altruism.
Similarly, the characterization of the robust price of anarchy of altruistic congestion
games with more general delay functions (e.g., polynomials) is left for future work.

For games where the smoothness argument cannot give tight
bounds, would a refined smoothness argument like local smoothness in
\cite{roughgarden3} work? For symmetric singleton congestion games,
this seems unlikely, as the price of anarchy bounds are already different between
pure and mixed Nash equilibria.
It is also worth trying to apply the smoothness argument
or its refinements to analyze the price of anarchy for other dynamics in other classes
of altruistic games, for example, (altruistic) network vaccination
games \cite{chen:david}, which are known to not always possess pure
Nash equilibria, or to find examples to see why smoothness-based arguments do not work.

We have seen that the impact of altruism depends on the underlying game.
It would be nice to identify general properties that enable to predict whether a given game suffers from altruism or not. What is it that makes valid utility game invariant to altruism?
Furthermore, what kind of ``transformations" (not just altruistic extensions) might be applied to a strategic game such that the smoothness approach can still be adapted to give (tight) bounds?
More generally, while the existence of pure Nash equilibria has been shown
for singleton and matroid congestion games with player-specific
latency functions \cite{ackermann,milchtaich}, the price of anarchy (for pure Nash
equilibria or more general equilibrium concepts) has not yet been addressed.
Studying the price of anarchy in such a general setting (in which our setting with
altruism can be embedded) by either smoothness-based techniques or
other methods is undoubtedly intriguing.

\newpage

\bibliographystyle{plain}
\bibliography{smoothness-full}

\begin{thebibliography}{10}

\bibitem{ackermann}
H.~Ackermann, H.~R\"{o}glin, and B.~V\"{o}cking.
\newblock Pure {N}ash equilibria in player-specific and weighted congestion
  games.
\newblock In {\em Proc.~2th Workshop on Internet and Network Economics)}, 2006.

\bibitem{anshelevich}
E.~Anshelevich, A.~Dasgupta, J.~Kleinberg, E.~Tardos, T.~Wexler, and
  T.~Roughgarden.
\newblock The price of stability for network design with fair cost allocation.
\newblock In {\em Proc.~45th Symposium on Foundations of Computer Science},
  2004.

\bibitem{aumann}
R.~J. Aumann.
\newblock Subjectivity and correlation in randomized strategies.
\newblock {\em J.~Mathematical Economics}, 1(1):67--96, 1974.

\bibitem{AAE05}
B.~Awerbuch, Y.~Azar, and A.~Epstein.
\newblock Large the price of routing unsplittable flow.
\newblock In {\em Proc.~37th Annual ACM Symposium on Theory of Computing},
  pages 57--66, 2005.

\bibitem{babaioff:kleinberg:papadimitriou:malicious}
M.~Babaioff, R.~Kleinberg, and C.~H. Papadimitriou.
\newblock Congestion games with malicious players.
\newblock {\em Games and Economic Behavior}, 67(1):22--35, 2009.

\bibitem{bergstrom:benevolent}
T.~Bergstrom.
\newblock Systems of benevolent utility functions.
\newblock {\em J.~Public Economic Theory}, 1(1):71--100, 1999.

\bibitem{blum:even-dar}
A.~Blum, E.~Even-Dar, and K.~Ligett.
\newblock Routing without regret: On convergence to {N}ash equilibria of
  regret-minimizing algorithms in routing games.
\newblock In {\em Proc.~25th Annual ACM Symposium on Principles of Distributed
  Computing}, pages 45--52, 2006.

\bibitem{blum}
A.~Blum, M.~T. Hajiaghayi, K.~Ligett, and A.~Roth.
\newblock Regret minimization and the price of total anarchy.
\newblock In {\em Proc.~40th Annual ACM Symposium on Theory of Computing},
  pages 373--382, 2008.

\bibitem{bradonjic}
M.~Bradonjic, G.~Ercal-Ozkaya, A.~Meyerson, and A.~Roytman.
\newblock On the price of mediation.
\newblock In {\em Proc.~11th ACM Conference on Electronic Commerce}, 2009.

\bibitem{BSS07}
F.~Brandt, T.~Sandholm, and Y.~Shoham.
\newblock Spiteful bidding in sealed-bid auctions.
\newblock In {\em Proc.~20th International Joint Conference on Artifical
  intelligence}, pages 1207--1214, 2007.

\bibitem{caragiannis}
I.~Caragiannis, C.~Kaklamanis, P.~Kanellopoulos, M.~Kyropoulou, and
  E.~Papaioannou.
\newblock The impact of altruism on the efficiency of atomic congestion games.
\newblock In {\em Proc.~5th Symposium on Trustworthy Global Computing}, 2010.

\bibitem{chen:david}
P.-A. Chen, M.~David, and D.~Kempe.
\newblock Better vaccination strategies for better people.
\newblock In {\em Proc.~12th ACM Conference on Electronic Commerce}, 2010.

\bibitem{chenkempe}
P.-A. Chen and D.~Kempe.
\newblock Altruism, selfishness, and spite in traffic routing.
\newblock In {\em Proc.~10th ACM Conference on Electronic Commerce}, pages
  140--149, 2008.

\bibitem{christodoulou}
G.~Christodoulou and E.~Koutsoupias.
\newblock The price of anarchy of finite congestion games.
\newblock In {\em In Proc.~37th Annual ACM Symposium on Theory of Computing},
  2005.

\bibitem{networkdesign}
J.~Elias, F.~Martignon, K.~Avrachenkov, and G.~Neglia.
\newblock Socially-aware network design games.
\newblock In {\em Proc.~INFOCOM 2010}, pages 41--45, 2010.

\bibitem{geanakoplos}
J.~Geanakoplos, D.~Pearce, and E.~Stacchetti.
\newblock Psychological games and sequential rationality.
\newblock {\em Games and Economic Behavior}, 1(1):60--79, March 1989.

\bibitem{gintis}
H.~Gintis, S.~Bowles, R.~Boyd, and E.~Fehr.
\newblock {\em Moral Sentiments and Material Interests : The Foundations of
  Cooperation in Economic Life}.
\newblock MIT Press, 2005.

\bibitem{hannan}
J.~Hannan.
\newblock Approximation to bayes risk in repeated plays.
\newblock {\em Contributions to the Theory of Games}, 3:97--193, 1957.

\bibitem{hayrapetyan:tardos:wexler:collusion}
A.~Hayrapetyan, E.~Tardos, and T.~Wexler.
\newblock The effect of collusion in congestion games.
\newblock In {\em Proc.~38th Annual ACM Symposium on Theory of Computing},
  pages 89--98, 2006.

\bibitem{skopalik}
M.~Hoefer and A.~Skopalik.
\newblock Altruism in atomic congestion games.
\newblock In {\em Proc.~17th European Symposium on Algorithms}, 2009.

\bibitem{KV07}
G.~Karakostas and A.~Viglas.
\newblock Equilibria for networks with malicious users.
\newblock {\em Math. Program.}, 110:591--613, 2007.

\bibitem{koutsoupias2}
E.~Koutsoupias and C.~Papadimitriou.
\newblock Worst-case equilibria.
\newblock In {\em Proc.~16th Annual Symposium on Theoretical Aspects of
  Computer Science}, pages 404--413, 4--6 1999.

\bibitem{ledyard}
John Ledyard.
\newblock Public goods: A survey of experimental resesarch.
\newblock In J.~Kagel and A.~Roth, editors, {\em Handbook of Experimental
  Economics}, pages 111--194. Princeton University Press, 1997.

\bibitem{levine}
D.~K. Levine.
\newblock Modeling altruism and spitefulness in experiments.
\newblock {\em Review of Economic Dynamics}, 1(3):593--622, July 1998.

\bibitem{luecking}
T.~L\"{u}cking, M.~Mavronicolas, B.~Monien, and M.~Rode.
\newblock A new model for selfish routing.
\newblock {\em Theoretical Computer Science}, 406:187--206, October 2008.

\bibitem{milchtaich}
I.~Milchtaich.
\newblock Congestion games with player-specific payoff functions.
\newblock {\em Games and Economic Behavior}, 13(1):111--124, 1996.

\bibitem{MSW06}
T.~Moscibroda, S.~Schmid, and R.~Wattenhofer.
\newblock When selfish meets evil: {B}yzantine players in a virus inoculation
  game.
\newblock In {\em Proc.~25th Annual ACM Symposium on Principles of Distributed
  Computing}, pages 35--44, 2006.

\bibitem{roughgarden2}
U.~Nadav and T.~Roughgarden.
\newblock The limits of smoothness: A primal-dual framework for price of
  anarchy bounds.
\newblock In {\em Proc.~6th Workshop on Internet and Network Economics}, pages
  319--326, 2010.

\bibitem{nisan}
N.~Nisan, T.~Roughgarden, \'{E}. Tardos, and V.~V. Vazirani, editors.
\newblock {\em Algorithmic Game Theory}.
\newblock Cambridge University Press, 2007.

\bibitem{rabin}
M.~Rabin.
\newblock Incorporating fairness into game theory and economics.
\newblock {\em American Economic Review}, 83(5):1281--1302, December 1993.

\bibitem{Roth08}
A.~Roth.
\newblock The price of malice in linear congestion games.
\newblock In {\em Proc.~4th Workshop on Internet and Network Economics}, pages
  118--125, 2008.

\bibitem{roughgarden4}
T.~Roughgarden.
\newblock {\em Selfish Routing and the Price of Anarchy}.
\newblock MIT Press, 2005.

\bibitem{roughgarden}
T.~Roughgarden.
\newblock Intrinsic robustness of the price of anarchy.
\newblock In {\em Proc.~41st Annual ACM Symposium on Theory of Computing},
  pages 513--522, 2009.

\bibitem{roughgarden3}
T.~Roughgarden and F.~Schoppmann.
\newblock Local smoothness and the price of anarchy in atomic splittable
  congestion games.
\newblock In {\em Proc.~22nd ACM-SIAM Symposium on Discrete Algorithms}, 2011.

\bibitem{vetta}
A.~Vetta.
\newblock {N}ash equilibria in competitive societies, with applications to
  facility location, traffic routing and auctions.
\newblock In {\em Proc.~43rd Symposium on Foundations of Computer Science},
  2002.

\bibitem{young}
H.~P. Young.
\newblock {\em Strategic Learning and its Limits}.
\newblock Oxford University Press, 1995.

\end{thebibliography}

\end{document}